\def\01{\{0,1\}}
\newtheorem{theorem}{Theorem}[section]
\newtheorem{corollary}{Corollary}[theorem]
\newtheorem{definition}{Definition}[section]
\newtheorem{fact}{Fact}[theorem]
\newtheorem{lemma}[theorem]{Lemma}
\newcommand{\E}{\mathbb{E}}
\def\01{\{0,1\}}
\newcommand{\eps}{\varepsilon}
\newcommand{\ketbra}[2]{|#1\rangle\langle#2|}
\newcommand{\Oh}{\ensuremath{\mathcal{O}}}
\newcommand{\id}{\ensuremath{\mathbb{I}}}
\DeclareMathOperator{\polylog}{polylog}
\newcommand{\Tr}{\mathsf{Tr}}
\newcommand{\LOCC}{\mathsf{LOCC}}
\newcommand{\DIPE}{\textsf{DIPE}}
\newcommand{\GDIPE}{\textsf{GDIPE}}
\newcommand{\YES}{\textsf{YES}}
\newcommand{\Var}{\textsf{Var}}
\newcommand{\NO}{\textsf{NO}}
\DeclareMathOperator{\swap}{SWAP}
\begin{document}

\title{Distributed inner product estimation\\ with limited quantum communication}
\author{
Srinivasan Arunachalam\\[2mm]
IBM Quantum\\
\small Almaden Research Center\\
\small \texttt{Srinivasan.Arunachalam@ibm.com}
\and
Louis Schatzki\\[2mm]
\small   Electrical and Computer Engineering\\
University of Illinois\\
\small \texttt{louisms2@illinois.edu}
}

\maketitle
\begin{abstract}
    We consider the task of distributed inner product estimation when allowed limited quantum communication.  Here, Alice and Bob are given $k$ copies of an unknown $n$-qubit quantum states $\ket{\psi},\ket{\phi}$ respectively. They are allowed to communicate $q$ qubits and unlimited classical communication, and their goal is to estimate $|\langle \psi|\phi\rangle|^2$ up to constant accuracy. We show that $k=\Theta(\sqrt{2^{n-q}})$ copies are essentially necessary and sufficient for this task (extending the work of Anshu, Landau and Liu (STOC'22) who considered the case when $q=0$). Additionally, we consider estimating $|\langle \psi|M|\phi\rangle|^2$, for arbitrary Hermitian $M$. For this task we show that certain norms on $M$ \emph{characterize} the sample complexity of estimating $|\langle \psi|M|\phi\rangle|^2$ when using only classical~communication.
\end{abstract}

\section{Introduction}
The seminal work of Buhrman, Cleve, Watrous and Wolf~\cite{buhrman2001quantum} on \emph{quantum fingerprinting} introduced the so-called \emph{swap test}. This subroutine takes as input one copy of unknown quantum states $\ket{\psi}$ and $\ket{\phi}$ and outputs a bit whose bias equals $|\langle \psi|\phi\rangle|^2$; allowing us to estimate the \emph{inner product} between two unknown quantum states without knowing anything about the states themselves. This subroutine has found applications in several areas in quantum computing from complexity theory, learning theory, entanglement theory, to optimization. Although so widely used, there are fundamental questions about estimating inner products, which is the topic of this~work. 

We consider a natural \emph{distributed} variant of the inner product estimation problem: suppose there are two spatially separated parties, Alice and Bob, who are each given copies of an unknown $d$-dimensional quantum state, $\ket{\psi}$ and $\ket{\phi}$ respectively, with the goal of computing $|\langle \psi|M|\phi\rangle|^2$ for some Hermitian operator $M$ using as few copies as possible. We call this the \emph{generalized distributed inner product estimation} problem ($\GDIPE$). The setting of $M=\id$ in $\GDIPE$ (which we refer to as \emph{distributed inner product estimation} $\DIPE$) is one of the important steps in cross-platform verification proposed in~\cite{elben2020cross}, where the goal is to test if two quantum computers (say built on different platforms) are preparing the same quantum state; a fundamental problem for near-term devices. A natural constraint in these papers is how the quantum computers communicate with one another, and in these works they consider either only classical communication or limited quantum communication (the latter being the focus of our work). Apart from verification, on a theoretical level, understanding the sample complexity of a task as fundamental as inner product estimation in a distributed manner is a natural~question. 

There are two known techniques to solve $\GDIPE$ under different settings: $(i)$ If allowed quantum communication, then Alice can  send a few copies of the $d$-dimensional quantum state over to Bob, who then performs a variant of the swap test to estimate $|\langle \psi|M|\phi\rangle|^2$; $(ii)$ if allowed only classical communication, i.e., Alice and Bob can only send classical messages, then a recent work of Anshu et al.~\cite{anshu2022distributed} showed that $\Theta(\sqrt{d})$ copies of $\ket{\psi},\ket{\phi}$ are necessary and sufficient in order to estimate $|\langle \psi|\phi\rangle|^2$. This setting is often referred to as \emph{local operations and classical communication} ($\LOCC$) and has received a lot of attention in quantum information theory~\cite{chitambar2014everything,bennett1999quantum,fan2004distinguishability}. Although $(ii)$ is surprising in that, the sample complexity is quadratically better than full-state tomography, the exponential  nature of the complexity (since we typically work with $d=2^n$, where $n$ is the number of qubits) seems rather large in practice. Their work raises two natural questions, which will be the focus here:
\begin{enumerate}
    \item  With the advent of near-term quantum devices, sending the full-quantum state (as in the protocol $(i)$ above) might be hard, but it is plausible that one could send a \emph{few} qubits of quantum message. If this were possible, then what is the complexity of $\DIPE$ when allowed $q$-qubit messages? 
    \item The work of Anshu et al.~\cite{anshu2022distributed} showed how to solve $\GDIPE$ when $M=\id$, but other natural properties of quantum states are captured by letting $M$ be an arbitrary operator, for example,  two-point correlation functions, entanglement entropy of
small subsystems,  expectation values of observables, projector onto a subspace or perhaps a Pauli string; and in these case $M$ need not be $\id$. In this case, understanding the  complexity of $\GDIPE$ for arbitrary $M$~is~relevant.
\end{enumerate}
These questions were explicitly raised by Anshu et al.~\cite{anshu2022distributed} (and as open question 21 in~\cite{anshu2024survey}) and in this work we answer both.

\subsection{Results}
Our first result considers the setting where Alice and Bob may communicate a few qubits in an interactive protocol but not necessarily enough to teleport copies of their states. For simplicity assume that Alice and Bob each have $n$-qubit states and can transmit in total $q$ qubits of communication. As we mentioned above, the sample complexity when $q=0$ was shown to be $\Theta\left(\sqrt{2^n}\right)$ by~\cite{anshu2022distributed} and when $q=n$, the sample complexity is $O(1)$. Is there some saving in sample complexity when allowed $q$ qubits of communication? Our first result shows a smooth interpolation between both these settings.

\begin{theorem}[Informal]
    Suppose Alice and Bob are given $k$ copies of $n$-qubit states $\ket{\psi}$ and $\ket{\phi}$ respectively and can communication $\Theta(q)$ qubits along with performing arbitrary $\LOCC$. Then it is necessary and sufficient to obtain
$     k=\Theta\left( \sqrt{2^{n-q}} \right)
 $   copies of their states to estimate $\vert \langle \phi \vert \psi \rangle \vert^2$ up to constant accuracy with high~probability.\footnote{We will state the constants more clearly when proving this theorem.}
\end{theorem}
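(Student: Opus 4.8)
The statement has two halves: sufficiency (an explicit protocol with $k=O(\sqrt{2^{n-q}})$ copies and $\Theta(q)$ qubits of communication) and necessity (a matching lower bound). The plan is to treat them separately, using the known $q=0$ result of Anshu et al.~\cite{anshu2022distributed} as the base case and interpolating towards the trivial $q=n$ case (teleport one copy, run a swap test).

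\textbf{Sufficiency.} Split each party's $n$-qubit register into a $q$-qubit block $C$ and an $(n-q)$-qubit block $D$. Using shared classical randomness, Alice and Bob draw a unitary $U$ from an approximate unitary $4$-design on $n$ qubits and apply it to all of their copies. Alice measures the $D$-block of each copy in the computational basis, getting outcomes $a_1,\dots,a_k\in\01^{n-q}$ and a $q$-qubit residual state on $C$ in each copy; she sends the $a_i$ over the classical channel. Bob does the same with $\ket\phi$, getting $b_1,\dots,b_k$ and his own residuals. Bob then forms a matching among the colliding pairs $\{(i,j): a_i=b_j\}$; for each matched pair he has Alice send the corresponding $q$-qubit residual over the quantum channel (one $q$-qubit block per matched pair --- this is where the $\Theta(q)$ qubits are spent) and runs a swap test between Alice's residual and his own, recording its sign-rescaled outcome $\tilde c_{ij}\in\{-1,1\}$. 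The output is an affine rescaling $\widehat\theta = \tfrac1{B}\big(\tfrac1{k^2}\sum_{i,j}\mathbf 1[a_i=b_j]\,\tilde c_{ij}\big)-\tfrac{A}{B}$ of the reweighted collision count, where $B=\Theta(2^{q-n})$ and $A=\Theta(2^{-n})$ are the exact $2$-design constants. (For $q=0$ the residuals are trivial, a match is a full collision, $\tilde c_{ij}\equiv1$, and this reduces to the Anshu et al.\ estimator; for $q=n$ there is no $D$-block and it is a plain average of swap tests.)

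\textbf{Analysis.} Two things must be checked. First, \emph{correctness}: a short Haar (or $2$-design) moment computation shows $\E[\mathbf 1[a_i=b_j]\,\tilde c_{ij}] = A+B\,|\langle\psi|\phi\rangle|^2$ exactly, so $\E[\widehat\theta]=|\langle\psi|\phi\rangle|^2$, up to a lower-order correction coming from the fact that one uses a matching rather than all colliding pairs (harmless in the regime $k=\Theta(\sqrt{2^{n-q}})$, where the collision graph is sparse and almost every colliding pair is matched) and a negligible bias from using an approximate design. Second, \emph{sample complexity}: a fixed pair collides on the $(n-q)$-bit block $D$ with probability $\Theta(2^{-(n-q)})$, so a matching of expected size $\Theta(1/\varepsilon^2)$ --- and hence $\Theta(q/\varepsilon^2)=\Theta(q)$ teleported qubits --- requires $k=\Theta(\sqrt{2^{n-q}}/\varepsilon)$ copies; and a variance bound shows this suffices. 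Concretely, the per-pair term has second moment $\Theta(2^{-(n-q)})$, so after dividing by $B^2$ the diagonal contribution to $\Var(\widehat\theta)$ is $\Theta(2^{n-q}/k^2)$; the off-diagonal covariances are of the form $\Var_U(\cdot)$ of a sum over $2^{n-q}$ near-independent projector-overlap terms and are controlled using the $4$-design property, exactly as in~\cite{anshu2022distributed}. I expect this variance bound --- in particular the bookkeeping for the matching, since colliding pairs sharing a copy are correlated even after conditioning on $U$ --- to be the most tedious part of this direction.

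\textbf{Necessity, and the main obstacle.} First, by teleportation, any protocol sending $q$ qubits with free classical communication can be simulated by one that pre-shares $q$ EPR pairs and communicates only classically; so it suffices to lower bound such $\LOCC$-with-$q$-ebits protocols. Next, take the hard instances of~\cite{anshu2022distributed}: $\ket\psi$ Haar-random on $n$ qubits, and either $\ket\phi=\ket\psi$ (so $|\langle\psi|\phi\rangle|^2=1$) or $\ket\phi$ an independent Haar-random state (so $|\langle\psi|\phi\rangle|^2=\Theta(2^{-n})$); a constant-accuracy estimate distinguishes these. One then bounds the distinguishing advantage, adapting the Anshu et al.\ argument, whose new ingredient here is to account for the $q$ EPR pairs: conditioned on the classical transcript, the joint state held by the two parties is no longer a product, but its Schmidt number is at most $2^q$, since $\LOCC$ cannot raise the Schmidt number of the initial $q$ EPR pairs --- equivalently it is a probabilistic mixture of states supported on $2^q$-dimensional entangled subspaces. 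Feeding a Schmidt-number-$2^q$ state through the bound (which in the product case reproduces $k=\Omega(\sqrt{2^n})$) should cost a factor of at most $2^q$ in the effective dimension, giving $k=\Omega(\sqrt{2^n/2^q})=\Omega(\sqrt{2^{n-q}})$. Making this ``$2^q$-dimensional shared entanglement is worth at most a factor $2^q$'' step rigorous inside the Anshu et al.\ distinguishing bound --- morally, proving that $q$ pre-shared EPR pairs help no more than teleporting $q$ qubits would --- is the hard part and the main obstacle of the whole proof.
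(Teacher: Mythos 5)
Your necessity direction has a genuine gap, and it is exactly the step you yourself flag as the main obstacle. Reducing $q$ qubits of quantum communication to $q$ pre-shared EPR pairs plus $\LOCC$ is right, and so is reusing the hard instances of~\cite{anshu2022distributed}, but the plan of tracking the Schmidt number of the shared state and arguing that ``Schmidt rank $2^q$ costs at most a factor $2^q$ in effective dimension'' does not plug into their distinguishing bound: Theorem~\ref{thm:dipe_hard} is a statement about \emph{separable} POVM elements acting on the copies alone, and once the parties hold an entangled resource the induced measurement on $\phi^{\otimes k}\otimes\psi^{\otimes k}$ is simply no longer separable; conditioning on the classical transcript does not restore separability, and the bound has no ``effective dimension'' parameter into which a Schmidt-rank estimate can be fed, so one would have to reprove the moment bound for a new class of measurements, which you have not done. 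The paper closes this gap with a quantitative tool you do not use: the \emph{robustness of entanglement}~\cite{vidal1999robustness}. Writing the $2^q$-dimensional maximally entangled resource as $\sigma=(1+E(\sigma))\sigma^{+}-E(\sigma)\sigma^{-}$ with $\sigma^{\pm}$ separable and $E(\sigma)=2^q-1$ (Lemma~\ref{lem:RE}), each separable $\sigma^{\pm}$ is absorbed into the separable POVM element to produce a genuine separable measurement on the copies alone; applying Theorem~\ref{thm:dipe_hard} to each piece bounds the distinguishing advantage by $(1+2E(\sigma))(e^{k^2/d}-1)$, which is the rigorous form of your ``a $2^q$-dimensional resource buys at most a factor $2^q$'' intuition and yields $k=\Omega\big(\sqrt{2^{n-q}}\big)$. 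Without this (or an equivalent) decomposition, your lower bound is not proved.

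Your sufficiency direction is essentially the paper's protocol in different clothing: measuring the $(n-q)$-qubit block in a random basis is the same random rank-$2^q$ projective POVM $\{UP_iU^{\dagger}\}$, followed by classical collision-finding, teleportation of the residual, and a $\swap$ test, and the collision probability $\Theta(2^{q-n})$ gives $k=\Theta(\sqrt{2^{n-q}})$ in both cases. The difference is the estimator. The paper normalizes the projected states and invokes a concentration fact (\cite[Fact~2]{sen2018quantum}) to show that for $q=\Omega(\log d)$ \emph{every} projected pair's overlap is within $O(\Delta)$ of $|\langle\psi|\phi\rangle|^2$ with high probability over $U$, so Bob just outputs $2s/m-1$; no design-moment reweighting, no fourth-moment bookkeeping, and the matching issue disappears because only a constant number of pairs is ever needed. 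Your reweighted estimator with constants $A,B$ has the correct mean (your second-moment computation checks out), but the pieces you defer---the $U$-variance via $4$-design moments, correlations among colliding pairs sharing a copy, and the matching-versus-all-pairs correction---are precisely where the remaining work lies; the paper's normalization-plus-concentration route avoids all of it.
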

The lower bound on sample complexity of $k = \Omega\left( 2^{(n-q)/2} \right)$ shows that quantum communication may help. However, to reach sub-exponential scaling, a large amount of quantum communication is necessary i.e., unless $q = n/ \polylog(n)$, our sample complexity lower bound is still exponential in $n$. Informally, entanglement helps, but not too much.

Our next result concerns the sample complexity, under $\LOCC$, of $\GDIPE$, i.e., for arbitrary Hermitian operators $M$. A natural possibility when estimating $|\langle \psi|M|\phi\rangle|^2$ is that, perhaps Alice and Bob only need to confirm that their states have a large overlap in one of several smaller subspaces ``within $M$". Then, it is conceivable that this task would be much easier than full inner product estimation (i.e., when $M=\id$).  We show that this is indeed the case!  Let $M_\varepsilon$ be $M$ restricted to subspaces with eigenvalue of magnitude at least $\varepsilon/2$. We prove the following near-optimal results for general $\GDIPE$.
\begin{theorem}[Informal]\label{thm:blf_ub_informal}
      For every $M$ with $\|M\|\leq 1$, with only local operations and classical communication, it is sufficient to obtain \[k=\Oh(\max \{1/\eps^2, \Vert M_\eps \Vert_2/\eps\})\] copies of $\ket{\psi},\ket{\phi}$  to estimate $| \langle \phi | M | \psi \rangle |^2$ to error $\varepsilon$ with high probability and it necessary to obtain \[k=\Omega(\max \{1/\eps^2, \Vert M_\eps \Vert_2/\sqrt{\eps}\})\] copies to produce an $\varepsilon$-approximation to $| \langle \phi | M | \psi \rangle |^2$.
\end{theorem}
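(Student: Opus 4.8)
Let me think about how to prove both the upper and lower bounds for the $\GDIPE$ problem with arbitrary Hermitian $M$ under LOCC.

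For the upper bound, the plan is to reduce to the $M = \id$ case analyzed by Anshu et al., but restricted to the relevant subspace. The key observation is that $|\langle\phi|M|\psi\rangle|^2$ only "sees" the part of $M$ with non-negligible eigenvalues up to error $\varepsilon$: writing $M = M_\varepsilon + (M - M_\varepsilon)$ where $M - M_\varepsilon$ has operator norm $\leq \varepsilon/2$, we get $|\langle\phi|M|\psi\rangle|^2 = |\langle\phi|M_\varepsilon|\psi\rangle|^2 \pm O(\varepsilon)$. So it suffices to estimate $|\langle\phi|M_\varepsilon|\psi\rangle|^2$. Then I'd use the spectral decomposition $M_\varepsilon = \sum_i \lambda_i \Pi_i$ (or just $M_\varepsilon = \sum_j \mu_j |v_j\rangle\langle v_j|$) and have Alice and Bob perform a randomized measurement strategy in the eigenbasis of $M_\varepsilon$ — something like the classical-shadow / importance-sampling approach where one estimates $\langle\phi|v_j\rangle\langle v_j|\psi\rangle$ contributions. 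The variance of the natural unbiased estimator should scale with something like $\sum_j \mu_j^2 = \|M_\varepsilon\|_2^2$, and combined with the Anshu et al. LOCC protocol for overlaps this gives sample complexity $O(\|M_\varepsilon\|_2/\varepsilon)$, with the $1/\varepsilon^2$ term coming from the base accuracy requirement even when $M_\varepsilon$ has small Frobenius norm.

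For the lower bound, the plan is a reduction: an $\varepsilon$-approximation protocol for $|\langle\phi|M|\psi\rangle|^2$ must in particular distinguish hard instances. The $1/\varepsilon^2$ term is the standard lower bound for distinguishing overlap $\approx 1$ from overlap $\approx 1-\varepsilon$ (it holds already with full quantum access / no distributed constraint, e.g. via a Holevo-type or fidelity argument). For the $\|M_\varepsilon\|_2/\sqrt{\varepsilon}$ term, I would embed the Anshu et al. hard instance for $\DIPE$ into the top eigenspace of $M$. Concretely, restrict attention to the subspace $S$ spanned by eigenvectors of $M$ with $|\lambda| \geq \varepsilon/2$; on $S$, $M$ acts with eigenvalues bounded away from zero, and by suitably rescaling one reduces estimating $|\langle\phi|M|\psi\rangle|^2$ on $S$ to estimating $|\langle\phi_S|\psi_S\rangle|^2$ in dimension $\dim S$. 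Since $\|M_\varepsilon\|_2 \leq \sqrt{\dim S}$ (eigenvalues are $\leq 1$) and $\dim S$ can be as large as roughly $\|M_\varepsilon\|_2^2/(\varepsilon/2)^2$ when eigenvalues are exactly $\varepsilon/2$, the Anshu et al. $\Omega(\sqrt{\dim S})$ bound translates to $\Omega(\|M_\varepsilon\|_2/\varepsilon)$ in the worst case — though the stated bound $\|M_\varepsilon\|_2/\sqrt{\varepsilon}$ suggests one has to be more careful, taking into account that the "signal" to distinguish has been scaled down by the eigenvalue magnitude, which weakens the accuracy from $\varepsilon$ to $\varepsilon/(\varepsilon/2) = O(1)$ on the rescaled problem but costs a $\sqrt{\varepsilon}$ factor somewhere in the variance bookkeeping. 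I'd set up the two-point (or multi-point) distinguishing problem with Haar-random states inside $S$ and track exactly how the accuracy and dimension trade off.

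The main obstacle I anticipate is the lower bound, specifically getting the $\sqrt{\varepsilon}$ (rather than $\varepsilon$) dependence right and matching it against the upper bound's $\varepsilon$ dependence — this gap means the theorem is only "near-optimal," and pinning down where the loss occurs requires carefully choosing the eigenvalue profile of $M$ in the hard instance (all eigenvalues at the threshold $\varepsilon/2$ vs. a spread-out spectrum) and propagating that choice through the Anshu et al. lower bound machinery, which itself is a delicate second-moment / representation-theoretic argument over the symmetric subspace. A secondary obstacle on the upper bound side is controlling the variance of the eigenbasis-measurement estimator under the LOCC constraint — Alice and Bob cannot jointly measure in the eigenbasis of $M_\varepsilon$, so one needs a local randomized scheme (random Pauli-like or random-rotation measurements adapted to $M_\varepsilon$) whose variance is still governed by $\|M_\varepsilon\|_2$ and not by a larger norm like $\|M_\varepsilon\|_1$ or $\dim S$.
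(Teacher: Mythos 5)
Your first step (truncating $M$ to $M_\eps$ at the cost of $\eps/2$) matches the paper, but both halves of your plan have genuine gaps. The serious one is the lower bound. Your proposed hard instance --- Haar-random states inside the top eigenspace $S$ of $M$, reduced ``by rescaling'' to plain inner-product estimation --- does not yield the claimed bound, and for the extremal eigenvalue profile it fails outright. If all eigenvalues of $M_\eps$ have magnitude $\eps/2$, then for states supported in $S$ the quantity $|\langle\phi|M|\psi\rangle|^2$ is $(\eps/2)^2|\langle\phi_S|\psi_S\rangle|^2$: an additive $\eps$-accurate estimate of the former gives only an additive $O(1/\eps)$-accurate (i.e.\ vacuous) estimate of the latter, because the relation is quadratic in the eigenvalue, not linear as your ``$\eps/(\eps/2)=O(1)$'' bookkeeping assumes. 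Equivalently, the YES/NO separation for Haar-random states in $S$ is of order $\Tr[M_\eps]^2/d_\eps^2$, which can be as small as $\Theta(\eps^2)$, strictly below the accuracy $\eps$ the protocol is granted, so the reduction to the Anshu et al.\ decision problem never gets off the ground. The paper's construction is different in a way that is essential: it anchors the states on an eigenvector $\ket{0}$ with $|$eigenvalue$|=1$ (which exists since $\|M\|=1$) and perturbs them by a small Haar-random component $\sqrt{\delta}\ket{\chi}$ in the remaining large-eigenvalue subspace $W$ (after first passing to the positive or negative part of $M_\eps|_W$ with the larger norm, so there is no sign cancellation). The distinguishing signal is then the \emph{first-order} cross term $2\delta(1-\delta)\cos\vartheta\,\Tr[M\chi]\approx \delta\,\Tr[\tilde M_\eps]/\ell$, and choosing $\delta=\Theta(\eps\ell/\Tr[\tilde M_\eps])$ makes this signal $\Theta(\eps)$, matching the granted accuracy. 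Invoking the known $\Omega(\sqrt{\ell}/\delta)$ bound for the $\delta$-decision problem gives $k=\Omega(\Tr[\tilde M_\eps]/\eps\sqrt{\ell})$, and the conversion $\Tr[\tilde M_\eps]\ge\|\tilde M_\eps\|_2^2$ together with $\|\tilde M_\eps\|_2\ge\sqrt{\eps\ell/2}$ produces exactly the $\Omega(\|M_\eps\|_2/\sqrt{\eps})$ dependence. This tuning of the perturbation weight to $\Tr[\tilde M_\eps]$ is the missing idea in your sketch; ``carefully choosing the eigenvalue profile of $M$'' is not available to you, since the lower bound must hold for every fixed $M$.

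On the upper bound, your plan asserts rather than proves the key point. The paper's protocol is concrete: Alice and Bob each measure $\{P_\eps,\id-P_\eps\}$ on all copies, then apply the standard POVM on the symmetric subspace of the \emph{support of $M_\eps$} (this restriction is what removes the ambient dimension $d$), and output the bias-corrected estimator $w=\frac{(d_\eps+s_a)(d_\eps+s_b)}{k^2}|\langle u|M_\eps|v\rangle|^2-\frac{\Tr[M_\eps^2]}{k^2}$; the claim that the fluctuations are governed by $\|M_\eps\|_2$ is the content of a lengthy moment computation giving $\Var(w)=\Oh(1/k+\|M_\eps\|_2^2/k^2+\|M_\eps\|_2^4/k^4)$. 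Your eigenbasis importance-sampling scheme, as described, would need local access to the relative phases in the cross terms $\mu_j\mu_{j'}\langle\phi|v_j\rangle\langle v_j|\psi\rangle\overline{\langle\phi|v_{j'}\rangle\langle v_{j'}|\psi\rangle}$, which is precisely the distributed-phase difficulty that the symmetric-subspace POVM is introduced to handle; you flag this obstacle yourself but do not resolve it, so the claimed $\Oh(\|M_\eps\|_2/\eps)$ sample complexity is not established by your argument.
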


 Interestingly, this implies that $M$ having both positive and negative eigenvalues may not render estimation harder than definite operators. Take Pauli strings as an example, an important set of operators in quantum information. These have eigenspaces of eigenvalues $\pm 1$ each of dimension $2^{n-1}$. Then, in evaluating $\vert \langle \phi \vert M \vert \psi \rangle \vert^2$ there may be cancellations between components of the states lying in eigenspaces of different signs. However, our results show that estimating such an $M$ is no harder than estimating $\id$, which is positive definite and basis independent. Another interesting observation is that the ambient dimension of the entire Hilbert space, $d$, does not factor into these bounds. Thus, embedding some $M$ into an arbitrarily large Hilbert space does not change the sample complexity.
 
 While our upper and lower bounds do not exactly match, this is partially due to choice of presentation. The lower bound we prove is actually $\Omega( \Vert M_\varepsilon \Vert_1 / \varepsilon\sqrt{d_\varepsilon} )$, where $d_\varepsilon$ is the dimension of the support of $M_\varepsilon$. Converting $1$ to $2$ norm yields the lower bound as presented above. However, in some cases this more precise bound is tight. Take, for example, $M$ a projector onto a subspace or a Pauli string. Then, we obtain the lower bound $\Omega(\Vert M_\varepsilon \Vert_2 /\varepsilon)$, \emph{exactly} matching the upper bound.

\paragraph{Proof sketch.}
In~\cite{anshu2022distributed}, they prove their lower bound using a very interesting connection to quantum cloning: in particular, they show that if one could solve a decision-version of $\DIPE$, then one could have cloned the states ``well-enough". One of the main challenges in proving our lower bound was that the cloning-based lower bounds do not work when allowed even few qubits of communication in the cloning channel (which is the setting in which we'd like lower bounds). Instead, here we use the notion of \emph{robustness of entanglement}, a concept from entanglement theory to split a protocol with a quantum channel into a sum of LOCC protocols. Proving this is a small technical lemma, but once we have this, we can invoke the lower bound of~\cite{anshu2022distributed} to obtain our overall lower bound of $\Omega\left(\sqrt{2^{n-q}}\right)$. In order to prove our upper bound, we use ideas from the celebrated Johnson-Lindenstrauss lemma. Our main idea is to perform projections onto random subspaces which maintain the inner product between their states with high probability. Further, this protocol is completely agnostic to the states they started with. Using classical communication, they determine where these subspace projections have collided. Since the states now live in smaller dimensional subspaces, a smaller amount of entanglement is needed to perform teleportation. After teleporting, Bob performs a $\swap$ test to estimate the inner product between these projected pairs. With some analysis, we are able to show that the overall sample complexity for this scales as $\sqrt{2^{n-q}}$ as well, matching our lower bound.

For estimating bilinear forms, our protocol is fairly similar to the one in~\cite{anshu2022distributed} except that we need to deal with some technicalities due to the Hermitian operator $M$. The key observation we first make is that $\vert \langle \phi \vert M \vert \psi \rangle \vert^2$ and $\vert \langle \phi \vert M_\varepsilon \vert \psi \rangle \vert^2$ can only differ by at most~$\varepsilon/2$. So it suffices for Alice and Bob to restrict themselves to the support of $M_\varepsilon$. Then, they can simultaneously measure the magnitude of the components of their states in this subspace as well as the weighted inner product of said components. The calculations here are similar to the one in~\cite{anshu2022distributed} wherein one needs to compute the mean variance of the estimator, but a little bit more technical in order to bound all the terms in terms of $\|M_\varepsilon\|_2$. The lower bound follows from realizing that $M_\varepsilon$ when restricted to its support, is not too far off from identity and then we can invoke the lower bounds for inner product estimation can be applied. We remark that the lower bound here is more subtle than the one in~\cite{anshu2022distributed} since one needs to project onto the eigenspaces of $M$ carefully in order to get the optimal dependence on~$\|M_\varepsilon\|_2$.

\subsection{Related works and open questions}
Our work answers two open questions raised by Anshu et al.~\cite{anshu2022distributed}. There have been a few more papers since their work: Hinsche et al.~\cite{hinsche2024efficient} look at specific instantiations of $\DIPE$ (when the protocol used by Alice and Bob are Pauli sampling and the quantum states are structured) and prove some positive and negative results to this end. Chen et al.~\cite{chen2023unitarity} extend the work of~\cite{anshu2022distributed} by proving better bounds for non-trace-preserving quantum channels, Chen et al.~\cite{chen2024local} also looked at quantum property testing in the $\LOCC$~model. Concurrent to our work, Gong et al.~\cite{jonasetal} have similar results to ours and we thank them for helpful email exchanges. Distinguishability under LOCC has a rich history in the quantum information. Bennett et al.~\cite{bennett1999quantum} demonstrated that there exist product bases which are easily distinguished, but indistinguishable under LOCC. Enough entanglement renders this task easy as the two parties can simply teleport states. Subsequent works showed that lesser entanglement may suffice for this task~\cite{cohen2008understanding, zhang2016entanglement}. Unlike our framework, these papers concerned themselves with measurements of a single copy and further identifying states from some known ensemble, rather than learning a property.

\paragraph{A natural open question.} A tantalizing open question left open by~\cite{anshu2022distributed} and also this work is the analogue of $\DIPE$ in the mixed case setting. Here the setup is as follows: Alice has copies of $\rho$, Bob has copies of $\sigma$ and they engage in $\LOCC$. They need to distinguish if $\rho=\sigma$ or $\|\rho-\sigma\|_{tr}\geq \varepsilon$, promised one of them is the case. Using the bounds obtained in~\cite{anshu2022distributed},  we have an upper bound of $O(d^2/\varepsilon^4+d^{1.5}/\varepsilon^2)$. As for lower bounds,~\cite{o2015quantum} derived a general lower bound of $\Omega(d)$ even without locality constraints. In~\cite{buadescu2019quantum}, they showed that this lower bound can be met via a highly entangled measurement across all copies of $\rho$ and $\sigma$. If we restricted the measurements to be single-copy measurements and and fix $\sigma=\id/d$, then $\Omega(d^{3/2})$ copies are necessary~\cite{chen2022tight}. Despite several attempts in making progress on this question, we have not been able to obtain a lower bound better than $\Omega(d)$ and believe that the sample complexity of mixed state $\DIPE$ should be $O(d)$. We leave this as an open question.

 \paragraph{Acknowledgements.}  This work was done in part while the author was visiting the Simons Institute for the Theory of Computing, supported by DOE QSA grant \#FP00010905. AS and LS were supported  through the IBM-Illinois Discovery Accelerator Institute. We thank the authors of~\cite{jonasetal} for useful discussions and coordinating the arXiv submission. We thank Anurag Anshu, Yunchao Liu, Vojtech Havlicek, Felix Leditzky, and Eric Chitambar for discussions.

\section{Preliminaries}

\subsection{Quantum States and Measurements}
Pure quantum states are unit vectors in $\mathbb{C}^d$. A qubit is a state in $\mathbb{C}^2$ and an $n$-qubit state lives in $\left(\mathbb{C}^2\right)^{\otimes n} \cong \mathbb{C}^{2^n}$. Mixed states $\rho$, also known as density matrices, are positive semi-definite operators on $\mathbb{C}^d$ with unit trace. Pure states correspond to rank-$1$ mixed states and we will routinely use the notation $\psi$ to refer to the density matrix corresponding to a pure state $\ket{\psi}$. Measurements of quantum systems are described by positive operator valued measures (POVMs), ensembles of positive semi-definite operators $\{M_i\}_i$ such that $\sum_i M_i = \id$. The probability of observing an outcome $i$ is given by $\Tr[M_i \rho]$. By $\Vert M \Vert$ we denote the operator norm of an operator. 

In quantum computation states are manipulated via unitary evolution. That is, $\ket{\psi}$ is mapped to $U\ket{\psi}$ for some unitary $U$. An important unitary we will make repeated usage of is the $\swap$ gate on a bipartite state. This has the action $\swap \ket{\psi}\otimes \ket{\phi} = \ket{\phi}\otimes \ket{\psi}$.

Operators on two Hilbert spaces $\mathcal{H}_A$ and $\mathcal{H}_B$ lie in the tensor product space $\mathcal{B}(\mathcal{H}_A \otimes \mathcal{H}_B)$. A positive semi-definite operator $M$ acting on these tensor product space is said to be separable if it admits a decomposition $M = \sum_i A_i \otimes B_i$, where each $A_i$ and $B_i$ are positive semi-definite. We will be interested in separable POVMs, where each aspect $M_i$ is separable. These are measurements that cannot create entanglement between distributed parties. However, such measurements may still go beyond those achievable with classical communication. With this in mind, one can define local operations and classical communication ($\LOCC$)~\cite{chitambar2014everything}. This can roughly be defined as all protocols where Alice performs some measurement in her lab, communicates the result to Bob, who then performs a measurement in his lab and communicates the result to Alice and so on. However, the set of $\LOCC$ measurements is mathematically unwieldy and it is generally easier to prove results regarding separable measurements.  An $e$-bit is a standard resource of entanglement in quantum information. This is a shared two qubit resource state $\ket{\Phi^+} = \frac{1}{\sqrt{2}}(\ket{00} + \ket{11})$. With $n$ $e$-bits and classical communication, two distributed parties can teleport an $n$ qubit state~\cite{nielsen2010quantum}.

\subsection{Subroutines}

We will make repeated reference to the $\swap$ test, which can be used to estimate the overlap between two states $\ket{\psi}$ and $\ket{\phi}$~\cite{barenco1997stabilization, buhrman2001quantum}. 

\begin{definition}[$\swap$ test]
    Given two mixed states $\rho$ and $\sigma$ and an ancilla qubit initialized in the state $\ket{0}_E$, the $\swap$ test performs the unitary $(H_E\otimes \id_{AB}) (\ketbra{0}{0}\otimes \id_{AB} + \ketbra{1}{1}\otimes \swap_{AB}) (H_E\otimes \id_{AB})$ and measures the ancilla in the computation basis. The measurement probabilities are given~by
    \begin{align*}
        p(0) = \frac{1+\Tr[\rho\sigma]}{2},\ \quad p(1) = \frac{1-\Tr[\rho\sigma]}{2}\ .
    \end{align*}
\end{definition}
 Via standard amplification arguments, $\Oh(1/\varepsilon^2)$ trials are sufficient to estimate $\Tr[\psi \phi]$ to error $\varepsilon$. Using block-encodings, this can be extended to estimating $\vert \langle \phi \vert M \vert \psi  \rangle \vert^2$ for an arbitrary hermitian $M$ such that $\Vert M \Vert \leq 1$. Again, standard amplification arguments show that $\Oh(1/\varepsilon^2)$ trials suffices to estimate $\vert \langle \phi \vert M \vert \psi \rangle\vert^2$.
 We will also use the standard POVM on the symmetric subspace, which we now define. The symmetric group $S_k$ has a natural action on the state space of $\left(\mathbb{C}^d\right)^{\otimes k}$ given~by 
\begin{align}
    \pi \bigotimes_{i=1}^k \ket{\psi_i} & = \bigotimes_{i=1}^k \ket{\psi_{\pi^{-1}(i)}}\ \forall \pi \in S_k\ .
\end{align}
\begin{definition}[Symmetric Subspace]
    The $k$-copy symmetric subspace of $ \mathbb{C}^d$, is given by
    \begin{align*}
        \vee^k \mathbb{C}^d = \left\{\ket{\psi} \in \left(\mathbb{C}^d\right)^{\otimes k}\ \vert \ \pi \ket{\psi} = \ket{\psi} \forall \pi \in S_k \right\}\ .
    \end{align*}
\end{definition}

The symmetric subspace has a natural spanning set given by $\{\ket{\phi}^{\otimes k}\ \vert \ \ket{\phi} \in \mathbb{C}^d\}$~\cite{harrow2013church}. Due to $\vee^k \mathbb{C}^d$ being an irreducible representation of the unitary group (under the action $U^{\otimes k}$), we can define a uniform POVM on the symmetric subspace, which is known as the standard POVM.
\begin{definition}[Standard POVM on $\vee^k \mathbb{C}^d$]\label{def:standard_POVM}
    The standard POVM on $\vee^k \mathbb{C}^d$ is the continuous POVM with elements
    \begin{align}
        \left\{ \binom{d+k-1}{k} \ketbra{\varphi}{\varphi}^{\otimes k} d\varphi \ \big\vert \ \ket{\varphi}\in\mathbb{C}^d \right\}\ .
    \end{align}
\end{definition}

We require one last fact about the symmetric subspace:
\begin{fact}[Haar moments]\label{fact:haar_moments}
    If $\ket{\psi}$ is drawn from the Haar measure on $\mathbb{C}^d$, then
    \begin{align}
        \E_\psi[\psi] & = \frac{\id}{d}\ ,\quad \E_\psi[\psi^{\otimes 2}] = \frac{1}{d(d+1)}(\id + \swap)\ .
    \end{align}
\end{fact}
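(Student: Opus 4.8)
The plan is to exploit the defining invariance of the Haar measure together with Schur's lemma. Since the Haar measure on pure states is invariant under $\ket{\psi}\mapsto U\ket{\psi}$ for every unitary $U$, the operator $A_t := \E_\psi[\psi^{\otimes t}]$ satisfies $U^{\otimes t} A_t (U^\dagger)^{\otimes t} = A_t$ for all $U$, i.e.\ $A_t$ lies in the commutant of $\{U^{\otimes t}\}$. By Schur--Weyl duality this commutant on $(\mathbb{C}^d)^{\otimes t}$ is spanned by the permutation operators $\{\pi : \pi\in S_t\}$, so $A_t$ is a linear combination of these. It then remains only to pin down the coefficients using trace normalization and the symmetry of $\psi^{\otimes t}$.

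For $t=1$ this immediately gives $A_1 = c\,\id$ for a scalar $c$, and taking traces with $\Tr[\psi]=1$ yields $cd=1$, hence $\E_\psi[\psi]=\id/d$.

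For $t=2$, the commutant is spanned by $\id$ and $\swap$, so $A_2 = a\,\id + b\,\swap$. I would fix the two coefficients using two observations: (i) each $\psi^{\otimes 2}$ is supported on the symmetric subspace $\vee^2\mathbb{C}^d$, so $\swap\,A_2 = A_2$, which forces $b\,\id + a\,\swap = a\,\id + b\,\swap$, i.e.\ $a=b$; and (ii) $\Tr[A_2]=1$, which with $\Tr[\id]=d^2$ and $\Tr[\swap]=d$ gives $a(d^2+d)=1$. Solving, $a=b=\tfrac{1}{d(d+1)}$, so $\E_\psi[\psi^{\otimes 2}]=\tfrac{1}{d(d+1)}(\id+\swap)$. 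Equivalently, one can observe that $\tfrac12(\id+\swap)$ is the projector onto $\vee^2\mathbb{C}^d$, whose dimension is $\binom{d+1}{2}$, and that $A_2$ must be the \emph{normalized} projector onto this subspace since $\vee^2\mathbb{C}^d$ is irreducible under $U^{\otimes 2}$; this gives the same answer.

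There is no real obstacle here --- this is a standard computation. The only points requiring mild care are the appeal to the commutant structure (one could instead average explicitly over a spanning set of unitaries, e.g.\ diagonal phase gates together with permutation matrices, but invoking Schur--Weyl is cleaner) and remembering to use the constraint $\swap\,A_2 = A_2$ coming from symmetry to obtain the second equation rather than trying to extract it from a partial trace.
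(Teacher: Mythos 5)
Your argument is correct, and it is the standard derivation of these moment formulas: the paper itself states this as a background fact without proof (it is the usual symmetric-subspace/Schur--Weyl computation, cf.\ the reference to Harrow's survey). Both the unitary-invariance-plus-commutant route and your alternative observation that $\E_\psi[\psi^{\otimes 2}]$ is the normalized projector onto $\vee^2\mathbb{C}^d$ of dimension $\binom{d+1}{2}$ are sound, and the coefficient bookkeeping ($a=b$, $\Tr[\id]=d^2$, $\Tr[\swap]=d$) checks out.
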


Lastly, we implicitly use the equivalence between quantum communication and shared entanglement plus classical communication. One direction is immediate: given a quantum communication channel, Alice can simply send one half of an entangled state to produce a shared entangled state. The other direction is witnessed by the well-known teleportation protocol.

\begin{definition}[Quantum Teleportation~\cite{bennett1993teleporting, werner2001all}]
    Let Alice and Bob share a $d$-dimensional maximally entangled state, $
    \frac{1}{\sqrt{d}}\sum_{i=1}^d \ket{ii}$. Then, using classical communication and consuming this shared entangled state, Alice may transmit a state $\ket{\psi} \in \mathbb{C}^d$ to Bob.
\end{definition}
It is important to note that this protocol works even if $\ket{\psi}$ is embedded into an arbitrary large Hilbert space. Say we know that $\ket{\psi}$ is entirely contained in some subspace $W \subseteq \mathbb{C}^d$ of dimension $d'$. Then, a shared maximally entangled state of dimension $d'$ suffices to teleport $\ket{\psi}$. This will be crucial in our upper bound for $\DIPE$.

\section{Estimating Inner Product}

In this section we consider the following task: Alice and Bob each have copies of unknown $d$-dimensional states $\ket{\phi}$ and $\ket{\psi}$ respectively. They may use any amount of classical communication and some restricted quantum communication to estimate $| \langle \phi | \psi \rangle |^2$ using as few samples of $\ket{\psi}$ and $\ket{\phi}$ as possible. We assume that $\eps$ is some constant strictly less than $1/2$ and will return to $\eps$-dependence at the end of this section.

\begin{theorem}\label{thm:ip_lb}
Suppose Alice and Bob share a $q$-qubit entangled state, can perform measurements on $k$ copies of their respective $d$-dimensional states $\ket{\psi}$ and $\ket{\phi}$ respectively, and engage in unbounded classical communication. If they are able to estimate $| \langle \psi | \phi \rangle |^2$ to accuracy $\varepsilon$ with high probability,~then $k= \Omega\left(\sqrt{d/q}\right)$.
\end{theorem}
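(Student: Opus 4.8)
The plan is to reduce the $q$-qubit-communication setting to the zero-communication ($\LOCC$) setting of Anshu et al.~\cite{anshu2022distributed}, paying a price that is polynomial in $2^q$, so that their $\Omega(\sqrt{d})$ bound transfers into an $\Omega(\sqrt{d/\poly(q)})$ bound (ideally $\Omega(\sqrt{d/q})$ after being careful with constants). The central object is the \emph{robustness of entanglement} of the shared $q$-qubit state $\rho_{AB}$: since any bipartite state on $q$ qubits total can be written as an affine combination $\rho_{AB} = (1+s)\sigma^{+}_{AB} - s\,\sigma^{-}_{AB}$ with $\sigma^{\pm}_{AB}$ separable and $s = O(2^{q})$ (this is the quantitative bound on robustness of entanglement for low-dimensional states), a protocol that uses $\rho_{AB}$ as its entangled resource can be expanded into a signed combination of protocols that use only \emph{separable} shared states. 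Feeding a separable resource state into an otherwise-$\LOCC$ protocol yields again an $\LOCC$ protocol (the local preparation of each branch of the separable state can be folded into Alice's and Bob's local operations, with the branch label carried by the classical channel). Hence any $(k,q)$-protocol induces a finite family of $\LOCC$ protocols whose outputs, combined with signed weights summing in absolute value to $1+2s = O(2^q)$, reproduce the original protocol's acceptance probability.

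The key technical lemma I would isolate is therefore: \emph{if there is a $(k,q)$-protocol solving the decision version of $\DIPE$ with constant bias, then there is an $\LOCC$ protocol using $k$ copies that solves it with bias $\Omega(2^{-q})$}. This follows by picking, from the signed combination above, the single $\LOCC$ branch whose contribution to the gap between the two cases is at least a $1/(1+2s)$ fraction of the total gap — a standard averaging argument over the (finitely many, after discretizing the separable decompositions) branches. One has to be slightly careful that the decision problem used by~\cite{anshu2022distributed} is a two-sided promise problem (distinguish $|\langle\psi|\phi\rangle|^2 = 1$ from $|\langle\psi|\phi\rangle|^2 \le 1-\eps$, roughly), so "bias" should be interpreted as the difference in acceptance probabilities on the two instance families; the signed sum preserves this difference exactly, and the averaging picks out one branch realizing an $\Omega(2^{-q})$ difference.

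Then I would invoke the quantitative $\LOCC$ lower bound from~\cite{anshu2022distributed}: to achieve bias $\beta$ in distributed inner product estimation via $\LOCC$ with $k$ copies one needs $k = \Omega(\sqrt{d}\cdot\beta)$ (their bound is robust to the bias, since it ultimately comes from a cloning / fidelity estimate that degrades gracefully). Plugging $\beta = \Omega(2^{-q})$ gives $k = \Omega(\sqrt{d}\,/\,2^{q})$. To sharpen $2^{q}$ down to $q$ as in the theorem statement, I would not apply robustness to the full shared state at once but rather more carefully track how a $q$-qubit resource decomposes — e.g.\ using that each \emph{single} ebit has robustness exactly $1$, so $q$ qubits of communication ($\le q$ ebits) give robustness at most $2^q - 1$ in the worst case but, for the purposes of the bias loss, one can argue the relevant loss is only $O(q)$ by a more refined hybrid/telescoping argument over the $q$ transmitted qubits, handling one qubit at a time. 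I expect \emph{this last step} — getting the dependence down from $2^q$ to $q$, i.e.\ proving that the bias degrades only polynomially rather than exponentially in $q$ — to be the main obstacle; the rest (robustness decomposition, folding separable states into $\LOCC$, averaging to a single branch, citing the robust $\LOCC$ lower bound) is routine once the right lemma is stated.
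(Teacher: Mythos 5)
Your high-level route coincides with the paper's: replace the quantum communication by a shared maximally entangled resource state $\sigma$, expand $\sigma=(1+s)\sigma^{+}-s\,\sigma^{-}$ using robustness of entanglement, absorb the separable pieces into separable measurements, and invoke the separable-measurement bound of Anshu et al.\ for the decision version of $\DIPE$. (The paper does not discretize or select a single branch: it builds an explicit separable POVM element $M'$ that absorbs $\sigma^{\pm}$ and bounds both signed terms, arriving at $1/3\le(1+2E(\sigma))(e^{k^2/d}-1)$; your averaging-to-one-branch argument is a morally equivalent packaging of the same step.)

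Where your proposal goes astray is in the quantitative bookkeeping, and this matters. The bound of Anshu et al.\ says the achievable bias under separable measurements with $k$ copies is at most $e^{k^2/d}-1\approx k^2/d$, so achieving bias $\beta$ forces $k=\Omega(\sqrt{d\beta})$, not $k=\Omega(\sqrt{d}\,\beta)$ as you write; with $\beta=\Omega(1/E(\sigma))$ this already yields $k=\Omega(\sqrt{d/E(\sigma)})$. Moreover, the $q$ in the stated bound $\Omega(\sqrt{d/q})$ is the \emph{dimension} of the shared maximally entangled state (cf.\ \cref{lem:RE}, where $E(\sigma)=q-1$, and \cref{thm:ip_ub}, which uses $q$-dimensional messages); for $q'$ communicated qubits the bound reads $\Omega(\sqrt{d/2^{q'}})$, i.e.\ $\Omega(\sqrt{2^{n-q'}})$ when $d=2^n$, exactly as in the informal theorem. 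Consequently, the step you single out as the main obstacle---improving the loss from $2^{q'}$ to $q'$ in the number of qubits via a per-qubit hybrid---is not needed and is in fact impossible: a lower bound of $\Omega(\sqrt{d/q'})$ in the qubit count would exceed the $O(\sqrt{d/2^{q'}})$ upper bound achieved by the protocol of \cref{thm:ip_ub}. With the corrected bias dependence (and the correct reading of $q$), your argument closes exactly as the paper's does; without it, your chain only gives $k=\Omega(\sqrt{d}/E(\sigma))$, which is quadratically weaker in the entanglement parameter, and the extra refinement you propose would not repair that.
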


\begin{theorem}\label{thm:ip_ub}
 Let $q=\Omega(\log d)$. Using $k=\Oh(\sqrt{d/q})$ copies of $\ket{\psi}$ and $\ket{\phi}$, there is a protocol that uses $\Theta(1)$ $q$-dimensional quantum messages, $\LOCC$, and returns $| \langle \psi | \phi \rangle |^2$ up to constant error with high probability
\end{theorem}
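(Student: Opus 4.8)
}
The plan is to build an $\LOCC$-plus-few-qubits protocol around \emph{random subspace projections} (a ``quantum Johnson--Lindenstrauss'' idea), the \emph{birthday bound}, and \emph{teleportation into the collided subspace}. Using shared classical randomness, Alice and Bob agree on a Haar-random unitary $U$ on $\mathbb{C}^d$ (a standard derandomization via approximate unitary designs applies) and fix the partition of $\mathbb{C}^d$ into $m=d/q$ blocks of dimension $q$, with projectors $P_1,\dots,P_m$ onto the $U$-rotated coordinate blocks. For each of their $k=\Oh(\sqrt{d/q})$ copies, Alice performs the projective measurement $\{P_j\}_{j=1}^m$, records which block $j$ she observed, and retains the normalized post-measurement state $\ket{\psi_j'}=P_jU\ket{\psi}/\|P_jU\ket{\psi}\|$; Bob does likewise with $\ket{\phi}$. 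Over the public classical channel they announce all $2k$ block labels and look for a \emph{collision}: one of Alice's copies and one of Bob's copies that landed in the same block $j$. Since the two post-measurement states for that block lie in the \emph{known} $q$-dimensional subspace $\operatorname{range}(P_j)$, Alice teleports hers to Bob using a single $q$-dimensional maximally entangled state (set up with one $q$-dimensional quantum message), after which Bob runs the $\swap$ test on the resulting pair entirely in his lab. Repeating this for the constantly many collisions that occur and reporting the appropriately rescaled average of the $\swap$-test bits is the final estimator; it is state-agnostic by construction.

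The correctness analysis proceeds as follows. Set $a_j=\|P_jU\psi\|^2$, $b_j=\|P_jU\phi\|^2$, $c_j=\langle\psi|U^\dagger P_jU|\phi\rangle$, so that a collision in block $j$ produces (via the $\swap$ test) an estimate of $|\langle\psi_j'|\phi_j'\rangle|^2=|c_j|^2/(a_jb_j)$. The key point is that \emph{a single} renormalized projected overlap already recovers the target: since $\tfrac dq\E_U[P_j]=\id$, Haar concentration (Fact~\ref{fact:haar_moments}-type second-moment estimates together with measure concentration on the unitary group, and a polarization identity to reduce the bilinear $c_j$ to quadratic forms) gives $\big|\tfrac dq a_j-1\big|\le\eps$, $\big|\tfrac dq b_j-1\big|\le\eps$ and $\big|\tfrac dq c_j-\langle\psi|\phi\rangle\big|\le\eps$, each with probability $1-\exp(-\Omega(\eps^2 q))$ over $U$. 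Because $q=\Omega(\log d)$ with a large enough constant, this failure probability is $\le\tfrac13(q/d)$, so a union bound over all $m=d/q$ blocks shows that with high probability over $U$ \emph{every} block satisfies all three estimates; condition on this event. Then $|\langle\psi_j'|\phi_j'\rangle|^2=|c_j|^2/(a_jb_j)=|\langle\psi|\phi\rangle|^2\pm\Oh(\eps)$ for \emph{every} block $j$ (propagating the three $\eps$-errors through the ratio; the error in fact vanishes when $\langle\psi|\phi\rangle\in\{0,1\}$), so every $\swap$-test bit has bias $\tfrac12\bigl(1-|\langle\psi|\phi\rangle|^2\pm\Oh(\eps)\bigr)$, and the rescaled average of $\Theta(1)$ such bits estimates $|\langle\psi|\phi\rangle|^2$ to error $\Oh(\eps)$ with high probability by Hoeffding.

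It remains to guarantee that $\Theta(1)$ collisions actually occur. Conditioned on the good event above, $a_j,b_j=(1\pm\eps)q/d$, so for a fixed Alice-copy/Bob-copy pair the collision probability is $\sum_j a_jb_j=(1\pm\Oh(\eps))q/d$, and hence the number $N$ of collisions among the $k^2$ pairs has $\E[N]=(1\pm\Oh(\eps))k^2q/d$, which we make a large constant by choosing the leading constant in $k=\Theta(\sqrt{d/q})$. The per-pair collision events are not independent (they share Alice-copies and Bob-copies), but a second-moment computation shows $\operatorname{Var}(N)=\Oh(\E[N])$ (the dominant dependent contribution comes from triples sharing one copy and is an $\Oh(1/k)$ fraction of $\E[N]^2$), so by Chebyshev $N\ge\tfrac12\E[N]=\Theta(1)$ except with small constant probability. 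Collecting the failure probabilities (bad $U$, too few collisions, $\swap$-test sampling noise) yields a protocol that outputs $|\langle\psi|\phi\rangle|^2$ to constant error with high probability, uses $k=\Oh(\sqrt{d/q})$ copies, $\Theta(1)$ $q$-dimensional quantum messages (one teleportation per collision), and otherwise only classical communication, as claimed. I expect the main obstacle to be the first half of the middle paragraph: making the quantum-JL concentration simultaneously strong enough (failure $\ll q/d$) that \emph{all} $d/q$ blocks are good---this is exactly where the hypothesis $q=\Omega(\log d)$ is consumed---and careful enough that the three additive errors on $a_j,b_j,c_j$ combine into a genuine $\Oh(\eps)$ error on $|c_j|^2/(a_jb_j)$ uniformly over all possible values of $\langle\psi|\phi\rangle$.
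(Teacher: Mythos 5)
Your proposal is correct and follows essentially the same route as the paper's proof: a Haar-random partition into $d/q$ blocks of dimension $q$, concentration of the projected norms (the paper invokes Sen's Fact~2 applied to $\ket{\psi},\ket{\phi},\ket{\psi}-\ket{\phi},\ket{\psi}-i\ket{\phi}$, which is exactly your polarization step), a union bound over blocks consuming $q=\Omega(\log d)$, a birthday-bound collision argument giving $k=\Theta(\sqrt{d/q})$, teleportation of the post-measurement state within the known $q$-dimensional subspace, and a $\swap$ test on the collided pairs. Your second-moment/Chebyshev treatment of the collision count is if anything slightly more careful than the paper's appeal to Hoeffding plus constant repetition, but it is the same argument in substance.
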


We remark that there is a constant-factor difference in the amount of entanglement bounds  in the upper and lower bound. This largely seems to stem from the sample complexity still being greater than $1$ even when $q=n$: say that Alice is  able to transmit her state exactly and Bob does the swap test. Even then, in order to compute the inner product $|\langle \psi|\phi\rangle|^2$ to error $0.1$, they need to repeat the swap test constantly many times. Our lower bound for this setting would imply a constant lower bound as well (matching the upper bound, but not ``exactly" since it would be a constant-factor off). Regardless, our lower bound can be understood as saying that entanglement does not help for inner product estimation unless the shared entanglement dimension scales with~$d$. 

\subsection{Lower bound in main theorem}
To obtain a lower bound we consider the decision problem constructed in~\cite{anshu2022distributed} (which is called the $\DIPE$, \emph{distributed inner product estimation} problem). Alice and Bob are promised to be in one of the following two scenarios. Their version restricts Alice and Bob to $\LOCC$ protcols. However, here we allow them to share some resource state $\sigma$ as well.
\begin{definition}[Distributed Inner Product Estimation, Decision Version]\label{def:DIPE}
\ 
Alice and Bob each have access to $k$ copies of the states $\ket{\phi}$ and $\ket{\psi}$ in $\mathbb{C}^d$ respectively and are asked to decide, using $\LOCC$ and perhaps a resource state $\sigma$, which of the following two scenarios they are in:
\begin{itemize}
    \item $\YES$: $\ket{\phi}=\ket{\psi}$ and they are Haar random
    \item $\NO$: $\ket{\phi},\ket{\psi}$ are Haar random.
\end{itemize}
\end{definition}

Note that the ability to transmit an arbitrary $q$-dimensional quantum state is equivalent, under $\LOCC$, to sharing a $q$-dimensional maximally entangled state (via quantum teleportation). Going forward, we let $\sigma=\frac{1}{q}\sum_{i,j=1}^q \ketbra{ii}{jj}$ be such a state. We also introduce a measure of entanglement called the \textit{robustness of entanglement}.
\begin{definition}[Robustness of entanglement~\cite{vidal1999robustness}]
   Any quantum state $\sigma\in \mathcal{B}(\mathcal{H}_A\otimes \mathcal{H}_B)$ can be decomposed as $\sigma = (1+s)\sigma^+ - s \sigma^-$,   where $\sigma^+$ and $\sigma^-$ are both separable states and $s\in \mathbb{R}_{\geq 0}$. The minimum value of $s$ over all such decompositions is called the \emph{robustness of entanglement}. We denote this minimum value by $E(\sigma)$.
\end{definition}
In particular, we will we will use the following lemma.
\begin{lemma}[\cite{vidal1999robustness}]\label{lem:RE}
    If $\sigma$ is a pure bipartite maximally entangled state of dimension $d$, then its robustness of entanglement is $E(\sigma) = q-1$.
\end{lemma}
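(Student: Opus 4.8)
The plan is to prove the two inequalities $E(\sigma)\ge q-1$ and $E(\sigma)\le q-1$ separately, writing $\sigma=\kb{\Phi}$ for the $q$‑dimensional maximally entangled state $\ket{\Phi}=\tfrac{1}{\sqrt q}\sum_{i=1}^{q}\ket{ii}$. The only elementary input we need is that every separable state $\tau$ on $\mathbb{C}^q\otimes\mathbb{C}^q$ satisfies $\bra{\Phi}\tau\ket{\Phi}\le 1/q$: for a product vector $\ket{a}\otimes\ket{b}$ one has $|\langle\Phi|a,b\rangle|^2=\tfrac1q|\langle\bar a|b\rangle|^2\le\tfrac1q$ by Cauchy--Schwarz, and the bound passes to mixtures by convexity.

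For the lower bound, take any decomposition $\sigma=(1+s)\sigma^+-s\sigma^-$ with $\sigma^+,\sigma^-$ separable states and $s\ge 0$, and apply $\Tr[\sigma\,\cdot\,]$ to both sides. Since $\Tr[\sigma^2]=1$ this gives $1=(1+s)\bra{\Phi}\sigma^+\ket{\Phi}-s\bra{\Phi}\sigma^-\ket{\Phi}$. As $\sigma^-$ is positive semidefinite, $\bra{\Phi}\sigma^-\ket{\Phi}\ge 0$, and $\bra{\Phi}\sigma^+\ket{\Phi}\le 1/q$ by the observation above, so $1\le (1+s)/q$, i.e. $s\ge q-1$. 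Minimizing over all valid decompositions yields $E(\sigma)\ge q-1$.

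For the upper bound I would exhibit an explicit decomposition attaining $s=q-1$, namely the two isotropic states $\sigma^+=\tfrac{1}{q(q+1)}(\Id+q\sigma)$ and $\sigma^-=\tfrac{1}{q^2-1}(\Id-\sigma)$. A one‑line computation gives $q\sigma^+-(q-1)\sigma^-=\sigma$, so $\sigma=(1+(q-1))\sigma^+-(q-1)\sigma^-$ is a robustness decomposition, provided both pieces are separable. For $\sigma^+$, observe that $\kb{\bar v}$ is the transpose of $\kb{v}$, so transposing the second tensor factor in Fact~\ref{fact:haar_moments} (which fixes $\Id$ and sends $\swap\mapsto q\sigma$) gives $\E_{v}[\kb{v}\otimes\kb{\bar v}]=\tfrac{1}{q(q+1)}(\Id+q\sigma)=\sigma^+$, manifestly a mixture of product states. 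For $\sigma^-$, using $\E_{w\perp\bar v}[\kb{w}]=\tfrac{1}{q-1}(\Id-\kb{\bar v})$ (here $w$ is Haar‑random in the $(q-1)$‑dimensional orthogonal complement of $\bar v$) together with the previous identity gives $\E_{v}\E_{w\perp\bar v}[\kb{v}\otimes\kb{w}]=\tfrac{1}{q-1}\big(\tfrac1q\Id-\sigma^+\big)=\tfrac{1}{q^2-1}(\Id-\sigma)=\sigma^-$, again a mixture of product states. Combining the two bounds gives $E(\sigma)=q-1$.

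The routine part is the short algebra verifying the two Haar averages and the identity $q\sigma^+-(q-1)\sigma^-=\sigma$; the content is small and elementary — the overlap bound $1/q$ on one side, and choosing the isotropic state at fidelity $1/q$ (the separability threshold) on the other. The only point requiring care is the partial‑transpose bookkeeping, i.e.\ that transposing one factor sends $\swap$ to $q\sigma$ and $\Id$ to $\Id$. One could instead simply invoke~\cite{vidal1999robustness}, but the argument above is self‑contained and uses only Fact~\ref{fact:haar_moments}.
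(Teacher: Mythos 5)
Your proof is correct. Note, though, that the paper gives no proof of this lemma at all: it is imported directly from Vidal and Tarrach, whose original argument computes the robustness of an \emph{arbitrary} pure bipartite state with Schmidt coefficients $a_i$, obtaining $(\sum_i a_i)^2-1$, of which the maximally entangled case ($a_i=1/\sqrt q$, giving $q-1$) is a corollary; their derivation goes through constructing and certifying an optimal pseudomixture for general Schmidt spectra. What you do instead is a clean specialization: the lower bound follows from the elementary fact that every separable $\tau$ on $\mathbb{C}^q\otimes\mathbb{C}^q$ has singlet fraction $\bra{\Phi}\tau\ket{\Phi}\le 1/q$, applied to any decomposition $\sigma=(1+s)\sigma^+-s\sigma^-$; the upper bound is the explicit isotropic decomposition $\sigma^+=\tfrac{1}{q(q+1)}(\Id+q\sigma)$, $\sigma^-=\tfrac{1}{q^2-1}(\Id-\sigma)$, whose separability you certify via the Haar-moment identity of Fact~\ref{fact:haar_moments} after transposing one factor (sending $\swap\mapsto q\sigma$), i.e.\ $\sigma^+=\E_v[\kb{v}\otimes\kb{\bar v}]$ and $\sigma^-=\E_v\E_{w\perp\bar v}[\kb{v}\otimes\kb{w}]$. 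I checked the algebra: $q\sigma^+-(q-1)\sigma^-=\sigma$ and both Haar averages come out as claimed, so the two bounds match. Your route buys a short, self-contained argument using only tools already in the paper (it is essentially the statement that the separability threshold of isotropic states sits at fidelity $1/q$), at the cost of covering only the maximally entangled case, whereas the cited result covers all pure states. One cosmetic point: the lemma as stated in the paper says ``dimension $d$'' but concludes $E(\sigma)=q-1$; you correctly read this as dimension $q$.
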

In~\cite{anshu2022distributed} they show the following result:
\begin{theorem}[\cite{anshu2022distributed}]\label{thm:dipe_hard}
    Let $M$ be a separable measurement, then 
    \begin{align}
        \left| \E_{\phi,\psi} \left[M\cdot \Tr[\phi^{\otimes k}\otimes(\phi^{\otimes k} - \psi^{\otimes k})] \right] \right| \leq e^{k^2/d}-1\ .
    \end{align}
\end{theorem}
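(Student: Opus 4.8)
The plan is to reduce the claim to a \emph{data-hiding} estimate and then prove it by genuinely exploiting the separability of $M$ (we read the bracketed quantity as $\Tr[M\Delta]$ for the operator $\Delta$ displayed below, i.e.\ the advantage with which the separable measurement distinguishes the \YES{} and \NO{} scenarios of Definition~\ref{def:DIPE}). Since the standard POVM of Definition~\ref{def:standard_POVM} is a POVM on $\vee^{k}\mathbb{C}^{d}$, averaging a Haar-random pure state gives $\E_{\varphi}[\varphi^{\otimes k}]=\Pi_{k}/\binom{d+k-1}{k}$, where $\Pi_{k}$ is the projector onto $\vee^{k}\mathbb{C}^{d}$ (the order-$k$ analogue of Fact~\ref{fact:haar_moments}), and likewise $\E_{\varphi}[\varphi^{\otimes 2k}]=\Pi_{2k}/\binom{d+2k-1}{2k}$. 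Using $\phi^{\otimes k}\otimes\phi^{\otimes k}=\phi^{\otimes 2k}$ and independence of $\phi,\psi$ in the \NO{} case, linearity gives
\[
\E_{\phi,\psi}\!\big[\Tr[M\cdot\phi^{\otimes k}\otimes(\phi^{\otimes k}-\psi^{\otimes k})]\big]=\Tr[M\Delta],\qquad \Delta:=\frac{\Pi_{2k}}{\binom{d+2k-1}{2k}}-\frac{\Pi_{k}\otimes\Pi_{k}}{\binom{d+k-1}{k}^{2}}.
\]
Write $q_{1}=\binom{d+k-1}{k}$ and $q_{2}=\binom{d+2k-1}{2k}=\dim\vee^{2k}\mathbb{C}^{d}$. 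Both terms of $\Delta$ are supported on $W:=\vee^{k}\mathbb{C}^{d}\otimes\vee^{k}\mathbb{C}^{d}$, on which the second term is the maximally mixed state $\id_{W}/q_{1}^{2}$ while $\Pi_{2k}$ is the projector onto the sub-subspace $\vee^{2k}\mathbb{C}^{d}\subseteq W$. Hence $\Delta$ has eigenvalue $\tfrac{1}{q_{2}}-\tfrac{1}{q_{1}^{2}}>0$ on $\vee^{2k}\mathbb{C}^{d}$ and $-\tfrac{1}{q_{1}^{2}}$ on the rest of $W$, so its positive part has trace $1-q_{2}/q_{1}^{2}$. Since $q_{2}/q_{1}^{2}=\binom{2k}{k}^{-1}\prod_{j=0}^{k-1}\tfrac{d+k+j}{d+j}$ is bounded away from $1$ once $k\gtrsim\sqrt{d}$, the bound $\Tr[M\Delta]\le 1-q_{2}/q_{1}^{2}$ coming from $0\le M\le\id$ alone is a useless $\Theta(1)$; the separability of $M$ must be used in an essential way.

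For the remaining step I would pass to the partial transpose across the $\vee^{k}\,|\,\vee^{k}$ cut. First compress $M$ to $W$: since $\Delta$ lives on $W$, replacing $M$ by $(\Pi_{k}\otimes\Pi_{k})M(\Pi_{k}\otimes\Pi_{k})$ preserves both $\Tr[M\Delta]$ and separability and forces $M\le\Pi_{k}\otimes\Pi_{k}$, hence $\Tr[M]\le q_{1}^{2}$. Separability then gives $M^{T_{B}}\succeq 0$ with $\Tr[M^{T_{B}}]=\Tr[M]$, so
\[
\Tr[M\Delta]=\Tr[M^{T_{B}}\,\Delta^{T_{B}}]\le\big\|(\Delta^{T_{B}})_{+}\big\|\,\Tr[M^{T_{B}}]\le q_{1}^{2}\,\big\|(\Delta^{T_{B}})_{+}\big\|,
\]
and the same bound applies to $\id-M$ (using $\Tr[\Delta]=0$) whenever the other POVM element is separable too. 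Because $(\Pi_{k}\otimes\Pi_{k})^{T_{B}}=\Pi_{k}\otimes\Pi_{k}$, everything reduces to $(\Pi_{2k})^{T_{B}}$: expanding $\Pi_{2k}=\tfrac{1}{(2k)!}\sum_{\pi\in S_{2k}}P_{\pi}$ and transposing the last $k$ tensor factors turns each $P_{\pi}$ into a permutation on the copies $\pi$ leaves on their own side times $\swap$-type contraction operators on the $j$ copies it crosses, so $(\Pi_{2k})^{T_{B}}$ is a weighted sum of such operators indexed by $j\in\{0,\dots,k\}$. The remaining estimate is combinatorial: the $j=0$ term contributes exactly $q_{2}/q_{1}^{2}$ (cancelling the leading $1$) and the $j\ge 1$ terms are suppressed by factors of order $d^{-j}\poly(k)$, which yields the claimed $e^{k^{2}/d}-1\approx k^{2}/d$.

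I expect the real obstacle to be exactly this quantitative control of the partial-transposed permutation sum: it is a genuine data-hiding phenomenon — the trace distance between $\Pi_{2k}/q_{2}$ and $(\Pi_{k}/q_{1})^{\otimes 2}$ tends to $1$, yet no separable (indeed no PPT) measurement distinguishes them with advantage more than $e^{k^{2}/d}-1$ — and pinning down the $d^{-j}$ suppression requires care with the $\swap$-contraction operators restricted to the symmetric subspace. An alternative I would try in parallel avoids partial transposes altogether: for a product decomposition $M=\sum_{i}A_{i}\otimes B_{i}$ one has $\Tr[M\Delta]=\sum_{i}\operatorname{Cov}_{\phi}\!\big(\Tr[A_{i}\phi^{\otimes k}],\Tr[B_{i}\phi^{\otimes k}]\big)$; expanding $\Pi_{2k}|_{W}$ over coset representatives of $S_{2k}/(S_{k}\times S_{k})$ and bounding each coset's contribution by $\Tr[M(\Pi_{k}\otimes\Pi_{k})]\le q_{1}^{2}$ times a factor of order $d^{-j}\poly(k)$ again collapses to the same bound but runs into the identical combinatorial estimate. (Comparable bounds have also been derived in the literature through a reduction to approximate quantum cloning, which is another viable starting point.)
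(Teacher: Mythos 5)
The theorem you are trying to prove is not proved in this paper at all: it is imported verbatim from~\cite{anshu2022distributed}, whose argument goes through a reduction to approximate quantum cloning (as this paper's own proof sketch recounts), so your attempt has to stand on its own. It does not. Your reduction to bounding $\Tr[M\Delta]$ with $\Delta=\frac{\Pi_{2k}}{q_2}-\frac{\Pi_k\otimes\Pi_k}{q_1^2}$ is correct, but the decisive estimate is, by your own admission, left open, and the specific chain you propose provably cannot deliver the bound. Already at $k=1$ one computes $\Delta^{T_B}=\frac{1}{d+1}\kb{\Phi^+}-\frac{1}{d^2(d+1)}\id$, so $\Vert(\Delta^{T_B})_+\Vert=\frac{d-1}{d^2}\approx 1/d$ and $q_1^2\Vert(\Delta^{T_B})_+\Vert\approx d$, i.e.\ weaker than the trivial bound, while the target is $e^{1/d}-1\approx 1/d$. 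The hoped-for ``$d^{-j}$ suppression'' of crossing permutations under partial transposition is false in operator norm: partial transposition \emph{inflates} swap-type operators ($\swap^{T_B}=d\,\kb{\Phi^+}$ has operator norm $d$); any suppression is in trace norm, and pairing $\Vert\Delta^{T_B}\Vert_1$-type quantities with $M$ requires an operator-norm bound such as $\Vert M^{T_B}\Vert\leq 1$, which does not follow from separability of $M$ alone (e.g.\ the separable operator $\Pi_2\leq\id$ has $\Vert\Pi_2^{T_B}\Vert=(d+1)/2$).

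There is also a more basic structural gap: your main argument uses only that the single element $M$ is separable (hence PPT) and satisfies $0\preceq M\preceq \Pi_k\otimes\Pi_k$, and under those hypotheses alone the claimed inequality is simply false. Take $M=\Pi_{2k}=\binom{d+2k-1}{2k}\E_{\phi}[\phi^{\otimes k}\otimes\phi^{\otimes k}]$: it is separable across the $k{:}k$ cut, PPT, and at most the identity, yet $\Tr[M\Delta]=1-q_2/q_1^2$, which equals $\frac{d-1}{2d}$ already at $k=1$ (this is just the accepting element of the swap test) and is even closer to $1$ for larger $k$ in the regime of interest, vastly exceeding $e^{k^2/d}-1$. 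So the theorem genuinely concerns separable \emph{measurements} — both $M$ and $\id-M$ separable, which is exactly how the present paper invokes it in the proof of Theorem~\ref{thm:ip_lb} — and the separability of the complementary element must enter the proof in an essential way; in your write-up it appears only in a parenthetical about the absolute value, and your covariance-based alternative has the same defect. Any repair would have to either exploit $\id-M$ (e.g.\ via $\Vert M^{T_B}\Vert\leq1$ for PPT POVMs together with a trace-norm, not operator-norm, control of $(\Delta^{T_B})_+$) or follow the original cloning route, and in either case the quantitative core you identified as ``the real obstacle'' remains unproven.
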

With this we now prove our main theorem lower bound. 

\begin{proof}[Proof of lower bound of Thm~\ref{thm:ip_lb}]   Suppose there exists a separable protocol $\mathcal{A}$ that uses $k$ copies of the states $\ket{\phi}$ and $\ket{\psi}$ and the resource state $\sigma$ and returns an estimate of $| \langle \psi | \phi \rangle |^2$ to accuracy~$\varepsilon < 1/2$ with probability $\geq 3/4$. By the  anti-concentration of the Haar measure, Alice and Bob will be able to use this protocol to solve problem~\ref{def:DIPE} with high probability. To see this, let $B([\phi], \eps')$ be the ball of radius $\eps' < \pi/2$ around $\phi$ in $\mathbb{CP}(d)$ with respect to the metric $d([\phi],[\psi]) = \arccos | \langle \psi | \phi \rangle | \in [0,\pi/2]$. Then, it is known~\cite{brannan2021alice} that, under the uniform distribution, the volume of $B([\phi], \eps')$ is $\sin^{2d-2}\eps'$. Alice and Bob use $\mathcal{A}$ to estimate $\vert \langle \phi \vert \psi\rangle \vert^2$ to precision $\varepsilon$ and output $\YES$ ($\NO$) if they measure the overlap to be at least $1-\varepsilon$ (at most $\varepsilon$). Clearly this succeeds with probability at least $3/4$ in the $\YES$ instance. In the $\NO$ instance, the ball $B([\phi], \arccos{\sqrt{\eps}})$ has measure $(1-\eps)^{2d-2}$. With probability at least $\left(1-(1-\eps)^{2d-2}\right)3/4$, they successfully output $\NO$. Thus, if Alice and Bob use the protocol $\mathcal{A}$, they will solve problem~\ref{def:DIPE} with probability at least $2/3$ as well (for $d$ large enough). This implies that there is a separable measurement $\{M,\id-M\}$ such that
    \begin{align}
    \label{eq:lowerboundonLBphipsisigma}
        1/3 \leq \left| \E_{\phi,\psi}\Tr\left[M (\phi^{\otimes k} \otimes (\phi^{\otimes k} - \psi^{\otimes k})\otimes \sigma)\right] \right|. 
    \end{align}
    We now show that the RHS of the inequality above can be upper bounded by $(1+2E(\sigma))(e^{k^2/d}-1)$. To see this split $\sigma$ into $\sigma = (1+E(\sigma))\sigma^+ - E(\sigma) \sigma^-$, where $\sigma^+$ and $\sigma^-$ are both separable states. Then, it follows that 
    \begin{align}
        \Tr[M(\phi^{\otimes k}\otimes(\phi^{\otimes k}- \psi^{\otimes k})\otimes \sigma)]  = &\ (1+E(\sigma))\Tr[M(\phi^{\otimes k}\otimes(\phi^{\otimes k} -\psi^{\otimes k})\otimes \sigma^+]\notag \\
        & + E(\sigma)\Tr[M(\phi^{\otimes k}\otimes(\psi^{\otimes k} -\phi^{\otimes k})\otimes \sigma^-]\ .
    \end{align}
    Note that the sign of $\sigma^-$ has been absorbed into the trace in the second term above. Now, because $\sigma^\pm$ are separable, each term above could be replaced by a separable measurement on $\phi^{\otimes k}\otimes(\phi^{\otimes k} - \psi^{\otimes k})])$ without any reference state. To see this, decompose $\sigma^+$ into $\sigma^+ = \sum_i p_i \rho_i^A \otimes \rho_i^B$. Using the spectral decompositions $\rho_i^A = \sum_k \lambda_{i,k} \ketbra{u_{i,k}}{u_{i,k}}$ and $\rho_i^B = \sum_k \nu_{i,k} \ketbra{v_{i,k}}{v_{i,k}}$, we arrive at
    \begin{align}
        &\Tr[M(\phi^{\otimes k} \otimes (\phi^{\otimes k}- \psi^{\otimes k}) \otimes \sigma^+)]\\
        & = \sum_i p_i \sum_{j,k}\lambda_{i,k}\nu_{i,j} \Tr[M(\phi^{\otimes k} \otimes \psi^{\otimes k} \otimes \ketbra{u_{i,j}}{u_{i,j}} \otimes \ketbra{v_{i,k}}{i,k})].
    \end{align}
    As a separable measurement, we have $M = \sum_t A_t \otimes B_t$. Now define a new measurement
    \begin{align}
        M' := \sum_t \sum_i p_i \left( \sum_j \lambda_{i,j} A_t^{i,j,j} \right) \otimes \left( \sum_k \nu_{i,k} B_t^{i,k,k}\right)\ ,
    \end{align}
    where $A_t = \sum_{j,j'} A_t^{i,j,j'}\otimes \ketbra{u_{i,j}}{u_{i,j'}}$ and $B_t = \sum_{k,k'} B_t^{i,k,k}\otimes \ketbra{v_{i,k}}{v_{i,k'}}$. As a sum of positive operators, this is positive. As a convex combination of operators majorized by $\id$, this is also majorized by $\id$. It then follows that $M'$ is a separable POVM. From Thm~\ref{thm:dipe_hard} it then follows that
    \begin{align}
        | \E_{\phi, \psi}\left[ \Tr[M (\phi^{\otimes k} \otimes (\phi^{\otimes k} - \psi^{\otimes k}) \otimes \sigma^+)] \right]| & = |\E_{\phi, \psi}\left[ \Tr[M' (\phi^{\otimes k} \otimes (\phi^{\otimes k} - \psi^{\otimes k})] \right]| \notag \\
        & \leq e^{k^2/d}-1\ .
    \end{align}
    The same proof shows that
    \begin{align}
        | \E_{\phi, \psi}\left[ \Tr[M (\phi^{\otimes k} \otimes (\psi^{\otimes k} - \phi^{\otimes k}) \otimes \sigma^-)] \right]| & \leq e^{k^2/d}-1\ .
    \end{align}
    Thus, we have that
    \begin{align}
        \left| \E_{\phi, \psi}\left[ \Tr[M (\phi^{\otimes k} \otimes (\phi^{\otimes k} - \psi^{\otimes k}) \otimes \sigma)] \right] \right| & \leq (1+2E(\sigma))(e^{k^2/d}-1)\ .
    \end{align}
    Combining the above along with our lower bound in Eq.\eqref{eq:lowerboundonLBphipsisigma} shows
     \begin{align}
        1/3 \leq \left| \E_{\phi,\psi}\Tr\left[M (\phi^{\otimes k} \otimes (\phi^{\otimes k} - \psi^{\otimes k})\otimes \sigma)\right] \right|
        \leq (1+2E(\sigma))(e^{k^2/d}-1)\ .
    \end{align}
    Rearranging, this implies that
    \begin{align}
        k = \Omega\left(\sqrt{\frac{d}{\log E(\sigma)}}\right).
    \end{align}
    So, if Alice and Bob share $q$ Bell pairs, then $E(\sigma) = 2^q-1$ and we get the desired lower~bound.
\end{proof}

\subsection{Upper bound in main theorem}
We now give a protocol which achieves the promised sample complexity of Thm~\ref{thm:ip_ub}. At a high level, the protocol works by randomly projecting $\ket{\phi}$ and $\ket{\psi}$ into small-dimensional subspaces. Since such a projection maintains the inner product between two stats with high probability, Alice can simply send Bob these smaller dimensional states and Bob can perform a swap test. The reader may notice that this assumes that Alice can transmit an arbitrary state in a $q$-dimensional subspace of a $d$-dimensional space. This, however, does not lead to any issues since Alice \textit{knows} the subspace the state lies in. Then, any maximally entangled state on a $q$-dimensional subspace can be converted to a maximally entangled state on the desired subspace via a unitary transformation. From there, Alice and Bob simply perform a teleportation protocol.

\begin{theorem}
    Let $q=\Omega(\log d)$. Using $k=\Oh(\sqrt{d/q})$ copies of $\ket{\psi}$ and $\ket{\phi}$, there is a protocol that uses $\Theta(1)$ $q$-dimensional quantum messages, $\LOCC$, and returns $| \langle \psi | \phi \rangle |^2$ up to constant error with high probability
\end{theorem}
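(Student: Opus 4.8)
The plan is to design a protocol built on the Johnson--Lindenstrauss idea of random subspace projections, mimicking collision-based approaches such as in \cite{anshu2022distributed}. First I would have Alice and Bob agree on a fixed partition of $\mathbb{C}^d$ into roughly $r := d/q$ subspaces $W_1,\dots,W_r$, each of dimension $\Theta(q)$, obtained by applying a shared (pseudo)random unitary $U$ and then grouping the computational basis vectors into blocks of size $q$. For each of their $k$ copies, Alice measures the coarse-grained projective measurement $\{\Pi_{W_1},\dots,\Pi_{W_r}\}$ (in the rotated basis), records which block $j$ she landed in, and keeps the post-measurement state, which now lives in the known $q$-dimensional subspace $W_j$; Bob does the same. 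They broadcast their block labels classically and look for a \emph{collision}: a copy of $\ket{\psi}$ and a copy of $\ket{\phi}$ that both landed in the same block $W_j$. When a collision occurs, Alice teleports her (known-subspace) $q$-dimensional post-measurement state to Bob using a single $q$-dimensional maximally entangled state, and Bob runs a $\swap$ test on the two states inside $W_j$.

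Next I would do the estimator analysis. The key quantity is the probability that one fixed copy lands in block $W_j$, which is $\langle\psi|\Pi_{W_j}|\psi\rangle$; since $U$ is Haar(-ish), in expectation this is $q/d$ for each block, so among $k$ copies each of $\ket{\psi}$ and $\ket{\phi}$ the expected number of colliding pairs is $\approx k^2 q/d$, which is $\Theta(1)$ precisely when $k = \Theta(\sqrt{d/q})$. Conditioned on a collision in block $W_j$, the post-measurement states are $\Pi_{W_j}\ket{\psi}/\|\Pi_{W_j}\ket{\psi}\|$ and $\Pi_{W_j}\ket{\phi}/\|\Pi_{W_j}\ket{\phi}\|$, so the $\swap$ test sees the overlap $|\langle\phi|\Pi_{W_j}|\psi\rangle|^2 / (\|\Pi_{W_j}\ket{\psi}\|^2\|\Pi_{W_j}\ket{\phi}\|^2)$. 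I would form an estimator that reweights each collision outcome by the (classically known, since the parties measured norms) product of projection weights, so that in expectation over the random partition and the measurements one recovers $|\langle\phi|\psi\rangle|^2$ up to controlled error; concretely the random subspace preserves inner products in the Johnson--Lindenstrauss sense, $\mathbb{E}_U |\langle\phi|\Pi_{W_j}|\psi\rangle|^2 \approx \tfrac{q}{d}\cdot\tfrac{q}{d}|\langle\phi|\psi\rangle|^2$ plus lower-order fluctuation terms computed via the Haar moments in Fact~\ref{fact:haar_moments}. Repeating the whole experiment $\Theta(1)$ times (each using $\Theta(\sqrt{d/q})$ fresh copies and $\Theta(1)$ $q$-dimensional messages) and averaging drives the variance below the target constant accuracy.

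I expect the main obstacle to be the variance/concentration bound for the estimator rather than the protocol description. There are several sources of randomness entangled together --- the shared random unitary $U$, the projective measurement outcomes on both sides, and the $\swap$-test coin flips --- and collisions across different copies are not independent (the measurement outcomes of the $k$ copies of $\ket{\psi}$ are i.i.d.\ given $U$, but the reweighting factors $\|\Pi_{W_j}\ket{\psi}\|^2$ are shared). The delicate point is controlling the second moment $\mathbb{E}_U\big[(\sum_j \|\Pi_{W_j}\psi\|^2\|\Pi_{W_j}\phi\|^2)^{-1}\big]$-type quantities and the fourth-order Haar moments of $|\langle\phi|\Pi_{W_j}|\psi\rangle|^2$, showing the fluctuations are $O(q/d)$ smaller than the signal so that $\Theta(1)$ rounds suffice. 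A secondary technical point is handling the event that too few (or zero) collisions occur in a given round: I would argue via a Chernoff/Paley--Zygmund bound that with constant probability at least one collision occurs, and condition on this, absorbing the failure probability into the overall high-probability guarantee. Once these moment computations are in hand --- they are structurally the same second-moment-method calculations as in \cite{anshu2022distributed}, just carried out inside random subspaces --- the sample complexity $k=\Oh(\sqrt{d/q})$ follows, matching the lower bound of Thm~\ref{thm:ip_lb} up to constants.
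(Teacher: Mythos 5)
Your protocol is essentially the one in the paper (Figure~\ref{fig:ent_ip}): shared Haar unitary, coarse projective measurement onto $d/q$ blocks of dimension $q$, classical comparison to find collisions, teleportation of the $q$-dimensional post-measurement state, and a $\swap$ test, with the collision count $k^2q/d$ forcing $k=\Theta(\sqrt{d/q})$. However, your analysis has a genuine gap. First, the reweighting step is not implementable as described: the measurement $\{\Pi_{W_1},\dots,\Pi_{W_r}\}$ reveals only the block label, not the values $\|\Pi_{W_j}\ket{\psi}\|^2$ and $\|\Pi_{W_j}\ket{\phi}\|^2$, so these weights are \emph{not} ``classically known,'' and with only $k\approx\sqrt{d/q}$ copies spread over $d/q$ blocks they cannot be estimated empirically either. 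Second, the part you defer --- the variance/concentration of the reweighted estimator over the randomness of $U$ --- is exactly the missing core of the argument; an in-expectation Johnson--Lindenstrauss statement via \cref{fact:haar_moments} is not enough, because each round produces only $\Theta(1)$ collisions and constant repetition only helps if each round's $\swap$-test bias is already within constant error of $|\langle\psi|\phi\rangle|^2$ with high probability over $U$. You also never explain where the hypothesis $q=\Omega(\log d)$ would enter your argument, which is a sign that the needed concentration step is absent.

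The paper closes this gap in a way that removes the need for reweighting altogether: it invokes a measure-concentration fact for Haar unitaries (Fact~\ref{fact:haar_norm}) stating that $\|P_iU v\|_2\in(1\pm\Delta)\sqrt{q/d}$ except with probability $\exp(-\Omega(\Delta^2 q))$, applies it to the four vectors $\ket{\psi},\ket{\phi},\ket{\psi}-\ket{\phi},\ket{\psi}-i\ket{\phi}$, and deduces that the \emph{normalized} post-measurement states satisfy $\bigl|\,|\langle\phi_i|\psi_i\rangle|^2-|\langle\phi|\psi\rangle|^2\bigr|\le 16\Delta$ simultaneously for all $d/q$ blocks, via a union bound --- this union bound is precisely where $q=\Omega(\log d)$ is used. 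Conditioned on this high-probability event, every $\swap$ test is a coin with bias within $O(\Delta)$ of the target, so the unadjusted average $2s/m-1$ with $m=O(1)$ collisions already achieves constant accuracy, and the same norm bounds give the collision-rate lower bound $k^2(1-\Delta)^4 q/d$. To repair your write-up you would either adopt this high-probability, per-block concentration argument, or find a different estimator whose bias and variance can be controlled without access to the unobservable projection weights; as it stands, the proposal's estimator cannot be computed from the data the protocol actually produces.
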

\begin{proof}
    Let $\psi = \ketbra{\psi}{\psi}$ and $\phi = \ketbra{\phi}{\phi}$. Fix some unitary $U$. For convenience of notation we will assume that $q$ divides $d$. Divide the $d$-dimensional Hilbert space into $d/q$ subspaces each of dimension $q$. Let this decomposition be given by a collection of orthogonal projectors $\{P_i\}_{i=1}^{d/q}$. Alice and Bob perform the protocol in Figure~\ref{fig:ent_ip}.
   
    \begin{figure}[!ht]
    \begin{tcolorbox}
    \textbf{\emph{Input}}: Alice gets $\ket{\psi}^{\otimes k}$, Bob gets $\ket{\phi}^{\otimes k}$\\
    \textbf{\emph{Output}}: $\varepsilon$-approximation of $|\langle \psi|\phi\rangle|^2$
  \begin{enumerate}
        \item Using shared randomness, Alice and Bob sample a unitary $U$ from Haar~measure.
        \item Alice and Bob apply the POVM $\{U P_i U^\dagger \}_{i=1}^{\lceil d/q \rceil}$ on each of the $k$ copies and record which copies were projected onto which subspaces.
        \item Using a two-way classical communication they determine which copies lie now in the same subspaces. Alice sends Bob a constant number of such projected states which can be paired with a post-projection state of Bob's. Say the number of such pairs is~$m$.
        \item Bob performs a $\mathsf{SWAP}$ test between the post-projected pairs of states lying in the same subspaces. We say a $\mathsf{SWAP}$ test succeeds if Bob measures $\ket{0}$ on the ancilla register. Let the number of successes be $s$.
        \item Bob declares the inner product between $\ket{\psi}$ and $\ket{\phi}$ to be $2s/m-1$.
    \end{enumerate}
    \end{tcolorbox}
    \caption{Protocol to estimate inner product using shared entanglement.}
    \label{fig:ent_ip}
    \end{figure}
    
Our theorem is witnessed by repeating the protocol in Figure~\ref{fig:ent_ip} constantly many times, each time sending a $q$-dimensional message. To see the correctness, we will require the following~fact.
    \begin{fact}[{\cite[Fact~2]{sen2018quantum}}]\label{fact:haar_norm}
       Let $1\leq q\leq d$. Let $v\in \mathbb{C}^d$ be a unit vector and $P_i$ be an arbitrary projector onto a $q$-dimensional subspace. Let $U$ be drawn from the unitary Haar measure.~Then,
        \begin{align}
            \Pr_U\left[ \Vert P_i U  v \Vert_2 \not\in (1\pm \Delta) \sqrt{\frac{q}{d}} \right] \leq 4\exp\left\{ -\frac{\Delta^2 q}{16} \right\},
        \end{align}
    \end{fact}
    For a unitary $U$, let $\ket{\psi_i} = \frac{P_i U\ket{\psi}}{| P_i U\ket{\psi}|}$ and $\ket{\phi_i} = \frac{P_i U\ket{\phi}}{| P_i U\ket{\phi}|}$ be the normalized projections of $\ket{\psi}$ and $\ket{\phi}$ into the subspace given by $P_i$. We will now show that, with high probability over the choice of $U$, $| \langle \psi_i | \phi_i \rangle |^2 \approx | \langle \psi | \phi \rangle |^2$.

    \begin{lemma}
        Let $\Delta > 0$. Then, 
        \begin{align}
           \Pr_U \Big[\left| \ | \langle \phi_i' | \psi_i' \rangle |^2 - | \langle \phi | \psi \rangle |^2 \ \right| \leq 16\Delta\ \text{ for every }i \in [d/q]\Big]\geq 1-O\big(d/q\cdot \exp(-\Delta^2 q/16)\big).
        \end{align}
    \end{lemma}
    \begin{proof}
        Using Fact~\ref{fact:haar_norm}, with probability at least $1-16\exp\{-\frac{\Delta^2 q}{16}\}$, the following four conditions~hold
        \begin{align}
            \| P_i U \ket{\psi} \|_2,   \| P_i U \ket{\phi} \|_2 & \in (1\pm\Delta)\sqrt{\frac{q}{d}}\ ,\\
            \| P_i U (\ket{\psi} - \ket{\phi}) \|_2 & \in (1\pm\Delta)\sqrt{\frac{q}{d}}\| \ket{\psi} - \ket{\phi}\|_2\ ,\\
            \| P_i U (\ket{\psi} - i\ket{\phi}) \|_2 & \in (1\pm\Delta)\sqrt{\frac{q}{d}}\| \ket{\psi} - i\ket{\phi}\|_2\ .
    \end{align}
    We now follow steps similar to those of~\cite[Theorem 1]{sen2018quantum} to obtain
    \begin{align}
        \vert \langle \phi_i | \psi_i \rangle - \langle \psi | \phi \rangle \vert \leq 8\Delta\ .
    \end{align}
    Let $x, y \in \mathbb{C}$ be such that $| x- y | \leq \Delta$. Then,
    \begin{align}
        | | x |^2 - | y |^2 | & = | (| x | - | y |) (| x | + | y |) | \leq \Delta (| x | + | y |)\ .
    \end{align}
    In our case, this implies that
    \begin{align}
        | \langle \phi_i | \psi_i \rangle |^2 \in | \langle \psi | \phi \rangle |^2 \pm 16\Delta\ . 
    \end{align}
    Taking a union bound over all $d/q$ subspaces completes the proof.
    \end{proof}
Now, after receiving $m$ pairs of post-projected states, Bob then performs $m$ many $\swap$ tests. A $\swap$ test on the pair $\ket{\psi'_i}$ and $\ket{\phi'_i}$ succeeds with probability $\frac{1}{2}+\frac{1}{2}| \langle {\phi'_i} | {\psi'_i} \rangle |^2 \in \frac{1}{2} + \frac{1}{2}| \langle \phi | \psi \rangle | \pm 8\Delta$. The expected value of the sample average $s/m$ then satisfies
    \begin{align}
        \left | \E\left[ \frac{s}{m} \right] - \frac{1}{2}-\frac{1}{2}| \langle \psi | \phi \rangle |^2 \right| \leq 8\Delta,
    \end{align}
which implies
   $     2\E\left[ \frac{s}{m} \right]-1 \in | \langle \phi | \psi \rangle|^2 \pm 16\Delta,
    $ and by a Hoeffding bound, we get
    \begin{align}
        2\left| \frac{s}{m} - \E\left[ \frac{s}{m} \right] \right| \leq \delta\ ,
    \end{align}
    with probability at least $1-2\exp(-2m\delta^2)$. In this case, the total error of the estimator is at most $16\Delta + \delta$. For constant error, $m=O(1)$ suffices. 
    
     It remains to determine $k$ needs to be to obtain $m=O(1)$ projected pairs with high probability.
To this end, let $E_{a,b}$ be an indicator variable for if Alice's $a$th projection falls into the same subspace as Bob's $b$th projection. The expected number of collisions is given~by 
     \begin{align}
          \sum_{a,b}\E[E_{a,b}] = k^2 \sum_i \Tr[P_i U \psi U^\dagger] \Tr[P_i U \phi U^\dagger] & \geq k^2 (1-\Delta)^4\frac{q}{d}\ ,
     \end{align}
     where the inequality follows from $\| P_i U \ket{\psi}\|, \| P_i U \ket{\phi}\| \geq (1-\Delta)\sqrt{\frac{q}{d}}$. By a Hoeffding bound, it suffices to take $k=\Omega(\sqrt{d/q})$ to obtain a collision with high probability. As we only require a constant number of pairs, this can simply be repeated a constant number of times to obtain a constant number of collisions with high probability. Thus, the protocol succeeds for $k=\Omega(\sqrt{d/q})$.
  It remains to pick $\Delta$ and $\delta$ such that the claims go through: we require that $\frac{d}{q}\exp(-{\Delta^2 q}/{16}) = O(1)$, 
  which can be achieved by letting $\Delta=O(1)$ and~$q=\Omega(\log d)$.
    \end{proof}

\subsection{Bounds Depending on $\eps$}

\begin{theorem}\label{thm:ip_lb_eps}
Suppose Alice and Bob share a $q$-qubit entangled state, can perform measurements on $k$ copies of their respective $d$-dimensional states $\ket{\psi}$ and $\ket{\phi}$ respectively, and engage in unbounded classical communication. If they are able to estimate $| \langle \psi | \phi \rangle |^2$ to accuracy $\varepsilon$ with high probability,~then $k\geq \Omega(\max\{1/\varepsilon^2,\sqrt{d/q}/\varepsilon\})$.
\end{theorem}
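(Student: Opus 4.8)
The plan is to prove the two bounds $k=\Omega(1/\eps^2)$ and $k=\Omega(\sqrt{d/q}/\eps)$ separately; since the theorem asks for their maximum, and since $1/\eps^2\ge\sqrt{d/q}/\eps$ whenever $\eps\le\sqrt{q/d}$, it suffices to establish the first bound in general and the second only in the regime $\sqrt{q/d}\le\eps\le 1/8$ (for $\eps>1/8$ the quantity $1/\eps^2=O(1)$ is vacuous).

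\emph{The $\Omega(1/\eps^2)$ bound.} This should hold with no locality restriction and even with unbounded quantum communication, so I would reduce it to two-state discrimination on a single register. Fix $\ket{\psi}=\ket{0}\in\mathbb C^2\subseteq\mathbb C^d$ and set $\ket{\phi_\pm}=\sqrt{\tfrac12\pm2\eps}\,\ket{0}+\sqrt{\tfrac12\mp2\eps}\,\ket{1}$, so $|\langle\psi|\phi_\pm\rangle|^2=\tfrac12\pm2\eps$ and an $\eps$-accurate estimator of $|\langle\psi|\phi\rangle|^2$ tells the two cases apart (threshold $\tfrac12$). Since $\ket{\psi}$ and the shared resource $\sigma$ are identical in both cases, the protocol's output is a measurement of $\ket{0}^{\otimes k}\!\bra{0}^{\otimes k}\otimes\phi_\pm^{\otimes k}\otimes\sigma$, so its success probability is at most $\tfrac12\bigl(1+\tfrac12\|\phi_+^{\otimes k}-\phi_-^{\otimes k}\|_1\bigr)=\tfrac12\bigl(1+\sqrt{1-|\langle\phi_+|\phi_-\rangle|^{2k}}\bigr)$; since $|\langle\phi_+|\phi_-\rangle|=\sqrt{1-16\eps^2}$, success probability $\ge 2/3$ forces $(1-16\eps^2)^k\le 8/9$, i.e.\ $k=\Omega(1/\eps^2)$.

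\emph{The $\Omega(\sqrt{d/q}/\eps)$ bound.} Here I would mirror the proof of Theorem~\ref{thm:ip_lb}, using an $\eps$-sensitive hard instance in place of Definition~\ref{def:DIPE}: let $\ket{\psi}$ be Haar random on $\mathbb C^d$, let $\ket{\psi^\perp}$ be Haar random on the orthogonal complement of $\ket{\psi}$, and take
\[ \YES:\ \ket{\phi}=2\sqrt\eps\,\ket{\psi}+\sqrt{1-4\eps}\,\ket{\psi^\perp}\ \ (\text{so }|\langle\psi|\phi\rangle|^2=4\eps),\qquad \NO:\ \ket{\phi}\ \text{independent Haar on }\mathbb C^d \]
(so $|\langle\psi|\phi\rangle|^2\le\eps$ with high probability, using $\eps\ge\sqrt{q/d}$ and anti-concentration of the Haar measure). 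An $\eps$-accurate estimator distinguishes these (gap $\ge 3\eps$), and, exactly as in the proof of Theorem~\ref{thm:ip_lb}, the induced $\LOCC$ decision procedure yields a separable POVM element $M$ with $\Omega(1)\le|\E\,\Tr[M(\rho_{\YES}-\rho_{\NO})]|$, where $\rho_{\YES}=\E_{\psi,\psi^\perp}[\psi^{\otimes k}\otimes\phi^{\otimes k}]\otimes\sigma$ and $\rho_{\NO}=\E_\psi[\psi^{\otimes k}]\otimes\E_\phi[\phi^{\otimes k}]\otimes\sigma$. Decomposing $\sigma=(1+E(\sigma))\sigma^+-E(\sigma)\sigma^-$ through its robustness of entanglement and absorbing $\sigma^\pm$ into the measurement exactly as in that proof, the task reduces to the following $\eps$-dependent strengthening of Theorem~\ref{thm:dipe_hard}: for every separable POVM element $M'$,
\[ \bigl|\,\E_{\psi,\psi^\perp}\Tr[M'\,\psi^{\otimes k}\otimes\phi^{\otimes k}]-\E_\psi\Tr[M'\,\psi^{\otimes k}\otimes\E_\phi[\phi^{\otimes k}]]\,\bigr|\ \le\ e^{Ck^2\eps^2/d}-1 \]
for an absolute constant $C$. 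Granting this, the chain $\Omega(1)\le(1+2E(\sigma))(e^{Ck^2\eps^2/d}-1)$, together with the bound on $E(\sigma)$ from Lemma~\ref{lem:RE}, rearranges to $k=\Omega(\sqrt{d/q}/\eps)$; combined with the previous paragraph this gives the claimed $k=\Omega(\max\{1/\eps^2,\sqrt{d/q}/\eps\})$.

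\emph{Main obstacle.} The crux is the displayed $\eps$-dependent moment bound, and in particular the appearance of $\eps^2$ (not $\eps$) in the exponent, which is precisely what upgrades $\sqrt{d/q}$ to $\sqrt{d/q}/\eps$. Unlike in Theorem~\ref{thm:dipe_hard}, where Bob holds copies of the \emph{same} Haar-random state as Alice (so the $\YES$ state lives in a single symmetric subspace), here his copies are only $\eps$-correlated with $\ket{\psi}$: one must expand $\phi^{\otimes k}$ according to how many of its $k$ tensor factors fall on the $\ket{\psi}$ component versus the $\ket{\psi^\perp}$ component, average first over $\ket{\psi^\perp}$ (Haar on the $(d-1)$-dimensional complement) and then over $\ket{\psi}$, obtaining a weighted sum of symmetric-subspace contributions rather than a single one. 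I expect one can show, following the permutation/Weingarten bookkeeping of~\cite{anshu2022distributed}, that after pairing against a separable $M'$ every index link between Alice's and Bob's registers costs a factor $\sqrt\eps$ in each of the ket- and bra-permutations in addition to the usual $1/d$ Haar suppression, so that contributions with $j$ such links are $O((\eps/d)^j)$; resumming then yields $e^{O(k^2\eps^2/d)}-1$. Alternatively, this inequality may be taken as the natural $\eps$-parametrized form of the lower-bound lemma of~\cite{anshu2022distributed}, in which case only the robustness-of-entanglement reduction above and the elementary $\Omega(1/\eps^2)$ argument remain to be supplied.
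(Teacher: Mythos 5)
Your $\Omega(1/\eps^2)$ argument is fine (it is essentially the two-point fidelity argument behind Lemma~13 of~\cite{anshu2022distributed}, which is what the paper cites), and your idea of reusing the robustness-of-entanglement splitting from the proof of Theorem~\ref{thm:ip_lb} to absorb the shared state $\sigma$ is exactly the right mechanism for the $q$-dependence. The gap is in the crux you yourself flag: the conjectured $\eps$-dependent strengthening of Theorem~\ref{thm:dipe_hard}, with $e^{Ck^2\eps^2/d}-1$ on the right-hand side, is \emph{false for the hard ensemble you chose}. In your $\YES$ instance only Bob's state carries the small-weight correlated component: averaging over $\ket{\psi^\perp}$ gives $\E[\phi\,|\,\psi]=4\eps\,\psi+(1-4\eps)\frac{\id-\psi}{d-1}$, so the $\YES$/$\NO$ difference of the averaged joint state is of size $\Theta(\eps/d)$ per Alice--Bob pair of copies (one factor $\eps$, not two), and the permutation bookkeeping can only give an advantage bound of the form $e^{Ck^2\eps/d}-1$. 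That yields $k=\Omega(\sqrt{d/(q\eps)})$, not the claimed $\Omega(\sqrt{d/q}/\eps)$. Indeed your decision problem is genuinely easier: with $k=\Theta(\sqrt{d/\eps})$ copies, Alice and Bob can each apply the standard POVM on the symmetric subspace to get classical descriptions $\ket{u},\ket{v}$ with $|\langle u|\psi\rangle|^2,|\langle v|\phi\rangle|^2\approx k/d$, and in the $\YES$ case $|\langle u|v\rangle|^2$ acquires a shift of order $k^2\eps/d^2$, which already exceeds the $O(1/d)$ noise floor of the $\NO$ case; this $\LOCC$ test solves your instance with $O(\sqrt{d/\eps})$ copies, so no argument can extract $\Omega(\sqrt d/\eps)$ from it.

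The fix, and what the paper does (following~\cite[Section~5.3]{anshu2022distributed}; the ensemble is reproduced in Definition~\ref{def:ip_decision}), is to make \emph{both} states consist of a fixed bulk $\sqrt{1-\eps}\,e^{i\theta}\ket{0}$ with independent uniformly random phases plus a $\sqrt\eps$-amplitude Haar component, correlated in $\YES$ and independent in $\NO$. The random phases kill the bulk cross terms, so each party's correlated component has density-matrix weight $\eps$, each Alice--Bob link costs $\eps^2/d$, and the separable-measurement advantage is bounded by the $\eps$-parametrized analogue of Theorem~\ref{thm:dipe_hard} with $k^2\eps^2/d$ in the exponent; moreover an $\eps$-accurate estimator still distinguishes $\YES$ from $\NO$ because the overlap $|(1-\eps)e^{i\vartheta}+\eps|^2$ fluctuates by $\Theta(\eps)$ with the relative phase (this is the acceptance-window argument the paper spells out in the proof of Lemma~\ref{lem:bil_lb}). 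Plugging that ensemble into your robustness-of-entanglement reduction recovers $k=\Omega(\sqrt{d/q}/\eps)$; with your chosen ensemble the claimed bound cannot be reached.
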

\begin{proof}
    The lower bound of $\Omega(1/\eps^2)$ follows from lemma 13 of \cite{anshu2022distributed}. For the other lower bound, we use similar techniques to~\cite[Section~5.3]{anshu2022distributed}, wherein they showed how to use standard concentration techniques to reduce the decision version of DIPE (for which we have a $\Omega(\sqrt{d/q})$ lower bound) to prove a lower bound for estimating the inner product between $\ket{\phi}$ and $\ket{\psi}$ with a factor of $1/\varepsilon$.  In particular, they prove that if there exists a protocol that solves the $\varepsilon$-version of inner product estimation then it can be used to solve $\DIPE$ with high probability. Then, the lower bound on $\DIPE$ kicks in from their reduction.
\end{proof}

For an upper bound, we simply note that Alice and Bob can repeat the protocol in figure~\ref{fig:ent_ip} $\Theta(1/\eps^2)$ times to receive an $\eps$-approximation. However, $q$ must be sufficiently large such that the inner products are maintained with high probability to error $\Oh(\eps)$. A simple modification of the prior argument shows that $q=\Omega((\log d)/\eps^2)$ suffices.

\begin{theorem}
    Let $q=\Omega(\log d / \eps^2)$. Using $k=\Oh(\sqrt{d/q}/\eps^2)$ copies of $\ket{\psi}$ and $\ket{\phi}$, there is a protocol that uses $\Theta(1/\eps^2)$ $q$-dimensional quantum messages, $\LOCC$, and returns $| \langle \psi | \phi \rangle |^2$ up to error $\eps$ with high probability.
\end{theorem}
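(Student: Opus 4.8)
The plan is to run the protocol of Figure~\ref{fig:ent_ip} as a subroutine $\Theta(1/\eps^2)$ times, tightening every ``constant'' in its analysis to an $O(\eps)$ quantity, and to aggregate all the resulting $\swap$-test outcomes into a single estimator. First I would revisit the concentration lemma with the parameter $\Delta$ set to $\Theta(\eps)$ rather than a constant. Fact~\ref{fact:haar_norm} together with a union bound over the $d/q$ subspaces gives that, in a single run, all the normalized projected overlaps $|\langle\phi_i'|\psi_i'\rangle|^2$ lie within $16\Delta = O(\eps)$ of $|\langle\phi|\psi\rangle|^2$ except with probability $O\big((d/q)\exp(-\Delta^2 q/16)\big)$; to make this $o(\eps^2)$, so that a further union bound over the $\Theta(1/\eps^2)$ runs is harmless, it suffices that $q = \Omega(\log d/\eps^2)$, which is exactly the hypothesis.

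Next I would fix the per-run copy count at $k' = \Theta(\sqrt{d/q})$, exactly as in the proof of Theorem~\ref{thm:ip_ub}: the expected number of Alice--Bob collisions in a single run is $k'^2\sum_i \Tr[P_i U\psi U^\dagger]\Tr[P_i U\phi U^\dagger]\ge k'^2(1-\Delta)^4 q/d = \Omega(1)$, so by a Hoeffding bound each run produces at least one usable pair with high probability. For each such pair Alice teleports her $q$-dimensional projected state to Bob using one $q$-dimensional maximally entangled state — this is legitimate since Alice knows the subspace $UP_iU^\dagger$ and can rotate a standard maximally entangled state into it — and Bob runs a $\swap$ test. Over all $R = \Theta(1/\eps^2)$ runs this consumes $k = R\cdot k' = \Theta(\sqrt{d/q}/\eps^2)$ copies and $\Theta(1/\eps^2)$ $q$-dimensional messages, as claimed.

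Finally I would combine the $m = \Theta(1/\eps^2)$ $\swap$-test bits, say $s$ successes among them, from all runs into the estimator $2s/m - 1$. Conditioned on the good concentration event, each bit is $1$ with probability $\tfrac12 + \tfrac12|\langle\phi_i'|\psi_i'\rangle|^2 \in \tfrac12 + \tfrac12|\langle\phi|\psi\rangle|^2 \pm 8\Delta$, so $\E[2s/m-1] \in |\langle\phi|\psi\rangle|^2 \pm 16\Delta$; a Hoeffding bound over the $m$ essentially independent bits gives $|s/m - \E[s/m]| = O(\eps)$ with high probability, for a total error $O(\Delta) + O(\eps) = O(\eps)$, which becomes $\le\eps$ after adjusting constants.

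The step I expect to be the real obstacle — and the only place where the argument genuinely differs from the constant-error case — is controlling the approximation error of the random projections at scale $\eps$ rather than at constant scale: pushing $\Delta$ down to $\Theta(\eps)$ forces the $q$-dependence in Fact~\ref{fact:haar_norm} to blow up by a factor $1/\eps^2$, which is precisely why $q = \Omega(\log d/\eps^2)$ (and not $\Omega(\log d)$) is needed, and one has to check that the per-run failure probability is small enough to survive a union bound across all $\Theta(1/\eps^2)$ repetitions. Everything else — the collision count, the teleportation of a state supported on a $q$-dimensional subspace, and the final Hoeffding aggregation — is a routine rescaling of the constant-error analysis.
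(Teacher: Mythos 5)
Your proposal matches the paper's own argument: repeat the constant-error protocol of Figure~\ref{fig:ent_ip} $\Theta(1/\eps^2)$ times, tighten the projection parameter to $\Delta = \Theta(\eps)$ so that the union bound over the $d/q$ subspaces forces $q = \Omega(\log d/\eps^2)$, and aggregate the $\swap$-test outcomes with a Hoeffding bound. The paper states this only as a brief rescaling remark, and your write-up fills in the same steps (per-run collision count, subspace teleportation, final aggregation) without deviating from that route.
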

\begin{proof}
    To obtain an $\varepsilon$-approximation, $1/\varepsilon^2$ rounds of the $\swap$ test are necessary and sufficient by a Hoeffding bound. Similarly, it suffices for Alice and Bob to project onto random subspace $\Oh(\sqrt{d/q}/\varepsilon^2)$ times. We require that $\Delta$ as in the prior proof is on the order of $\Oh(\eps)$. Then, for the union bound argument to go through, we take $q=\Omega(\log d/ \eps^2)$.
\end{proof}

    \section{Generalized Distributed Inner Product Estimation}
    Now we turn to bilinear forms $f: \mathbb{C}^d \otimes \mathbb{C}^d \rightarrow \mathbb{C}$. Any such form can be expressed as $f(u,v) = u^\dagger M v$ for a matrix $M \in \mathbb{C}^{d \times d}$. Here we will assume that $M$ is Hermitian, which implies that $f(u,v) = \overline{f(v,u)}$.  Without loss of generality we will assume $M$ to be normalized such that $\Vert M \Vert = 1$. Due to the unphysical nature of global phases, $f(\ket{\psi}, \ket{\phi})$ is less meaningful than $| f(\ket{\psi}, \ket{\phi}) |^2$ and we concern ourselves entirely with this value instead. Of course, the special case of $M=\id$ corresponds to inner product estimation. Fixing such an $M$, let $P_\varepsilon$ be a projector which annihilates all eigenspaces of $M$ of eigenvalue of magnitude less than $\varepsilon/2$. We define $d_\varepsilon$ to be the dimension of the support of $P_\varepsilon$. Now define $M_\varepsilon = P_\varepsilon M P_\varepsilon$. The intuition behind this is that we can drop eigenspaces with small weights in estimating $\vert \langle \phi \vert M \vert \psi \rangle \vert^2$. The following lemma formalizes this.

    \begin{lemma} 
        Let $M$ be a Hermitian operator and $M_\varepsilon = P_\varepsilon M P_\varepsilon$, then 
        \begin{align}
            \left\vert \vert \langle \phi \vert M \vert \psi \rangle \vert^2 - \vert \langle \phi \vert M_\varepsilon \vert \psi \rangle \vert^2 \right\vert \leq \varepsilon/2\ .
        \end{align}
    \end{lemma}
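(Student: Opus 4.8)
The plan is to decompose $M$ along the spectral projector $P_\varepsilon$ and then estimate the resulting cross term more carefully than a naive triangle inequality would allow.

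First I would observe that, since $M$ is Hermitian and $P_\varepsilon$ projects onto a union of eigenspaces of $M$, the projector commutes with $M$; hence the off-diagonal blocks vanish and $M = M_\varepsilon + M_{<\varepsilon}$, where $M_{<\varepsilon} := (\Id - P_\varepsilon)M(\Id - P_\varepsilon)$ is Hermitian and, by definition of $P_\varepsilon$, satisfies $\|M_{<\varepsilon}\| < \varepsilon/2$ (its eigenvalues are exactly those eigenvalues of $M$ of magnitude below $\varepsilon/2$). Writing $a := \langle\phi|M_\varepsilon|\psi\rangle$ and $b := \langle\phi|M_{<\varepsilon}|\psi\rangle$ we get $\langle\phi|M|\psi\rangle = a+b$, so
\[
\bigl| |\langle\phi|M|\psi\rangle|^2 - |\langle\phi|M_\varepsilon|\psi\rangle|^2 \bigr| = \bigl| |a+b|^2 - |a|^2 \bigr| = \bigl| 2\operatorname{Re}(\bar a b) + |b|^2 \bigr| \le 2|a|\,|b| + |b|^2 .
\]

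The subtlety is that the crude bounds $|a| \le 1$ and $|b| \le \varepsilon/2$ only yield $\varepsilon + \varepsilon^2/4$, which is about a factor of two too large. To sharpen this I would use that $M_\varepsilon$ is supported on $\operatorname{range}(P_\varepsilon)$ while $M_{<\varepsilon}$ is supported on $\operatorname{range}(\Id-P_\varepsilon)$. Put $p := \|P_\varepsilon\ket{\phi}\|$, $\bar p := \|(\Id-P_\varepsilon)\ket{\phi}\|$, $r := \|P_\varepsilon\ket{\psi}\|$ and $\bar r := \|(\Id-P_\varepsilon)\ket{\psi}\|$. Then $|a| \le \|M_\varepsilon\|\,p\,r \le p\,r$ and $|b| \le \|M_{<\varepsilon}\|\,\bar p\,\bar r \le \tfrac{\varepsilon}{2}\,\bar p\,\bar r$. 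Since $\ket{\phi}$ and $\ket{\psi}$ are unit vectors, $p^2+\bar p^2 = r^2+\bar r^2 = 1$, so AM--GM gives $p\bar p \le \tfrac12$ and $r\bar r \le \tfrac12$; hence $2|a|\,|b| \le 2\cdot\tfrac{\varepsilon}{2}\cdot(p\bar p)(r\bar r) \le \tfrac{\varepsilon}{4}$ and $|b|^2 \le \tfrac{\varepsilon^2}{4}\,\bar p^2\,\bar r^2 \le \tfrac{\varepsilon^2}{4}$. Substituting and using $\varepsilon < 1$ gives the claimed bound $\varepsilon/4 + \varepsilon^2/4 \le \varepsilon/2$.

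The one point that requires care --- the main obstacle --- is exactly this last sharpening: one must exploit the complementarity $p^2+\bar p^2 = r^2+\bar r^2 = 1$, which says that whenever $\ket{\phi}$ or $\ket{\psi}$ puts large weight on the discarded low-eigenvalue subspace (so that the error term $|b|$ is potentially large) it necessarily has small weight on $\operatorname{range}(P_\varepsilon)$, forcing $|a|$ down. Everything else is a routine computation.
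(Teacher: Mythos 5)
Your proof is correct, but it takes a genuinely different route from the paper. The paper's argument vectorizes: writing $|\langle\phi|M|\psi\rangle|^2=\Tr[M\psi M\phi]$, it applies the identity $\mathrm{vec}(AXB)=(A\otimes B^{\top})\mathrm{vec}(X)$ and Cauchy--Schwarz to reduce the claim to the single operator-norm bound $\Vert M\otimes M^{\top}-M_\varepsilon\otimes M_\varepsilon^{\top}\Vert\le\varepsilon/2$, which follows because in the joint eigenbasis the difference has entries $\lambda_i\lambda_j$ only when at least one eigenvalue has magnitude below $\varepsilon/2$ (and $\Vert M\Vert\le 1$). You instead split $M=M_\varepsilon+M_{<\varepsilon}$ using $[P_\varepsilon,M]=0$, expand the square, and control the cross term $2|a||b|$ via the weight complementarity $p\bar p,\,r\bar r\le 1/2$, which is exactly the same structural fact (the error only involves components in the discarded low-eigenvalue subspace) packaged at the level of states rather than operators. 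Your version is more elementary --- no vec map or tensor-product norms --- at the cost of the residual $\varepsilon^2/4$ term, so you need $\varepsilon\le 1$ to land at $\varepsilon/2$; this is harmless here since the estimated quantity lies in $[0,1]$, whereas the paper's bound holds uniformly in $\varepsilon$. All the individual steps you use ($|a|\le pr$, $|b|\le\tfrac{\varepsilon}{2}\bar p\bar r$, AM--GM) are sound, so there is no gap.
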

    \begin{proof}
        \begin{align}
            | \Tr[M\psi M \phi] - \Tr[M_\varepsilon \psi M_\varepsilon \phi] | & = | \Tr[\psi (M\phi M - M_\varepsilon \phi M_\varepsilon)] | \\
            & = |\text{vec}(\psi)^\dagger (M\otimes M^\top - M_\varepsilon\otimes M_\varepsilon^\top)\text{vec}(\phi) | \\ 
            & \leq \Vert M\otimes M - M_\varepsilon\otimes M_\varepsilon \Vert\leq \varepsilon /2\ ,
        \end{align}
        where the second line uses the linear map $\text{vec}: \mathcal{B}(\mathcal{H}) \rightarrow \mathcal{H}\otimes \mathcal{H}$ defined by $\text{vec}(\ket{i}\bra{j}) = \ket{i}\otimes\ket{j}$ and the identity $\text{vec}(AXB) = (A\otimes B^\top) \text{vec}(X)$. The third line follows from the transpose map not changing the eigenvalues of a Hermitian operator.
    \end{proof}
 Using this lemma, if Alice and Bob can estimate $ \vert \langle \phi \vert M_\varepsilon \vert \psi \rangle \vert^2$ up to precision $\varepsilon/2$, that directly implies an $\varepsilon$-approximation to the quantity $| \langle \phi \vert M \vert \psi \rangle |^2$. For the remainder of this section, we will be primarily focused on upper and lower bounds for estimating $ \vert \langle \phi \vert M_\varepsilon \vert \psi \rangle \vert^2$.

With this notation, in this section our main result will be the following theorem, proving close-to-optimal upper and lower bounds for estimating bilinear forms on $\ket{\psi},\ket{\phi}$.

    \begin{theorem}\label{thm:blf_ub}
      For every $M$ with $\|M\|\leq 1$, with only $\LOCC$, it is sufficient to obtain 
      $$
      k=O(\max \{1/\eps^2, \| M_\eps \|_2/\eps\})
      $$ copies of $\ket{\psi},\ket{\phi}$  to estimate $| \langle \phi | M | \psi \rangle |^2$ to error $\varepsilon$ with high prob.~and it necessary to~obtain 
      $$
      k=\Omega(\max \{1/\eps^2, \| M_\eps \|_2/\sqrt{\eps}\})
      $$ copies to produce an $\varepsilon$-approximation to $| \langle \phi | M | \psi \rangle |^2$.
    \end{theorem}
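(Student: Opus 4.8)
\textbf{Upper bound.} The plan is to reduce to the support of $M_\varepsilon$ and then run a variant of the $\LOCC$ inner-product protocol weighted by $M_\varepsilon$. First I would have Alice and Bob each apply the two-outcome POVM $\{P_\varepsilon, \id - P_\varepsilon\}$ to each of their $k$ copies; by the preceding lemma it suffices to estimate $|\langle\phi|M_\varepsilon|\psi\rangle|^2$ to error $\varepsilon/2$, so we may assume both states live in the $d_\varepsilon$-dimensional support $W$ of $P_\varepsilon$ (conditioning on this event, whose probability is controlled by the component weights, and renormalizing). Next, on $W$, I would mimic the estimator of~\cite{anshu2022distributed}: Alice and Bob apply the standard POVM on the symmetric subspace to their post-projection copies, obtaining classical descriptions of Haar-random vectors whose overlap statistics encode $\langle\phi|M_\varepsilon|\psi\rangle$; combining the two outcomes they form an unbiased (or nearly unbiased) estimator of $\Tr[M_\varepsilon \psi M_\varepsilon \phi]$. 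The key quantitative step is a mean/variance computation for this estimator using Fact~\ref{fact:haar_moments}: one writes the second moment in terms of $\Tr[M_\varepsilon^2]$, $\Tr[M_\varepsilon^2]^2$, $\Tr[M_\varepsilon^4]$, etc., and bounds every term by $\|M_\varepsilon\|_2^2 = \Tr[M_\varepsilon^2]$ using $\|M_\varepsilon\|\le 1$ (so $\Tr[M_\varepsilon^4]\le\Tr[M_\varepsilon^2]$ and $d_\varepsilon \le 4\|M_\varepsilon\|_2^2/\varepsilon^2$ since each surviving eigenvalue has magnitude $\ge\varepsilon/2$). This yields variance $O(\max\{1, \|M_\varepsilon\|_2^2/\varepsilon^2\}/k)$ per round after suitable normalization, so $k = O(\max\{1/\varepsilon^2, \|M_\varepsilon\|_2/\varepsilon\})$ copies drive the error below $\varepsilon$ by Chebyshev plus median amplification. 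The $1/\varepsilon^2$ term is the usual additive-estimation floor even when $M_\varepsilon$ is trivial.

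\textbf{Lower bound.} Here the plan is to embed a hard instance of plain $\DIPE$ (Definition~\ref{def:DIPE}) into the eigenspaces of $M$ and invoke Theorem~\ref{thm:ip_lb_eps}. Choose a subspace $V$ of dimension $d_\varepsilon$ on which $M$ acts, after restriction, close to a scalar multiple of the identity: concretely, pick an eigenspace (or union of eigenspaces of nearly equal eigenvalue $\lambda$ with $|\lambda|\ge\varepsilon/2$) of dimension $D := \Theta(\|M_\varepsilon\|_2^2)$ — possible because $\|M_\varepsilon\|_2^2 = \sum_i \lambda_i^2$ with all $|\lambda_i|\in[\varepsilon/2,1]$, so some band of close eigenvalues carries a constant fraction of the weight, giving $D = \Omega(\|M_\varepsilon\|_2^2)$ up to a $\poly\log$ or constant loss from pigeonholing eigenvalues into $O(\log(1/\varepsilon))$ dyadic buckets. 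Now feed Alice and Bob states $\ket{\psi},\ket{\phi}$ Haar-random \emph{inside $V$} (with a classically-known isometry identifying $V$ with $\mathbb{C}^D$); then $|\langle\phi|M|\psi\rangle|^2 \approx |\lambda|^2 |\langle\phi|\psi\rangle|^2$, so an $\varepsilon$-estimate of the left side gives an $(\varepsilon/|\lambda|^2)$-estimate — at best an $O(\varepsilon)$-estimate since $|\lambda|\ge\varepsilon/2$ only guarantees, hmm, actually $|\lambda|$ can be as small as $\varepsilon/2$, giving accuracy $4/\varepsilon$ on $|\langle\phi|\psi\rangle|^2$, which is useless — so instead one selects the band so that $|\lambda| = \Theta(\|M\|) = \Theta(1)$ when possible, or more carefully tracks that the relevant accuracy for the embedded $\DIPE$ on $\mathbb{C}^D$ is $\varepsilon' = \Theta(\varepsilon)$ (this is exactly the subtlety flagged in the proof sketch). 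With $\varepsilon' = \Theta(\varepsilon)$ and $q=0$, Theorem~\ref{thm:ip_lb_eps} gives $k = \Omega(\sqrt{D}/\varepsilon') = \Omega(\|M_\varepsilon\|_2/\varepsilon)$; a more careful accounting of the norm conversion (tracking $\|M_\varepsilon\|_1/(\varepsilon\sqrt{d_\varepsilon})$ rather than $\|M_\varepsilon\|_2$) yields the stated $\Omega(\|M_\varepsilon\|_2/\sqrt{\varepsilon})$, and combining with the separate $\Omega(1/\varepsilon^2)$ floor from Lemma~13 of~\cite{anshu2022distributed} completes the bound.

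\textbf{Main obstacle.} The delicate point is the lower bound's eigenvalue bookkeeping: one must choose which eigenspaces of $M$ to plant the hard $\DIPE$ instance in so that (i) the restricted operator is close enough to a scalar for the reduction to preserve the estimation accuracy up to constants, and (ii) the chosen subspace still has dimension $\Omega(\|M_\varepsilon\|_2^2)$ (or the right $\|M_\varepsilon\|_1$-type quantity), and (iii) the accuracy $\varepsilon'$ of the induced $\DIPE$ problem is $\Theta(\varepsilon)$ rather than something that degrades with $1/|\lambda|$. Pigeonholing the surviving eigenvalues into $O(\log(1/\varepsilon))$ geometric buckets and taking the heaviest bucket handles (ii), but making the accuracy track correctly through the rescaling by $|\lambda|^2$ — and seeing why this forces the $\sqrt{\varepsilon}$ rather than $\varepsilon$ in the denominator under the $2$-norm presentation — is where the real care is needed; the upper-bound variance computation, while tedious, is routine once every trace is bounded by $\|M_\varepsilon\|_2^2$.
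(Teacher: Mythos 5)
Your upper bound follows the paper's route (project with $\{P_\eps,\id-P_\eps\}$, run the standard symmetric-subspace POVM on the support of $M_\eps$, and do a mean/variance computation), but note that the variance bound you state, $O(\max\{1,\|M_\eps\|_2^2/\eps^2\}/k)$, does not actually give the claimed sample complexity: driving it below $\eps^2$ would force $k=\Omega(\|M_\eps\|_2^2/\eps^4)$. The computation in the paper yields $\mathrm{Var}=O\bigl(1/k+\|M_\eps\|_2^2/k^2+\|M_\eps\|_2^4/k^4\bigr)$, i.e.\ the $\|M_\eps\|_2^2$ term comes with $1/k^2$, and that quadratic-in-$1/k$ scaling is precisely what makes $k=\Theta(\max\{1/\eps^2,\|M_\eps\|_2/\eps\})$ suffice; your sketch needs this corrected but is otherwise the same estimator-analysis strategy.

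The genuine gap is in the lower bound. Your plan — pigeonhole the eigenvalues of $M_\eps$ into dyadic bands, take a heaviest band of near-constant eigenvalue $\lambda$ and dimension $D=\Omega(\|M_\eps\|_2^2)$ (up to log factors), and plant Haar-random \DIPE{} instances inside it — does not survive the accuracy rescaling that you yourself flag. When the $2$-norm weight is carried by many small eigenvalues (e.g.\ all of magnitude $\approx\eps/2$, so $D\approx 4\|M_\eps\|_2^2/\eps^2$ and $\lambda\approx\eps/2$), an $\eps$-estimate of $|\langle\phi|M|\psi\rangle|^2\approx\lambda^2|\langle\phi|\psi\rangle|^2$ only yields an $(\eps/\lambda^2)$-estimate of the planted inner product, and the induced bound $\Omega(\sqrt{D}\lambda^2/\eps)$ collapses to $\Omega(\|M_\eps\|_2)$ — missing the stated $\Omega(\|M_\eps\|_2/\sqrt{\eps})$ by a $1/\sqrt{\eps}$ factor; choosing instead a band with $\lambda=\Theta(1)$ can leave $D=O(1)$. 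There is no choice of flat band that simultaneously satisfies your conditions (i)--(iii), and the closing appeal to "a more careful accounting of the norm conversion" is exactly the missing idea. The paper's reduction is structurally different: it takes $\ket{0}$ a top eigenvector (eigenvalue $\pm1$), lets $W$ be the rest of the support of $M_\eps$ restricted to its positive or negative part (whichever has larger $2$-norm), and plants the \emph{perturbative} decision problem in which $\ket{\psi},\ket{\phi}=\sqrt{1-\delta}e^{i\theta}\ket{0}+\sqrt{\delta}\ket{\chi}$ with $\ket{\chi}$ Haar in $W$. The signal is the cross term $2\delta(1-\delta)\cos\vartheta\,\Tr[M\chi]$, where $\Tr[M\chi]$ concentrates (Levy's lemma) around $\Tr[\tilde M_\eps]/\ell$, so choosing $\delta=\Theta(\eps\ell/\Tr[\tilde M_\eps])$ makes an $O(\eps)$-accurate estimator of the bilinear form solve a decision problem known to require $\Omega(\sqrt{\ell}/\delta)=\Omega(\Tr[\tilde M_\eps]/\eps\sqrt{\ell})$ copies, which converts to $\Omega(\|M_\eps\|_2/\sqrt{\eps})$. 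This "mostly on the top eigenvector, $\delta$-weight Haar in the rest of the support" ensemble is what handles mixed eigenvalue magnitudes, and it is absent from your proposal.
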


    \subsection{Upper Bound}

        \begin{figure}[t]
        \begin{tcolorbox}
        \textbf{\emph{Input}}: An operator $M$. Alice gets $\ket{\psi}^{\otimes k}$, Bob gets copies of $\ket{\phi}^{\otimes k}$.\\
        \textbf{\emph{Output}}: An $\varepsilon$-approximation of $\langle \psi|M|\phi\rangle|^2$.
        \begin{enumerate}
        \item Alice and Bob each perform the two-outcome measurement $\{P_\eps, \id - P_\eps\}$  
 on each of the $k$ copies of their respective states  and obtain $s_a$ and $s_b$ copies of the states projected into one of the two subspace.   If $s_a=0$ or $s_b=0$ then Alice and Bob simply output $0$.
         \item Alice and Bob independently perform the standard POVM in the symmetric subspace on the support of $M_\eps$, obtaining the classical outcomes $\ket{u}$ and $\ket{v}$.

        \item Alice communicates $\ket{u}$ and $s_a$ to Bob who outputs
         \begin{align}
            w := \frac{(d_\varepsilon+s_a)(d_\varepsilon+s_b)}{k^2}| \langle u | M_\eps | v \rangle |^2 - \frac{\Tr[M_\varepsilon^2]}{k^2} \ .
        \end{align}
        \end{enumerate}
        \end{tcolorbox}
        \caption{Protocol to estimate $| \langle \phi \vert M_\varepsilon \vert \psi \rangle |^2$ using only $\LOCC$.}
        \label{fig:bilinear_protocol}
        \end{figure}
    
        In Figure~\ref{fig:bilinear_protocol} we outline a protocol that estimates $| \langle \phi \vert M_\varepsilon \vert \psi \rangle |^2$. Recall that the standard POVM on the symmetric subspace, def~\ref{def:standard_POVM}, has continuous aspects $\left\{ \binom{d+k-1}{k} \varphi^{\otimes k}\ \vert \ \ket{\varphi}\in\mathbb{C}^d \right\}$. This can be extended to $\vee^k W$ for arbitrary subspaces $W \subseteq \mathbb{C}^d$:
        \begin{align}
            \left\{ \binom{\dim W+k-1}{k} \varphi^{\otimes k}\ \vert \ \ket{\varphi}\in W \right\}\ .
        \end{align}
        The second step of the protocol of Figure~\ref{fig:bilinear_protocol} has Alice and Bob implementing this POVM for $\text{Im}P_\varepsilon \subseteq \mathbb{C}^d$. Since they may not have $k$ copies after the projection step, this is a POVM on $\vee^{s_a} \text{Im}P_\varepsilon$ and $\vee^{s_b} \text{Im}P_\varepsilon$, respectively.        First note  that if $P_\varepsilon \ket{\psi} =0$ or $P_\varepsilon \ket{\phi} = 0$ then they always output $0$, which must be a good estimate in this case. Thus, we assume that $\ket{\psi_\varepsilon} = P_\varepsilon \ket{\psi}/\Vert P_\varepsilon \ket{\psi}\Vert$ and $\ket{\phi_\varepsilon} = P_\varepsilon \ket{\phi}/\Vert P_\varepsilon \ket{\phi}\Vert$ both exist. The technical lemmas that we prove are the mean and variance of our estimators, whose proofs appear in the next section.
\begin{lemma}\label{lem:bilinear_expected}
    The expected value of the estimator $w$ given in Figure~\ref{fig:bilinear_protocol} is 
    \begin{align}
        \E[w] = | \langle \phi | M_\varepsilon | \psi \rangle |^2 + \frac{\Tr[M_\varepsilon^2 \phi_\varepsilon]}{k}\Tr[P_\varepsilon \psi] + \frac{\Tr[M_\varepsilon^2 \psi_\varepsilon]}{k}\Tr[P_\varepsilon \phi]\ .
    \end{align}
\end{lemma}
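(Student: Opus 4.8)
The plan is to compute $\E[w]$ by conditioning on the outcomes $s_a, s_b$ of the first projection step, then averaging over the standard POVM outcomes $\ket{u}, \ket{v}$ using the Haar moment identities of Fact \ref{fact:haar_moments}. First I would record the distribution of $(s_a, s_b)$: each copy of $\ket{\psi}$ lands in $\Ima P_\varepsilon$ with probability $p_\psi := \Tr[P_\varepsilon \psi]$, so $s_a \sim \mathrm{Bin}(k, p_\psi)$ and independently $s_b \sim \mathrm{Bin}(k, p_\phi)$. Conditioned on $s_a \geq 1$, the post-measurement state is $\ket{\psi_\varepsilon}^{\otimes s_a}$, and similarly for Bob. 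So I would write $\E[w] = \E_{s_a, s_b}\bigl[\,\E[w \mid s_a, s_b]\,\bigr]$, with the understanding that the $s_a = 0$ or $s_b = 0$ events contribute $0$.

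The key computation is the inner conditional expectation. Applying the standard POVM on $\vee^{s_a} \Ima P_\varepsilon$ to $\ket{\psi_\varepsilon}^{\otimes s_a}$ produces $\ket{u}$ distributed so that (by the defining property of the standard POVM, e.g.\ via \cite{harrow2013church}) the relevant moments are $\E_u[\,u\,] = \frac{1}{d_\varepsilon + s_a}(\id_{P_\varepsilon} + s_a\, \psi_\varepsilon)$ and $\E_u[u^{\otimes 2}]$ a suitable symmetrization — these are exactly the finite-$s$ analogues of Fact \ref{fact:haar_moments}, where the ``effective dimension'' $d_\varepsilon$ is shifted by the number of copies. Since $|\langle u|M_\varepsilon|v\rangle|^2 = \Tr[(M_\varepsilon \otimes M_\varepsilon^\dagger)(u \otimes v)]$ up to the vec-identity bookkeeping already used in the previous lemma, linearity lets me pull the expectation inside: $\E_{u,v}[|\langle u | M_\varepsilon | v\rangle|^2] = \Tr\bigl[(M_\varepsilon \otimes \overline{M_\varepsilon})(\E_u[u] \otimes \E_v[v])\bigr]$ since $u$ and $v$ are independent. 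Expanding $\E_u[u] \otimes \E_v[v]$ as a sum of four terms $(\id + s_a \psi_\varepsilon)\otimes(\id + s_b \phi_\varepsilon)/((d_\varepsilon + s_a)(d_\varepsilon + s_b))$, the prefactor $(d_\varepsilon + s_a)(d_\varepsilon + s_b)/k^2$ cancels the denominator, and I get
\begin{align*}
    \E[w \mid s_a, s_b] = \frac{1}{k^2}\Bigl( \Tr[M_\varepsilon^2] + s_a \Tr[M_\varepsilon^2 \psi_\varepsilon] + s_b\Tr[M_\varepsilon^2 \phi_\varepsilon] + s_a s_b |\langle \phi_\varepsilon | M_\varepsilon | \psi_\varepsilon\rangle|^2 \Bigr) - \frac{\Tr[M_\varepsilon^2]}{k^2}.
\end{align*}
The $\Tr[M_\varepsilon^2]$ terms cancel by design of the subtracted correction term in the protocol. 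Finally I would average over $s_a, s_b$: using $\E[s_a] = k p_\psi$, $\E[s_b] = k p_\phi$, independence giving $\E[s_a s_b] = k^2 p_\psi p_\phi$, and noting $p_\psi p_\phi |\langle \phi_\varepsilon | M_\varepsilon | \psi_\varepsilon\rangle|^2 = |\langle \phi | M_\varepsilon | \psi\rangle|^2$ (the normalizations $\Vert P_\varepsilon \ket{\psi}\Vert^2 = p_\psi$ absorb exactly), the $s_a s_b$ term becomes $|\langle \phi | M_\varepsilon | \psi\rangle|^2$, the $s_a$ term becomes $\frac{\Tr[M_\varepsilon^2 \psi_\varepsilon]}{k} p_\psi$ — wait, more carefully $\frac{1}{k^2}\E[s_a]\Tr[M_\varepsilon^2\psi_\varepsilon] = \frac{p_\psi}{k}\Tr[M_\varepsilon^2 \psi_\varepsilon] = \frac{\Tr[P_\varepsilon\psi]}{k}\Tr[M_\varepsilon^2\psi_\varepsilon]$, matching the stated formula (and symmetrically for $s_b$).

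The main obstacle I anticipate is twofold: first, handling the $s_a = 0$ or $s_b = 0$ boundary cases cleanly — one must check that declaring output $0$ there does not bias the formula, which works because the $s_a s_b$, $s_a$, and $s_b$ terms all vanish when the respective count is zero, so the conditional-expectation formula above is in fact valid verbatim even at the boundary and no separate case analysis is needed. Second, and more delicate, is justifying the finite-copy moment identities for the standard POVM on $\vee^{s}\Ima P_\varepsilon$: one needs $\int \ketbra{\varphi}{\varphi}^{\otimes s}\binom{d_\varepsilon + s - 1}{s} d\varphi = \Pi^{\mathrm{sym}}_s$ (the projector onto the symmetric subspace), and then that measuring $\ket{\psi_\varepsilon}^{\otimes s}$ yields $\ket{u}$ with $\E_u[\ketbra{u}{u}] \propto \Tr_{2\ldots s}[\Pi^{\mathrm{sym}}_s (\psi_\varepsilon \otimes \id^{\otimes s-1})\Pi^{\mathrm{sym}}_s]$, which evaluates to $\frac{\id + s\psi_\varepsilon}{d_\varepsilon + s}$ by a standard symmetric-subspace partial-trace computation — I would cite \cite{harrow2013church} for this rather than re-derive it. Once these identities are in hand the rest is the bookkeeping sketched above.
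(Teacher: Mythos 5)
Your proof follows essentially the same route as the paper's: condition on $(s_a,s_b)$, compute the conditional mean of $|\langle u|M_\eps|v\rangle|^2$ from the standard-POVM moments, and then average over the two independent binomials. The only real difference is cosmetic: you invoke the first-moment identity $\E_u[\ketbra{u}{u}]=(\id+s_a\psi_\eps)/(d_\eps+s_a)$ directly, whereas the paper expands $\ket{u}=\alpha e^{i\theta}\ket{\psi_\eps}+\sqrt{1-\alpha^2}\ket{\psi'}$ and uses the moments of $\alpha$ and $\ket{\psi'}$ from Fact~\ref{fact:std_povm}; these are equivalent computations (the paper's decomposition is needed later for the variance anyway, which is presumably why it is set up that way).

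Two caveats. First, your boundary-case claim is false as stated: when $s_a=0$ but $s_b\ge 1$, your conditional formula gives $s_b\Tr[M_\eps^2\phi_\eps]/k^2\neq 0$ while the protocol outputs $0$, so the exact mean actually carries factors $1-(1-\Tr[P_\eps\psi])^k$ and $1-(1-\Tr[P_\eps\phi])^k$ on the two bias terms rather than holding ``verbatim'' at the boundary. This is harmless for what the lemma is used for (the factors are at most $1$, so the $\Oh(1/k)$ bias bound feeding into Theorem~\ref{thm:blf_ub} is unaffected), and the paper's own restriction of the sums to $s_a,s_b\ge 1$ makes the same silent simplification, but the justification you give for ignoring the boundary is not correct. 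Second, your final expression pairs $\Tr[M_\eps^2\psi_\eps]$ with $\Tr[P_\eps\psi]$ and $\Tr[M_\eps^2\phi_\eps]$ with $\Tr[P_\eps\phi]$, while the lemma as printed has the opposite pairing; your version is the one that actually follows from the computation (and it agrees with the paper's own intermediate line for $\E[|\langle u|M_\eps|v\rangle|^2\mid s_a,s_b]$), so the printed statement appears to contain a $\psi_\eps\leftrightarrow\phi_\eps$ swap. Since both bias terms are at most $1/k$ either way, nothing downstream changes, but your remark that your answer ``matches the stated formula'' is not literally true.
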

\begin{lemma}\label{lem:bilinear_var}
    The variance of the estimator of Figure~\ref{fig:bilinear_protocol} is upper bounded by
    \begin{align}
        \Var(w) = \Oh\left( \frac{1}{k} + \frac{\Vert M_\varepsilon \Vert_2^2}{k^2} + \frac{\Vert M_\varepsilon \Vert^4}{k^4} \right)\ .
    \end{align}
\end{lemma}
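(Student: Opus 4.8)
# Proof Proposal for Lemma~\ref{lem:bilinear_var}

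\textbf{Overall strategy.} The plan is to compute the second moment $\E[w^2]$ and subtract $\E[w]^2$ (the latter already given by Lemma~\ref{lem:bilinear_expected}). Since $w$ depends on the random classical outcomes $\ket{u},\ket{v}$ of the standard POVMs and on the binomial counts $s_a,s_b$, I would first condition on $s_a,s_b$ and take expectations over $\ket{u},\ket{v}$ using the moments of the standard POVM on the symmetric subspace; then average over $s_a,s_b$ at the end. The key analytic input is that, conditioned on $s_a$, the outcome $\ket{u}$ of the standard POVM on $\vee^{s_a}\mathrm{Im}P_\varepsilon$ is distributed so that $\E[u^{\otimes t}] \propto \Pi^{d_\varepsilon,t}_{\mathrm{sym}}$-type projectors, which by Fact~\ref{fact:haar_moments}-style identities (extended to the symmetric subspace, cf.~\cite{harrow2013church}) let us write $\E_u[u] = \frac{\psi_\varepsilon + (s_a-1)\,(\text{something})}{d_\varepsilon + s_a}$ and analogously for the $\otimes 2$ moment. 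This is why the estimator carries the correction factors $(d_\varepsilon+s_a)(d_\varepsilon+s_b)/k^2$ and the subtraction of $\Tr[M_\varepsilon^2]/k^2$: they are precisely the debiasing terms, so the first moment comes out as in Lemma~\ref{lem:bilinear_expected} and the variance is what remains.

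\textbf{Key steps in order.} First I would expand $w^2$ into three groups: the square of the main term $\big(\tfrac{(d_\varepsilon+s_a)(d_\varepsilon+s_b)}{k^2}\big)^2 |\langle u|M_\varepsilon|v\rangle|^4$, the cross term, and the constant $\Tr[M_\varepsilon^2]^2/k^4$. For the leading piece I need $\E_{u,v}\big[|\langle u|M_\varepsilon|v\rangle|^4\big]$, which after using the standard-POVM second moments reduces to contracting $M_\varepsilon^{\otimes 2}$ (and $M_\varepsilon^{\dagger\otimes 2}$) against symmetric-subspace projectors; every resulting trace is a product of terms like $\Tr[M_\varepsilon^2\psi_\varepsilon]$, $\Tr[M_\varepsilon^2\phi_\varepsilon]$, $\Tr[M_\varepsilon^4\psi_\varepsilon\phi_\varepsilon]$, $\Tr[M_\varepsilon^2]$, $\Tr[M_\varepsilon^4]$, together with $|\langle\phi_\varepsilon|M_\varepsilon|\psi_\varepsilon\rangle|^2$ factors. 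Each such trace I would bound crudely: $\Tr[M_\varepsilon^2\psi_\varepsilon]\le \|M_\varepsilon\|^2\le 1$, $\Tr[M_\varepsilon^2]=\|M_\varepsilon\|_2^2$, $\Tr[M_\varepsilon^4]\le\|M_\varepsilon\|^2\|M_\varepsilon\|_2^2\le\|M_\varepsilon\|_2^2$, and $|\langle\phi_\varepsilon|M_\varepsilon|\psi_\varepsilon\rangle|^2\le\|M_\varepsilon\|^2\le 1$. The prefactor $(d_\varepsilon+s_a)/k$ is $O(1)$ in expectation whenever the protocol does not abort, since $\E[s_a]=k\,\Tr[P_\varepsilon\psi]$ and $d_\varepsilon\le\|M_\varepsilon\|_2^2/(\varepsilon/2)^2$—actually one should be careful here and track that $(d_\varepsilon+s_a)/k$ times the POVM-moment denominators $1/(d_\varepsilon+s_a)$ cancel, leaving $O(1)$ factors—so the dominant surviving terms after subtracting $\E[w]^2$ scale as claimed: an $O(1/k)$ term from the variance of the binomials and the $|\langle u|M_\varepsilon|v\rangle|^2$ fluctuation, an $O(\|M_\varepsilon\|_2^2/k^2)$ term from the $\Tr[M_\varepsilon^2]$-type contributions, and an $O(\|M_\varepsilon\|^4/k^4)$ term from the squared subtraction constant.

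\textbf{Main obstacle.} The hard part will be the bookkeeping of the standard-POVM moments on $\vee^{s_a}\mathrm{Im}P_\varepsilon$ when $s_a$ is itself a random variable rather than the full $k$: one must handle the conditioning carefully so that the debiasing constants in $w$ exactly cancel the $1/(d_\varepsilon+s_a)$ denominators, and then show the leftover cross-terms and the binomial variance of $s_a,s_b$ only contribute at the stated orders. A secondary subtlety is ruling out blow-up from the $(d_\varepsilon+s_a)(d_\varepsilon+s_b)/k^2$ prefactor in the rare event that $s_a$ or $s_b$ is atypically large; a concentration bound (Chernoff for the binomial) controlling $\Pr[s_a > 2k\Tr[P_\varepsilon\psi]]$ suffices, and on the complementary event the prefactor is $O(1)$. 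Once these are in place, collecting terms and applying the crude norm bounds above gives $\Var(w) = O\!\big(\tfrac{1}{k} + \tfrac{\|M_\varepsilon\|_2^2}{k^2} + \tfrac{\|M_\varepsilon\|^4}{k^4}\big)$ as stated.
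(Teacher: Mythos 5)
Your proposal follows essentially the same route as the paper: condition on $s_a,s_b$, compute the conditional fourth moment of $\langle u|M_\varepsilon|v\rangle$ via the standard-POVM moments on $\vee^{s_a}\mathrm{Im}P_\varepsilon$ (the paper organizes this by decomposing $\ket{u},\ket{v}$ into components along $\ket{\psi_\varepsilon},\ket{\phi_\varepsilon}$ plus Haar-random orthogonal parts and bounding the resulting univariate, bivariate and degree-$4$ cross terms), exploit the exact cancellation of the $(d_\varepsilon+s_a)(d_\varepsilon+s_b)$ prefactor against the moment denominators together with the crude bounds $\Tr[M_\varepsilon^2\psi_\varepsilon]\le 1$, $\Tr[M_\varepsilon^4]\le\Tr[M_\varepsilon^2]=\Vert M_\varepsilon\Vert_2^2$, and finally absorb the binomial fluctuation of $f_\varepsilon s_a s_b/k^2$ into the $O(1/k)$ term. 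One small caution: your parenthetical claim that $(d_\varepsilon+s_a)/k=O(1)$ is false in general (since $d_\varepsilon$ can be as large as $4\Vert M_\varepsilon\Vert_2^2/\varepsilon^2\gg k$), but as you yourself note the correct mechanism is the cancellation against the $1/(d_\varepsilon+s_a)$-type denominators, and the Chernoff step is unnecessary because $s_a,s_b\le k$ holds deterministically.
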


Our estimator is biased, but the bias is at most $2/k$. Using Lemmas~\ref{lem:bilinear_expected} and \ref{lem:bilinear_var} we can prove the upper bound of Theorem~\ref{thm:blf_ub}. Letting 
\begin{align}
    k = \Omega\left(\max\left\{ \frac{1}{\varepsilon^2}, \frac{\Vert M_\varepsilon \Vert_2}{\varepsilon} \right\} \right)\ ,
\end{align}
we ensure that the variance is upper bounded by $O(\varepsilon^2)$ and that the bias is $\Oh(\varepsilon^2)$. Thus, with high probability, Alice and Bob's estimate is within $\varepsilon/2$\footnote{ Using Chebyshev's inequality, the total error of the estimator upper bounded by $\frac{\varepsilon}{2}+\frac{2}{k}+\frac{\varepsilon}{c}$ with high probability. Choosing constants properly, we can assume that $k \geq \frac{c}{\varepsilon^2}$. For $c \geq 1+\sqrt{5}$ this implies that $\frac{\varepsilon}{2}+\frac{2}{k}+\frac{\varepsilon}{c} \leq \varepsilon$.} of $\vert \langle \phi \vert M_\varepsilon \vert \psi \rangle\vert^2$ and thus within $\varepsilon$ of $\vert \langle \phi \vert M \vert \psi \rangle\vert^2$.

\subsection{Mean and variance computation}
Observe that one can decompose Alice and Bob's measured states as~\cite{fumio2006semicircle}
        \begin{align}
            \ket{u} & = \alpha e^{i\theta}\ket{\psi_\varepsilon} + \sqrt{1-\alpha^2}\ket{\psi'}\ ,\\
             \ket{v} & = \beta e^{i\varphi}\ket{\phi_\varepsilon} + \sqrt{1-\beta^2}\ket{\phi'}\ ,
        \end{align}
        where $\ket{\psi'}$  and $\ket{\phi'}$ are orthogonal to $\ket{\psi_\varepsilon}$ and $\ket{\phi_\varepsilon}$ respectively and $\alpha,\beta >0$. Note that $\alpha$ ($\beta$) and $\ket{\psi'}$ ($\ket{\phi'}$) are independent.
For convenience, we will adopt the notation 
        \begin{align}
            P_{\psi'}  = \id_{d_\varepsilon} - \ketbra{\psi_\varepsilon}{\psi_\varepsilon}\ , & \quad P_{\phi'}  = \id_{d_\varepsilon} - \ketbra{\phi_\varepsilon}{\phi_\varepsilon}\\
            \swap_{\psi'} = P_{\psi'}^{\otimes 2}\swap  P_{\psi'}^{\otimes 2}\ , & \quad \swap_{\phi'} = P_{\phi'}^{\otimes 2}\swap  P_{\phi'}^{\otimes 2}\ .
        \end{align}
We will often implicitly use the fact that
        \begin{align}
            \swap_{\psi'}  =  P_{\psi'}^{\otimes 2}\swap  P_{\psi'}^{\otimes 2} =  P_{\psi'}^{\otimes 2}\swap =  \swap  P_{\psi'}^{\otimes 2}\ .
        \end{align}
 A similar equivalence holds for $\swap_{\phi'}$. Also define $f_\varepsilon = | \langle \phi_\varepsilon | M_\varepsilon | \psi_\varepsilon \rangle |^2$. Lastly, we will need the following facts about these random variables.
    
\begin{fact}[{\cite[Lemma 3]{anshu2022distributed}}]\label{fact:std_povm} For $\ket{\psi'},\ket{\phi'}$ as defined above, we have
            \begin{align}
                \E[\ketbra{\psi'}{\psi'}]  = \frac{P_{\psi'}}{d_\varepsilon-1}\ ,& \quad \E[\ketbra{\phi'}{\phi'}]  = \frac{P_{\phi'}}{d_\varepsilon-1}\ .\\
                \E[\psi'^{\otimes 2}] = \frac{1}{d_\varepsilon(d_\varepsilon-1)}\left[ P_{\psi'}^{\otimes 2} + \swap_{\psi'}\right]\ , & \quad
                \E[\phi'^{\otimes 2}] = \frac{1}{d_\varepsilon(d_\varepsilon-1)}\left[ P_{\phi'}^{\otimes 2} + \swap_{\phi'}\right]\\
                \E[\alpha^2] = \frac{s_a+1}{d_\varepsilon+s_a}\ , & \quad \E[\beta^2] =\frac{s_b+1}{d_\varepsilon+s_b}\\
                \E[\alpha^4] = \frac{(s_a+2)(s_a+1)}{(d_\varepsilon+s_a+1)(d_\varepsilon+s_a)}\ , & \quad \E[(1-\alpha^2)^2] = \frac{d_\varepsilon(d_\varepsilon - 1)}{(d_\varepsilon + s_a + 1)(d_\varepsilon + s_a)}\\
                \E[\beta^4] = \frac{(s_b+2)(s_b+1)}{(d_\varepsilon+s_b+1)(d_\varepsilon+s_b)}\ , & \quad \E[(1-\beta^2)^2] = \frac{d_\varepsilon(d_\varepsilon - 1)}{(d_\varepsilon + s_b + 1)(d_\varepsilon+s_b)}
            \end{align}
\end{fact}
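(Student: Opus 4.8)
The plan is to pin down the exact distribution of the POVM outcome $\ket{u}$ (and symmetrically $\ket{v}$) and then read off every identity from two standard ingredients: unitary invariance of that distribution, and the moments of the Haar measure in Fact~\ref{fact:haar_moments} together with the moments of a Beta law. First I would note that after the projection step Alice's register holds $s_a$ copies of $\ket{\psi_\varepsilon}$, i.e.\ the state $\psi_\varepsilon^{\otimes s_a}\in\vee^{s_a}\Ima P_\varepsilon$, and the standard POVM on $\vee^{s_a}\Ima P_\varepsilon$ has elements $\binom{d_\varepsilon+s_a-1}{s_a}\,\varphi^{\otimes s_a}\,d\varphi$ with $\ket{\varphi}$ ranging over the unit sphere of $\Ima P_\varepsilon\cong\mathbb{C}^{d_\varepsilon}$. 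Hence $\ket{u}$ is produced with probability density $\binom{d_\varepsilon+s_a-1}{s_a}\,\Tr[\varphi^{\otimes s_a}\psi_\varepsilon^{\otimes s_a}]=\binom{d_\varepsilon+s_a-1}{s_a}\,|\langle\varphi|\psi_\varepsilon\rangle|^{2s_a}$ relative to the uniform (Haar) measure on that sphere. That this integrates to $1$ is exactly the statement that the POVM is normalized on $\vee^{s_a}\mathbb{C}^{d_\varepsilon}$, and it fixes the normalization constant used below (it is equivalent to $\binom{d_\varepsilon+s_a-1}{s_a}^{-1}=\frac{s_a!\,(d_\varepsilon-1)!}{(d_\varepsilon+s_a-1)!}$).

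Next I would exploit symmetry. Writing $\ket{u}=\alpha e^{i\theta}\ket{\psi_\varepsilon}+\sqrt{1-\alpha^2}\,\ket{\psi'}$ with $\ket{\psi'}\perp\ket{\psi_\varepsilon}$ and $\alpha>0$, the density depends on $\ket{u}$ only through $\alpha=|\langle u|\psi_\varepsilon\rangle|$; it is therefore invariant under the $\mathrm{U}(1)$ phase rotation on the $\ket{\psi_\varepsilon}$-component and under every unitary fixing $\ket{\psi_\varepsilon}$. Consequently, conditioned on $\alpha$, the phase $\theta$ is uniform on $[0,2\pi)$ and $\ket{\psi'}$ is Haar-distributed on the unit sphere of $\Ima P_{\psi'}$, a $(d_\varepsilon-1)$-dimensional space, independently of $\alpha$. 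Applying Fact~\ref{fact:haar_moments} in dimension $d_\varepsilon-1$, and using that the identity and the swap restricted to $(\Ima P_{\psi'})^{\otimes 2}$ equal $P_{\psi'}^{\otimes 2}$ and $\swap_{\psi'}$, gives $\E[\ketbra{\psi'}{\psi'}]=P_{\psi'}/(d_\varepsilon-1)$ and $\E[\psi'^{\otimes 2}]=\frac{1}{d_\varepsilon(d_\varepsilon-1)}(P_{\psi'}^{\otimes 2}+\swap_{\psi'})$. The identities for $\ket{\phi'}$ are obtained verbatim with $s_b$ in place of $s_a$.

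It remains to compute the moments of $\alpha^2$ (and likewise $\beta^2$). Under the uniform measure on the sphere of $\mathbb{C}^{d_\varepsilon}$ the squared overlap $\alpha^2=|\langle u|\psi_\varepsilon\rangle|^2$ is $\mathrm{Beta}(1,d_\varepsilon-1)$-distributed (the standard fact that a single squared coordinate of a Haar-random unit vector in $\mathbb{C}^{d_\varepsilon}$ follows $\mathrm{Beta}(1,d_\varepsilon-1)$). Reweighting the density by $(\alpha^2)^{s_a}$ and renormalizing by exactly the constant identified above turns this into $\mathrm{Beta}(s_a+1,d_\varepsilon-1)$, hence $1-\alpha^2\sim\mathrm{Beta}(d_\varepsilon-1,s_a+1)$. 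Substituting into $\E[\mathrm{Beta}(a,b)]=\frac{a}{a+b}$ and $\E[\mathrm{Beta}(a,b)^2]=\frac{a(a+1)}{(a+b)(a+b+1)}$, with $a+b=d_\varepsilon+s_a$ in both cases, yields $\E[\alpha^2]=\frac{s_a+1}{d_\varepsilon+s_a}$, $\E[\alpha^4]=\frac{(s_a+1)(s_a+2)}{(d_\varepsilon+s_a)(d_\varepsilon+s_a+1)}$ and $\E[(1-\alpha^2)^2]=\frac{d_\varepsilon(d_\varepsilon-1)}{(d_\varepsilon+s_a)(d_\varepsilon+s_a+1)}$, and the same four identities with $s_b$.

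I expect the only genuine content to be the two ``standard but needs care'' steps: reading off the induced density of $\ket{u}$ from the standard POVM, and the $\mathrm{Beta}(1,d_\varepsilon-1)$ law for $\alpha^2$ — both are folklore in the symmetric-subspace literature (cf.\ \cite{harrow2013church}) and are what Lemma~3 of \cite{anshu2022distributed} records; the rest is bookkeeping with Fact~\ref{fact:haar_moments}. One mild subtlety worth flagging rather than a true obstacle: $s_a$ and $s_b$ are themselves random (binomial in $k$), but every identity in the Fact is stated \emph{conditionally} on $s_a$ and $s_b$, so no averaging over them is needed here — that averaging is deferred to the proofs of Lemmas~\ref{lem:bilinear_expected} and \ref{lem:bilinear_var}.
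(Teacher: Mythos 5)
Your derivation is correct: the induced density $\binom{d_\varepsilon+s_a-1}{s_a}|\langle\varphi|\psi_\varepsilon\rangle|^{2s_a}$, the invariance argument giving a uniform phase and a Haar-distributed $\ket{\psi'}$ independent of $\alpha$, and the resulting $\mathrm{Beta}(s_a+1,d_\varepsilon-1)$ law for $\alpha^2$ reproduce exactly the identities stated, with the moments checking out against Fact~\ref{fact:haar_moments}. The paper itself offers no proof here --- it imports the statement as \cite[Lemma~3]{anshu2022distributed} --- and your argument is the standard derivation underlying that cited lemma, so there is nothing to flag.
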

Importantly, note that the expectation values for $\ket{\psi'}$ and $\ket{\phi'}$ do not depend on $s_a$ or $s_b$.

\subsubsection{Expected value}
In this section we will prove Lemma~\ref{lem:bilinear_expected}, i.e., the expected value of the estimator $w$ given in Figure~\ref{fig:bilinear_protocol}~is 
    \begin{align}
        \E[w] = | \langle \phi | M_\varepsilon | \psi \rangle |^2 + \frac{\Tr[M_\varepsilon^2 \phi_\varepsilon]}{k}\Tr[P_\varepsilon \psi] + \frac{\Tr[M_\varepsilon^2 \psi_\varepsilon]}{k}\Tr[P_\varepsilon \phi]\ .
    \end{align}
\begin{proof}

Using the tower property of expectation, we have that
\begin{align}
    \E[w] & = \E\big[\E[w | s_a, s_b]\big]\ ,
\end{align}
where the first expectation is over the measurement outcomes in Step 1 of the algorithm and the second is in the output of the standard POVM measurement. Now  if we expand $| \langle u | M_\varepsilon \vert v \rangle |^2$ using the decompositions of $\ket{u}$ and $\ket{v}$ above we obtain $16$ terms. However, all of the terms that are degree one in $\ket{\psi'}$ or $\ket{\phi'}$ are zero in expectation (by symmetry of the Haar measure), and we are left with $4$ terms:
\begin{align}
    &\E\big[| \langle u | M_\varepsilon | v \rangle |^2 \big\vert  s_a,s_b\big] \\
    &=  \E\big[\alpha^2 \beta^2 \vert s_a, s_b\big] \cdot f_\varepsilon + \E\big[(1-\alpha^2)\beta^2 \vert s_a, s_b\big]\cdot \E\big[| \langle \psi' | M_\varepsilon | \phi_\varepsilon \rangle |^2\big] \notag\\
                & + \E\big[\alpha^2(1-\beta^2) \vert s_a, s_b\big]\cdot\E\big[| \langle \psi_\varepsilon | M_\varepsilon | \phi' \rangle |^2\big] \notag\\
                & + \E\big[(1-\alpha^2)(1-\beta^2) \vert s_a, s_b\big]\cdot\E\big[| \langle \psi' | M_\varepsilon | \phi' \rangle |^2 \big], \notag
            \end{align}
where we should think of $\alpha,\beta$ above as random variables (depending on the randomness of the output $\ket{u}$).
        We now evaluate each term on the RHS as follows.
            \begin{align}
                \E[| \langle \psi' | M_\varepsilon | \phi_\varepsilon \rangle |^2] & = \E[ \Tr[M_\varepsilon \psi' M_\varepsilon \phi_\varepsilon] ] = \frac{1}{d_\varepsilon-1}\bra{\phi_\varepsilon} M_\varepsilon  P_{\phi'} M_\varepsilon \ket{\phi_\varepsilon} = \frac{\Tr[M_\varepsilon^2 \phi_\varepsilon] - f_\varepsilon}{d_\varepsilon-1}\ .
            \end{align}
            Similar steps yield that
            \begin{align}
                \E[| \langle \psi_\varepsilon | M_\varepsilon | \phi'\rangle |^2] & =  \frac{\Tr[M_\varepsilon^2 \psi_\varepsilon] - f_\varepsilon}{d_\varepsilon-1}\ .
            \end{align}
            Lastly,
            \begin{align}
                \E[| \langle \psi' | M_\varepsilon | \phi' \rangle |^2] & = \frac{1}{(d_\varepsilon-1)^2}\Tr\left[  M_\varepsilon P_{\psi'} M_\varepsilon P_{\phi'} \right]\\
                & = \frac{\Tr[M_\varepsilon^2] - \Tr[M_\varepsilon^2 \psi_\varepsilon] - \Tr[M_\varepsilon^2 \psi_\varepsilon] + f_\varepsilon}{(d_\varepsilon-1)^2}\ .
            \end{align}
            Putting everything above in along with  Fact~\ref{fact:std_povm}, we have that
            \begin{align}
                \E[| \langle u | M_\varepsilon | v \rangle |^2 \vert s_a, s_b] & = \frac{s_as_b}{(d_\varepsilon+s_a)(d_\varepsilon+s_b)}f_\varepsilon + \frac{\Tr[M_\varepsilon^2] + s_a\Tr[M_\varepsilon^2 \psi_\varepsilon] + s_b\Tr[M_\varepsilon^2 \phi_\varepsilon]}{(d_\varepsilon+s_a)(d_\varepsilon+s_b)}\\
                \E[w | s_a, s_b] & = \frac{s_a s_b}{k^2} f_\varepsilon + \frac{s_a\Tr[M_\varepsilon^2 \phi_\varepsilon] + s_b\Tr[M_\varepsilon^2 \psi_\varepsilon]}{k^2}\ .
            \end{align}
            Let $P$ and $Q$ be discrete distributions over $[k]$ such that $P(i)$ (resp.~$Q(i)$) is the probability that Alice (resp.~Bob) observes $i$ successful projections onto $P_\varepsilon$. Note $P$ and $Q$ are binomial distributions with parameters $p=\Tr[P_\varepsilon \psi]$ and $q=\Tr[P_\varepsilon \phi]$ respectively. Now we can compute $\E[w]$ as follows:            \begin{align}
                \E[w] & =  \sum_{s_a,s_b=1}^k P(s_a)Q(s_b)\left( \frac{s_a s_b}{k^2} f_\varepsilon + \frac{s_a\Tr[M_\varepsilon^2 \phi_\varepsilon] + s_b\Tr[M_\varepsilon^2 \phi_\varepsilon]}{k^2} \right)\\
                & = f_\varepsilon \left( \sum_{s_a=1}^k P(s_a) \frac{s_a}{k} \right) \left( \sum_{s_b=1}^k Q(s_b) \frac{s_b}{k} \right)\\
                & + \frac{\Tr[M_\varepsilon^2 \phi_\varepsilon]}{k} \sum_{s_a=1}^k P(s_a) \frac{s_a}{k} + \frac{\Tr[M_\varepsilon^2 \psi_\varepsilon]}{k} \sum_{s_b=1}^k Q(s_a) \frac{s_b}{k}\notag \\
                & = f_\eps \Tr[P_\varepsilon \phi] \Tr[P_\varepsilon \psi]+ \frac{\Tr[M_\varepsilon^2 \phi_\varepsilon]}{k}\Tr[P_\varepsilon \psi] + \frac{\Tr[M_\varepsilon^2 \psi_\varepsilon]}{k}\Tr[P_\varepsilon \phi]\\
                & = | \langle \phi | M_\varepsilon | \psi \rangle |^2 + \frac{\Tr[M_\varepsilon^2 \phi_\varepsilon]}{k}\Tr[P_\varepsilon \psi] + \frac{\Tr[M_\varepsilon^2 \psi_\varepsilon]}{k}\Tr[P_\varepsilon \phi]\ ,
            \end{align}
            where the last line follows from $\E\left[ \frac{s_a}{k} \right] = p$ and $\E\left[ \frac{s_b}{k} \right] = q$. This proves the claim bound.  
\end{proof}
\subsubsection{Bounding the variance}
Next we will prove our variance bound.  For ease of notation that comes with renormalization, we will let for the rest of this section. 
        \begin{align}
            D := (d_\varepsilon + s_a+1)(d_\varepsilon + s_a)(d_\varepsilon + s_b+1)(d_\varepsilon + s_b)\ .
        \end{align}
In this section we prove Lemma~\ref{lem:bilinear_var}, i.e., the variance of the estimator of Figure~\ref{fig:bilinear_protocol} is at most
    \begin{align}
        \Var(w) = \Oh\left( \frac{1}{k} + \frac{\Vert M_\varepsilon \Vert_2^2}{k^2} + \frac{\Vert M_\varepsilon \Vert^4}{k^4} \right)\ .
    \end{align}
\begin{proof}
Assuming the first inequality below, we prove the lemma. Deriving this inequality requires a sequence of lemmas which we prove in the following pages in Lemma~\ref{lem:univariate},~\ref{lemma:bivariate},~\ref{lem:quartic}:
            \begin{align}
                \E\big[ | \langle u | M_\varepsilon | v \rangle |^4 | s_a, s_b\big] \leq &  \frac{1}{D}(2(s_a+1)(s_b+1) + (s_a+2)(s_a+1)(s_b+2)(s_b+1)f_\varepsilon^2\\
                & + 2(s_a+2)(s_b+1) + 2(s_b+2)(s_b+1) + 2(\Tr[M_\varepsilon^2]+1)^2 +12 \\
                & + (s_a+1)(s_b+1) + (s_a+s_b+2)(\Tr[M_\varepsilon^2]+9)\\
                & + (s_a+1)(s_a+2)(s_b+1) + (s_a+1)(s_b+1)(s_b+2) \\
                & + (s_a+1)(s_b+1)\Tr[M_\varepsilon^2])\ .
            \end{align}
It follows then that
\[
\Var(w) = \frac{(d_\varepsilon+s_a)^2(d_\varepsilon+s_b)^2}{k^4}\Var(| \langle u | M_\varepsilon | v \rangle |^4).
\]
Then,
    \begin{align}
        \frac{1}{D} \cdot \frac{(d_\varepsilon+s_a)^2(d_\varepsilon+s_b)^2}{k^4} \leq \frac{1}{k^4}.
    \end{align}
    Since $s_a, s_b \leq k$, it follows that ${s_a^is_b^j}\cdot {k^{-4}} = \Oh\left( k^{i+j-4} \right)$ and further that
    \begin{align*}
        \Var(w) \leq & \Oh\left( \frac{1}{k} + \frac{\Vert M_\varepsilon \Vert_2^2}{k^2} + \frac{\Vert M_\varepsilon \Vert_2^4}{k^4} \right) \\
        & + \sum_{s_a, s_b=0}^k P(s_a)Q(s_a)\frac{s_a^2s_b^2}{k^4}f_\varepsilon - \left( \sum_{s_a,s_b=0}^k P(s_a) Q(s_a) \frac{s_as_b}{k^2} f_\varepsilon \right)^2\notag \ .
    \end{align*}
    Observe that the last two terms in the expression are the variance of the random variable $f_\varepsilon\cdot{s_a s_b}\cdot {k^{-2}}$. Since $s_a$ and $s_b$ are binomially distributed, this evaluates to exactly
    \begin{align}
        \Var\left( f_\varepsilon \frac{s_a s_b}{k^2} \right) & = \frac{f_\varepsilon^2}{k^4}\left( \E[s_a^2]\E[s_b^2]-\E[s_a]^2\E[s_b]^2 \right)\\
        & = \frac{f_\varepsilon^2}{k^4}\left( (kp(1-p)+k^2p^2)(kq(1-q)+k^2q^2)-k^4p^2q^2 \right)=O(1/k)
    \end{align}
\end{proof}

Over the next few pages we will go over tedious calculations to prove Lemma~\ref{lem:bilinear_var}. As we mentioned earlier, it suffices to compute
$$
\frac{(d_\varepsilon+s_a)(d_\varepsilon+s_b)}{k^2} \Var(| \langle u | M_\varepsilon | v \rangle |^2).
$$
For a fixed $s_a$ and $s_b$, define
the notation
            \begin{align}
                q = \alpha\beta\cdot  \langle \phi_\varepsilon | M_\varepsilon | \psi_\varepsilon \rangle\ , \quad g = \sqrt{1-\alpha^2}\beta\cdot  \langle \phi' | M_\varepsilon | \psi_\varepsilon \rangle\\
                h = \alpha\sqrt{1-\beta^2}\cdot  \langle \phi_\varepsilon | M_\varepsilon | \psi' \rangle\ , \quad \ell = \sqrt{1-\alpha^2}\sqrt{1-\beta^2} \cdot \langle \phi' | M_\varepsilon | \psi' \rangle\ .
            \end{align}.
            We now have that
     \begin{align}
          \E\big[| \langle u | M_\varepsilon | v \rangle |^4 \vert s_a, s_b\big]  = & \E[ | q|^2 + | g |^4 + | h |^4 + | \ell |^4 + \\
                & 4(| g |^2 | h |^2 + | \ell |^2 | g |^2 + | \ell |^2 | h |^2 + | q|^2 | g|^2 + | q|^2| h |^2 + | q |^2 | \ell |^2 +\\
                &q\cdot \ell\cdot g^*\cdot h^* + g\cdot h\cdot q^*\cdot \ell^*) ]\ ,
            \end{align}
           We upper bound all these terms below (the first line of univariate terms in Lemma~\ref{lem:univariate}, the second line of quadratic terms in Lemma~\ref{lemma:bivariate}, the third line of degree-$4$ terms in Lemma~\ref{lem:quartic}). For notational simplicity, we will drop the conditioning on $s_a$ and $s_b$ in expectation values in the next few lemmas.

            \begin{lemma}[Bounds on the univariate terms]
            \label{lem:univariate}
            We bound the $q,g,h,\ell$ term defined above as follows.
                \begin{align}
                    \E[| q |^4] & = \frac{(s_a+2)(s_a+1)(s_b+2)(s_b+1)}{D}f_\varepsilon^2\\
                     \E[| g |^4] &\leq \frac{2(s_b+2)(s_b+1)}{D}\\
                      \E[| h |^4 ] &\leq \frac{2(s_a+2)(s_a+1)}{D}\\
                      \E[| \ell |^4] &\leq \frac{2(\Tr[M_\varepsilon^2]+1)^2 + 12}{D}\ .
                \end{align}
            \end{lemma}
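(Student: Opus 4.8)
The plan is to compute each of the four fourth-moment quantities by expanding in terms of the independent random variables supplied by Fact~\ref{fact:std_povm}, namely the ``radial'' parts $\alpha^2,\beta^2$ and the ``angular'' parts $\ket{\psi'},\ket{\phi'}$, which factorize because $\alpha$ is independent of $\ket{\psi'}$ and $\beta$ of $\ket{\phi'}$, and Alice's and Bob's registers are independent. For $\E[|q|^4]$ this is exact: $|q|^4 = \alpha^4\beta^4\, f_\varepsilon^2$, so $\E[|q|^4] = \E[\alpha^4]\,\E[\beta^4]\, f_\varepsilon^2$, and substituting the values $\E[\alpha^4] = (s_a+2)(s_a+1)/((d_\varepsilon+s_a+1)(d_\varepsilon+s_a))$ and likewise for $\beta$ gives exactly $(s_a+2)(s_a+1)(s_b+2)(s_b+1) f_\varepsilon^2 / D$ with $D$ as defined. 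That is the only equality; the remaining three are upper bounds.

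For $\E[|g|^4]$ write $|g|^4 = (1-\alpha^2)^2\beta^4\,|\langle\phi'|M_\varepsilon|\psi_\varepsilon\rangle|^4$; by independence this is $\E[(1-\alpha^2)^2]\,\E[\beta^4]\,\E_{\phi'}[|\langle\phi'|M_\varepsilon|\psi_\varepsilon\rangle|^4]$. The first two factors are read off from Fact~\ref{fact:std_povm} and their product is $\le 1/D$ up to the factor $(s_b+2)(s_b+1)$, since $d_\varepsilon(d_\varepsilon-1) \le (d_\varepsilon+s_a+1)(d_\varepsilon+s_a)$. For the angular factor I would use the fourth-moment (degree-$2$ in $\phi'^{\otimes2}$) identity: $\E_{\phi'}[|\langle\phi'|M_\varepsilon|\psi_\varepsilon\rangle|^4] = \Tr[(M_\varepsilon\ket{\psi_\varepsilon}\bra{\psi_\varepsilon}M_\varepsilon)^{\otimes 2}\E[\phi'^{\otimes2}]]$, and plug in $\E[\phi'^{\otimes2}] = (P_{\phi'}^{\otimes2}+\swap_{\phi'})/(d_\varepsilon(d_\varepsilon-1))$. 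The $P_{\phi'}^{\otimes2}$ term contributes $(\bra{\psi_\varepsilon}M_\varepsilon P_{\phi'}M_\varepsilon\ket{\psi_\varepsilon})^2 \le \|M_\varepsilon\|^4 \le 1$, and the $\swap_{\phi'}$ term contributes $\bra{\psi_\varepsilon}M_\varepsilon P_{\phi'} M_\varepsilon\ket{\psi_\varepsilon}^2$-type quantity $\le 1$ as well (using $\|M_\varepsilon\|\le1$ and that $P_{\phi'}$ is a projector); altogether at most $2/(d_\varepsilon(d_\varepsilon-1))$. Multiplying the three factors and cancelling $d_\varepsilon(d_\varepsilon-1)$ gives the claimed $2(s_b+2)(s_b+1)/D$. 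The $\E[|h|^4]$ bound is identical with the roles of Alice/Bob, $\alpha/\beta$, $\psi'/\phi_\varepsilon$ swapped.

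The term $\E[|\ell|^4]$ is the main obstacle, since it is the only one where \emph{both} states are ``off-axis'' so the angular average is a genuine fourth moment over two independent Haar-type vectors on $\mathrm{Im}\,P_\varepsilon$. Here $|\ell|^4 = (1-\alpha^2)^2(1-\beta^2)^2\,|\langle\phi'|M_\varepsilon|\psi'\rangle|^4$, and the radial factors give $\E[(1-\alpha^2)^2]\E[(1-\beta^2)^2] = d_\varepsilon^2(d_\varepsilon-1)^2/D$. For the angular part I would write $\E[|\langle\phi'|M_\varepsilon|\psi'\rangle|^4] = \Tr[(M_\varepsilon^{\otimes2})(\E[\psi'^{\otimes2}])(M_\varepsilon^{\otimes2})(\E[\phi'^{\otimes2}])^{\mathsf{swap\text{-}paired}}]$ — more precisely expand using $\E[\psi'^{\otimes2}]$ and $\E[\phi'^{\otimes2}]$ from Fact~\ref{fact:std_povm}, obtaining four terms indexed by whether each of $\psi',\phi'$ contributes its $P^{\otimes2}$ or its $\swap$ piece. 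Each of the four traces is of the form $\frac{1}{d_\varepsilon^2(d_\varepsilon-1)^2}\Tr[\,\cdots M_\varepsilon P_{\psi'} M_\varepsilon P_{\phi'}\cdots]$, and expanding the projectors $P_{\psi'} = \id - \psi_\varepsilon$, $P_{\phi'}=\id-\phi_\varepsilon$ produces terms bounded by $\Tr[M_\varepsilon^2]^2$, $\Tr[M_\varepsilon^2]\,\|M_\varepsilon\|^2$, and $\|M_\varepsilon\|^4$, i.e.\ everything is $\le (\Tr[M_\varepsilon^2]+1)^2$ up to a small constant (the ``$+12$'' absorbs the cross terms and the rank-one corrections after using $\|M_\varepsilon\|\le1$ and $\Tr[M_\varepsilon^2]\ge\|M_\varepsilon\|^2$). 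Multiplying by the radial factor $d_\varepsilon^2(d_\varepsilon-1)^2/D$ cancels the $d_\varepsilon^2(d_\varepsilon-1)^2$ and leaves $(2(\Tr[M_\varepsilon^2]+1)^2+12)/D$. The bookkeeping of which Haar-moment term produces which power of $\Tr[M_\varepsilon^2]$ versus $\|M_\varepsilon\|$, and making sure the constants collapse to exactly the stated form, is the delicate part; the rest is routine substitution from Fact~\ref{fact:std_povm}.
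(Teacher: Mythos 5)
Your proposal is correct and matches the paper's proof: both factor the radial moments ($\alpha,\beta$) from the angular ones ($\ket{\psi'},\ket{\phi'}$) by independence, read $\E[\alpha^4],\E[\beta^4],\E[(1-\alpha^2)^2],\E[(1-\beta^2)^2]$ off Fact~\ref{fact:std_povm}, and evaluate the angular fourth moments with the $\tfrac{1}{d_\varepsilon(d_\varepsilon-1)}(P^{\otimes 2}+\swap)$ second-moment formula so that the $d_\varepsilon(d_\varepsilon-1)$ factors cancel exactly (hence the inequality $d_\varepsilon(d_\varepsilon-1)\le(d_\varepsilon+s_a+1)(d_\varepsilon+s_a)$ you invoke for $|g|^4$ is unnecessary and, used on its own, would point the wrong way; your subsequent cancellation is the correct step). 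For $\E[|\ell|^4]$ the paper carries out the bookkeeping you defer: it expands the four traces using $P_{\psi'}=\id_{d_\varepsilon}-\psi_\varepsilon$ and $P_{\phi'}=\id_{d_\varepsilon}-\phi_\varepsilon$, drops non-positive terms, and applies $\Tr[M_\varepsilon^4]\le\Tr[M_\varepsilon^2]$ to arrive at $2(\Tr[M_\varepsilon^2]+1)^2+12$, exactly the structure your sketch anticipates.
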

            \begin{proof}
           \textbf{\emph{Bound on term 1.}} As $\langle \phi_\varepsilon | M_\varepsilon | \psi_\varepsilon \rangle$ is a fixed, using Fact~\ref{fact:std_povm} we have that
           $$
           \E[\alpha^4\beta^4] = \frac{(s_a+2)(s_a+1)(s_b+2)(s_b+1)}{D},
           $$
           where we used that  $\alpha,\beta$ are independent.

        \textbf{\emph{Bound on term 2.}} Again using Fact~\ref{fact:std_povm}, we have 
          \begin{align}
          \E[| g |^4] & = \E[(1-\alpha^2)^2]\cdot \E[\beta^4]\cdot \E[| \langle \phi' | M_\varepsilon| \psi_\varepsilon \rangle |^4] \\
          & = \frac{(s_b+2)(s_b+1)d_\varepsilon(d_\varepsilon-1)}{D}\E[| \langle \phi' | M_\varepsilon| \psi_\varepsilon \rangle |^4]\ .
        \end{align}
          Now we bound the expectation term as
          \begin{align}
          \E[| \langle \phi' |M_\varepsilon| \psi_\varepsilon \rangle |^4] & = \E[\Tr[(\phi' M_\varepsilon \psi_\varepsilon M_\varepsilon)^{\otimes 2}]] \\
          & = \frac{1}{d_\varepsilon(d_\varepsilon - 1)}\Tr[(P_{\phi'}^{\otimes 2}+\swap_{\phi'})M_\varepsilon^{\otimes 2}\psi_\varepsilon^{\otimes 2} M_\varepsilon^{\otimes 2}]\ .
          \end{align}
                Direct evaluation yields that each term in the trace above is the same, i.e.,
                \begin{align}
                \label{eq:trpphiMepspsiepsMeps}
                    \Tr[P_{\phi'}^{\otimes 2} M_\varepsilon^{\otimes 2}\psi_\varepsilon^{\otimes 2} M_\varepsilon^{\otimes 2}] & = (\Tr[M_\varepsilon^2 \psi_\varepsilon]-f_\varepsilon)^2\ ,\\
                    \Tr[\swap_{\phi'}M_\varepsilon^{\otimes 2}\psi_\varepsilon^{\otimes 2} M_\varepsilon^{\otimes 2}] & = (\Tr[M_\varepsilon^2 \psi_\varepsilon]-f_\varepsilon)^2\ .
                \end{align}
                Furthermore, one can upper bound both the terms as $(\Tr[M_\varepsilon \psi_\varepsilon] - f_\varepsilon)^2 \leq 1$ since these quantities are in $[0,1]$. Combining the expressions derived thus far completes the proof.
                
                \textbf{\emph{Bound on term 3.}}   This follows from steps similar to those of term 2 and we omit the details.
                
\textbf{ \emph{Bound on  term 4.} }     Using Fact~\ref{fact:std_povm}, we have that
                \begin{align}
                    \E[(1-\alpha^2)^2(1-\beta^2)^2] = \frac{d_\varepsilon^2(d_\varepsilon-1)^2}{D}\ .
                \end{align}
                 Again, using Fact~\ref{fact:std_povm}, we obtain
                \begin{align}
                    \E[| \langle \phi' | M_\varepsilon | \psi' \rangle |^4] & = \frac{1}{d_\varepsilon^2(d_\varepsilon-1)^2}\Tr[(P_{\psi'}^{\otimes 2}+\swap_{\psi'})M_\varepsilon^{\otimes 2} (P_{\phi'}^{\otimes 2}+\swap_{\phi'})M_\varepsilon^{\otimes 2}]\ .
                \end{align}
                We split this computation into four parts. The first being
                \begin{align}
               \Circled{1} \equiv \Tr[P_{\psi'}^{\otimes 2} M_\varepsilon^{\otimes 2} P_{\phi'}^{\otimes 2} M_\varepsilon^{\otimes 2}] & = \Tr[P_{\psi'}^{\otimes 2} M_\varepsilon^{\otimes 2}(\id_{d_\varepsilon}^{\otimes 2} - \id_{d_\varepsilon} \otimes \phi_\varepsilon - \phi_\varepsilon \otimes \id_{d_\varepsilon} + \phi_\varepsilon^{\otimes 2})M_\varepsilon^2]\ .
                    \end{align}
                 We bound each of these individually as follows: using Eq~\eqref{eq:trpphiMepspsiepsMeps}, we have that
                    \begin{align}
                        \Tr[P_{\psi'}^{\otimes 2} M_\varepsilon^{\otimes 2}\phi_\varepsilon^{\otimes 2} M_\varepsilon^{\otimes 2}] & = (\Tr[M_\varepsilon^2 \phi_\varepsilon]-f_\varepsilon)^2\ .
                    \end{align}It is not hard to see that the first is 
                    \begin{align}
                        \Tr[(P_{\psi'}M_\varepsilon^2)^{\otimes 2}] & = \Tr[P_{\psi'}M_\varepsilon^2]^2 = (\Tr[M_\varepsilon^2]-\Tr[M_\varepsilon^2\psi_\varepsilon])^2\ .
                    \end{align}
                    The remaining two terms are equal and brute force evaluation yields
                    \begin{align}
                        \Tr[P_{\psi'}^{\otimes 2}M_\varepsilon^{\otimes 2}P_{\phi'}^{\otimes 2}M_\varepsilon^{\otimes 2}] & = \Tr[M_\varepsilon^2(\id_{d_\varepsilon} - \psi_\varepsilon - \phi_\varepsilon)]^2 + f_\varepsilon(f_\varepsilon +2 \Tr[M_\varepsilon^2 (\id_{d_\varepsilon} - \psi_\varepsilon - \phi_\varepsilon)])\ .
                    \end{align}
                Simplifying, we have that
                \begin{align}
                    \Circled{1} & = \Tr[M_\varepsilon^2(\id_{d_\varepsilon} - \psi_\varepsilon - \phi_\varepsilon)]^2 + f_\varepsilon(f_\varepsilon+2\Tr[M_\varepsilon^2(\id_{d_\varepsilon} - \psi_\varepsilon - \phi_\varepsilon)])\ .
                \end{align}

            The second is
            \begin{align}
                \Circled{2} & \equiv \Tr[P_{\psi'}^{\otimes 2}M_\varepsilon^{\otimes 2} \swap_{\phi'}M_\varepsilon^{\otimes 2}] \\
                & = \Tr[(\id_{d_\varepsilon}^{\otimes 2} - \id_{d_\varepsilon} \otimes \psi_\varepsilon - \psi_\varepsilon \otimes \id_{d_\varepsilon} + \psi_\varepsilon^{\otimes 2})M_\varepsilon^{\otimes 2}\swap_{\phi'}M_\varepsilon^{\otimes 2}]\ .
            \end{align}
            As some of these terms will appear in later calculations, we document them explicitly here:
            \begin{align}
            \Tr[\swap_{\phi'}(M_\varepsilon^2)^{\otimes 2}]  = \Tr[\swap (P_{\phi'}M_\varepsilon^2P_{\phi'})^{\otimes2}] &= \Tr[P_{\phi'} M_\varepsilon^2 P_{\phi'} M_{\varepsilon}^2]\\&= \Tr[M_\varepsilon^4] -2\Tr[M_\varepsilon^4\phi_\varepsilon] + \Tr[M_\varepsilon^2 \phi_\varepsilon]^2\  .
                    \end{align}
                    Next, we have
                    \begin{align}
                        \Tr[(\id_{d_\varepsilon}\otimes \psi_{\varepsilon})M_\varepsilon^{\otimes 2}\swap_{\phi'}M_\varepsilon^{\otimes 2}] = &  \Tr[(\id_{d_\varepsilon} - \phi_{\varepsilon})M_\varepsilon^2(\id_{d_\varepsilon} - \phi_{\varepsilon}) M_\varepsilon \psi_\varepsilon M_\varepsilon]\\
                         = & \Tr[M_\varepsilon^4 \psi_\varepsilon] + \Tr[M_\varepsilon^2 \phi_\varepsilon]f_\varepsilon\notag \\
                         & - 2\text{Re}(\langle \phi_\varepsilon | M_\varepsilon^3 | \psi_\varepsilon \rangle \langle \psi_\varepsilon | M_\varepsilon | \phi_\varepsilon \rangle)\ .
                    \end{align}
                    The third term is the same as the second. Lastly, the fourth term follows from the previous case above, and is $(\Tr[M_\varepsilon^2 \psi_\varepsilon] - f_\varepsilon)^2$. Combining, we have that
                    \begin{align}
                        \Circled{2} = & \Tr[M_\varepsilon^4(\id_{d_\varepsilon} - 2\psi_\varepsilon - 2\phi_\varepsilon)]+\Tr[M_\varepsilon^2 \psi_\varepsilon](\Tr[M_\varepsilon^2 \psi_\varepsilon]-2f_\varepsilon)+\Tr[M_\varepsilon^2 \phi_\varepsilon](\Tr[M_\varepsilon^2 \phi_\varepsilon]-2f_\varepsilon)\notag\\
                        & + 2\text{Re}(\langle \phi_\varepsilon | M_\varepsilon^3 | \psi_\varepsilon \rangle \langle \psi_\varepsilon | M_\varepsilon | \phi_\varepsilon \rangle)\ .
                    \end{align}
                    Similar steps yield that 
                    \begin{align}
                        \Circled{3} \equiv \Tr[\swap_{\psi'}^{\otimes 2}M_\varepsilon^{\otimes 2} P_{\phi'}^{\otimes 2} M_\varepsilon^{\otimes 2}] = \Circled{2}\ .
                    \end{align}
                    Lastly, we have
                    \begin{align}
                        \Circled{4} = & \Tr[\swap_{\psi'}M_\varepsilon^{\otimes 2}\swap_{\phi'} M_\varepsilon^{\otimes 2}] = \Tr[P_{\psi'}^{\otimes 2}M_\varepsilon^{\otimes 2}P_{\phi'}^{\otimes 2}M_\varepsilon^{\otimes 2}] = \Circled{1}\ .
                    \end{align}

                    We now simply drop all non-positive terms and arrive at
                    \begin{align}
                        \Circled{1}+\Circled{2} + \Circled{3} + \Circled{4} \leq 2\Tr[M_\varepsilon^2](\Tr[M_\varepsilon^2] + 2f_\varepsilon) + 2\Tr[M_\varepsilon^4] + 2f_\varepsilon^2 + 12\ .
                    \end{align}
                    Using $\Tr[M_\varepsilon^4] \leq \Tr[M_\varepsilon^2]$, we have 
                    \begin{align}
                        \Circled{1}+\Circled{2} + \Circled{3} + \Circled{4} \leq 2(\Tr[M_\varepsilon^2]+1)^2 + 12 \ .
                    \end{align}

            \end{proof}

             \begin{lemma}[Bounds on bivariate terms]
             \label{lemma:bivariate}
                \begin{align}
                    \E[| g|^2 | h |^2] &\leq \frac{(s_a+1)(s_b+1)f_\varepsilon^2}{D}\\
                    \E[| \ell|^2 | h|^2] &\leq \frac{(s_a+1)}{D}(\Tr[M_\varepsilon^2] + 9)\\
                       \E[| \ell|^2 | g|^2] &\leq \frac{(s_b+1)}{D}(\Tr[M_\varepsilon^2] + 9)\\
                       \E[| q |^2 | g |^2] &\leq\frac{(s_b+2)(s_b+1)(s_a+1)}{D}\\
                        \E[| q |^2 | h |^2] &\leq\frac{(s_a+2)(s_a+1)(s_b+1)}{D}\\
                        \E[ | q |^2 | \ell |^2] &\leq  \frac{(s_a+1)(s_b+1)}{D} \Tr[M_\varepsilon^2]\ .
                \end{align}
            \end{lemma}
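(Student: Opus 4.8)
The plan is to exploit that, conditioned on $s_a$ and $s_b$, the four random quantities $\alpha$, $\beta$, $\ket{\psi'}$ and $\ket{\phi'}$ are mutually independent (Fact~\ref{fact:std_povm}), so that every bivariate expectation $\E[|X|^2|Y|^2]$ factorizes into a scalar moment in $(\alpha,\beta)$ times an expectation over the projected vectors. The $(\alpha,\beta)$-moments are read off directly from Fact~\ref{fact:std_povm}: one needs $\E[\alpha^4]$, $\E[(1-\alpha^2)^2]=\tfrac{d_\varepsilon(d_\varepsilon-1)}{(d_\varepsilon+s_a+1)(d_\varepsilon+s_a)}$, and $\E[\alpha^2(1-\alpha^2)]=\E[\alpha^2]-\E[\alpha^4]=\tfrac{(s_a+1)(d_\varepsilon-1)}{(d_\varepsilon+s_a+1)(d_\varepsilon+s_a)}$, together with the analogues in $\beta$. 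In every term the stray factors of $d_\varepsilon$ and $d_\varepsilon-1$ produced here are exactly cancelled by the $\tfrac{1}{d_\varepsilon-1}$ and $\tfrac{1}{d_\varepsilon(d_\varepsilon-1)}$ coming out of $\E[\psi']=\tfrac{P_{\psi'}}{d_\varepsilon-1}$ and $\E[\psi'^{\otimes2}]=\tfrac{1}{d_\varepsilon(d_\varepsilon-1)}(P_{\psi'}^{\otimes2}+\swap_{\psi'})$, leaving exactly a prefactor of the shape $\tfrac{(s_a+i)(s_b+j)}{D}$ in each of the six bounds.

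First I would dispose of the terms in which $\ket{\psi'}$ and $\ket{\phi'}$ each occur in only one of the two factors. For $\E[|q|^2|g|^2]$ and $\E[|q|^2|h|^2]$ the surviving $\ket{\phi'}$- (resp.\ $\ket{\psi'}$-) average is just the first moment $\E[|\langle\phi'|M_\varepsilon|\psi_\varepsilon\rangle|^2]=\tfrac{\Tr[M_\varepsilon^2\psi_\varepsilon]-f_\varepsilon}{d_\varepsilon-1}$, of exactly the type already computed in the proof of Lemma~\ref{lem:bilinear_expected}; since $\Tr[M_\varepsilon^2\psi_\varepsilon]-f_\varepsilon$ and $f_\varepsilon$ lie in $[0,1]$, the stated bounds follow after inserting $\E[\beta^4]$. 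For $\E[|g|^2|h|^2]$, $g$ depends only on $\ket{\phi'}$ and $h$ only on $\ket{\psi'}$, so the average is a product of two such first moments, hence at most $\tfrac{(s_a+1)(s_b+1)}{D}$. For $\E[|q|^2|\ell|^2]$ the scalar $|\langle\phi'|M_\varepsilon|\psi'\rangle|^2$ is averaged over \emph{both} projected vectors, giving $\tfrac{\Tr[M_\varepsilon^2]-\Tr[M_\varepsilon^2\psi_\varepsilon]-\Tr[M_\varepsilon^2\phi_\varepsilon]+f_\varepsilon}{(d_\varepsilon-1)^2}$ — precisely the quantity $\E[|\langle\psi'|M_\varepsilon|\phi'\rangle|^2]$ from that same proof — which is at most $\tfrac{\Tr[M_\varepsilon^2]}{(d_\varepsilon-1)^2}$ using $f_\varepsilon\le\Tr[M_\varepsilon^2\psi_\varepsilon]$ and $f_\varepsilon\le\Tr[M_\varepsilon^2\phi_\varepsilon]$ (Cauchy--Schwarz); multiplying by $f_\varepsilon\le1$ gives the $\Tr[M_\varepsilon^2]$-bound.

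The two genuinely delicate terms are $\E[|\ell|^2|g|^2]$ and $\E[|\ell|^2|h|^2]$, where the same projected vector ($\ket{\phi'}$ in the first, $\ket{\psi'}$ in the second) appears in \emph{both} factors, so this vector cannot be handled by the independence split. Here I would integrate first over the vector that appears only once: for $\E[|\ell|^2|g|^2]$, averaging $|\langle\phi'|M_\varepsilon|\psi'\rangle|^2$ over $\ket{\psi'}$ collapses it to $\tfrac{1}{d_\varepsilon-1}(\langle\phi'|M_\varepsilon^2|\phi'\rangle-|\langle\phi'|M_\varepsilon|\psi_\varepsilon\rangle|^2)$, leaving a single expectation quadratic in $\ket{\phi'}$; evaluating it with $\E[\phi'^{\otimes2}]$ produces the traces $\Tr[P_{\phi'}M_\varepsilon^2]\Tr[P_{\phi'}M_\varepsilon\psi_\varepsilon M_\varepsilon]$, $\Tr[P_{\phi'}M_\varepsilon^2P_{\phi'}M_\varepsilon\psi_\varepsilon M_\varepsilon]$, $\Tr[P_{\phi'}M_\varepsilon\psi_\varepsilon M_\varepsilon]^2$ and $\Tr[(P_{\phi'}M_\varepsilon\psi_\varepsilon M_\varepsilon)^2]$, which unfold into the scalars $\Tr[M_\varepsilon^2]$, $\Tr[M_\varepsilon^2\psi_\varepsilon]$, $\Tr[M_\varepsilon^2\phi_\varepsilon]$, $\Tr[M_\varepsilon^4\psi_\varepsilon]$, $f_\varepsilon$, and $\mathrm{Re}(\langle\phi_\varepsilon|M_\varepsilon^3|\psi_\varepsilon\rangle\langle\psi_\varepsilon|M_\varepsilon|\phi_\varepsilon\rangle)$, several of which coincide or simplify to $(\Tr[M_\varepsilon^2\psi_\varepsilon]-f_\varepsilon)^2$ because $M_\varepsilon\psi_\varepsilon M_\varepsilon$ has rank one. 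To finish, I would drop all non-positive contributions and bound the rest using $\|M_\varepsilon\|\le1$ — so every $\langle\cdot|M_\varepsilon^j|\cdot\rangle$ and $f_\varepsilon$ lies in $[0,1]$ and $\Tr[M_\varepsilon^4]\le\Tr[M_\varepsilon^2]$ — keeping only the single term that genuinely scales like $\Tr[M_\varepsilon^2]$ explicit; this lands at $\Tr[M_\varepsilon^2]+O(1)$, which after restoring the $\tfrac{(s_b+1)}{D}$ prefactor is the claimed bound (the bound for $\E[|\ell|^2|h|^2]$ is obtained identically with $\ket{\psi'}$ and $\ket{\phi'}$ interchanged). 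The main obstacle is exactly this ``shared-vector'' pair: one must take the two expectations in the right order so as not to mis-count the repeated projected vector, and then, among the dozen-or-so scalars produced by the trace expansion, carefully isolate the one that must be carried as $\Tr[M_\varepsilon^2]$ from those that may be absorbed into the additive constant.
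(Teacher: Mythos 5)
Your proposal is correct and follows essentially the same route as the paper: condition on $s_a,s_b$, use the independence of $\alpha,\beta,\ket{\psi'},\ket{\phi'}$ and the moments of Fact~\ref{fact:std_povm}, evaluate the remaining vector averages with the first/second moments of $\ket{\psi'},\ket{\phi'}$ on the orthogonal complements (taking the second moment in the shared vector for the $|\ell|^2|g|^2$ and $|\ell|^2|h|^2$ terms), and then bound all $O(1)$ scalars via $\Vert M_\varepsilon\Vert\leq 1$ while keeping $\Tr[M_\varepsilon^2]$ explicit. The only deviation is in the first term, where you bound $\E[|g|^2|h|^2]\leq (s_a+1)(s_b+1)/D$ rather than the stated $(s_a+1)(s_b+1)f_\varepsilon^2/D$; since the exact value is $\frac{(s_a+1)(s_b+1)}{D}(\Tr[M_\varepsilon^2\psi_\varepsilon]-f_\varepsilon)(\Tr[M_\varepsilon^2\phi_\varepsilon]-f_\varepsilon)$, which is not in general at most $f_\varepsilon^2$ times the prefactor, the stated $f_\varepsilon^2$ appears to be a typo and your version is exactly what the variance bound of Lemma~\ref{lem:bilinear_var} needs.
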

            \begin{proof}
               \textbf{\emph{Bound on term 1.}} \begin{align}
                    \E[| g|^2 | h |^2] & = \E[\alpha^2(1-\alpha^2)]\E[\beta^2(1-\beta^2)] \E[| \langle \phi' | M_\varepsilon | \psi_\varepsilon \rangle |^2 \langle \phi_\varepsilon | M_\varepsilon | \psi'\rangle | |^2]\ .
                \end{align}
                First, we note that
                \begin{align}
                    \E[\alpha^2(1-\alpha^2)] & = \E[\alpha^2] - \E[\alpha^4] = \frac{(d_\varepsilon - 1)(s_a+1)}{(d_\varepsilon + s_a + 1)(d_\varepsilon + s_a)}\ .
                \end{align}
                Similarly,
                \begin{align}
                    \E[\beta^2(1-\beta^2)] & = \frac{(d_\varepsilon - 1)(s_b+1)}{(d_\varepsilon + s_b + 1)(d_\varepsilon + s_b)}\ .
                \end{align}
                We are left with computing 
                \begin{align}
                    \E[| \langle \phi' | M_\varepsilon | \psi_\varepsilon \rangle |^2 \langle \phi_\varepsilon | M_\varepsilon | \psi'\rangle |^2] & = \E[\Tr[\phi' M_\varepsilon \psi_\varepsilon M_\varepsilon]\Tr[\phi_\varepsilon M_\varepsilon \psi' M_\varepsilon]]\\
                    & = \E[[\Tr[\phi' M_\varepsilon \psi_\varepsilon M_\varepsilon]] \E[\Tr[\phi_\varepsilon M_\varepsilon \psi' M_\varepsilon]]\\
                    & = \frac{1}{(d_\varepsilon-1)^2}(\Tr[M_\varepsilon^2 \psi_\varepsilon]-f_\varepsilon)(\Tr[M_\varepsilon^2 \phi_\varepsilon]-f_\varepsilon)\ ,
                \end{align}
                where the second equality follows from independence in the expectation value.
            
                \textbf{\emph{Bound on term 2.}}   \begin{align}
                    \E[| \ell |^2 | h |^2] & = \E[\alpha^2(1-\alpha^2)] \cdot \E[(1-\beta^2)^2]\cdot \E[\Tr[(\phi_\varepsilon \otimes \phi') M_\varepsilon^{\otimes 2}(\psi')^{\otimes 2} M_\varepsilon^{\otimes 2}]]\\
                    & = \frac{(s_a+1)d_\varepsilon (d_\varepsilon - 1)^2}{D}\E[\Tr[(\phi_\varepsilon \otimes \phi') M_\varepsilon^{\otimes 2}(\psi')^{\otimes 2} M_\varepsilon^{\otimes 2}]]\ .
                \end{align}
                Then, we compute
                \begin{align}
                    \E[\Tr[(\phi_\varepsilon \otimes \phi') M_\varepsilon^{\otimes 2}(\psi')^{\otimes 2} M_\varepsilon^{\otimes 2}]] & = \frac{1}{d_\varepsilon (d_\varepsilon-1)^2}\Tr[(\phi_\varepsilon \otimes P_{\phi'}) M_\varepsilon^{\otimes 2} \swap_{\psi'} M_\varepsilon^{\otimes 2}]\\
                    & = \frac{1}{d_\varepsilon (d_\varepsilon-1)^2} \Tr[P_{\psi'}M_\varepsilon \phi_\varepsilon M_\varepsilon P_{\psi'} M_\varepsilon P_{\phi'} M_\varepsilon]\ .
                \end{align}
                We will omit the remaining calculations and state that this evaluates to 
                \begin{align}
                    d_\varepsilon (d_\varepsilon-1)^2\E[\Tr[(\phi_\varepsilon \otimes \phi') M_\varepsilon^{\otimes 2}(\psi')^{\otimes 2} M_\varepsilon^{\otimes 2}]]  =  \Tr[M_\varepsilon^2](\Tr[M_\varepsilon^2 \phi_\varepsilon]  -f_\varepsilon)  \\  + \Tr[M_\varepsilon^2 \phi_\varepsilon](4f_\varepsilon - \Tr[M_\varepsilon^2 \psi_\varepsilon] - 2\Tr[M_\varepsilon^2 \phi_\varepsilon]) +2\Tr[M_\varepsilon^2 \psi_\varepsilon] + \Tr[M_\varepsilon^4 \phi_\varepsilon] - f_\varepsilon^2 \notag \\
                    - 2\text{Re}(\langle \phi_\varepsilon | M_\varepsilon^3 | \psi_\varepsilon \rangle\langle \psi_\varepsilon | M_\varepsilon | \phi_\varepsilon \rangle)\ . \notag
                \end{align}
                Dropping all of the non-positive terms proves the lemma.

                \textbf{\emph{Bound on term 3.}}   This follows from steps similar to those of term 2 and we omit the details.
                
                \textbf{\emph{Bound on term 4.}}    \begin{align}
                    \E[| q |^2 | g |^2] & = \E[\alpha^2(1-\alpha^2) \beta^4] f_\varepsilon \E[\Tr[\psi_\varepsilon M_\varepsilon \phi' M_\varepsilon]]\\
                    & = \frac{(s_b+2)(s_b+1)(s_a+1)}{D}f_\varepsilon \Tr[\psi_\varepsilon M_\varepsilon P_{\phi'} M_\varepsilon]\\
                    & = \frac{(s_b+2)(s_b+1)(s_a+1)}{D}f_\varepsilon(\Tr[M_\varepsilon^2 \psi_\varepsilon] - f_\varepsilon)\\
                    & \leq \frac{(s_b+2)(s_b+1)(s_a+1)}{D}\ .
                \end{align}
                
                \textbf{\emph{Bound on term 5.}}    This follows from steps similar to those of term 4 and we omit the details.
                
                \textbf{\emph{Bound on term 6.}} \begin{align}
                    \E[| q |^2 | \ell |^2] & = \E[\alpha^2(1-\alpha^2)\beta^2(1-\beta^2)]\cdot f_\varepsilon  \cdot \E[\Tr[\phi' M_\varepsilon \psi' M_\varepsilon]]\\
                    & = \frac{(s_a+1)(s_b+1)}{D}f_\varepsilon \Tr[P_{\phi'} M_\varepsilon P_{\psi'} M_\varepsilon]\\
                    & = \frac{(s_a+1)(s_b+1)}{D}f_\varepsilon \left( \Tr[M_\varepsilon^2(\id_{d_\varepsilon} - \psi_\varepsilon - \phi_\varepsilon)] +f_\varepsilon \right)\\
                    & \leq \frac{(s_a+1)(s_b+1)}{D} \Tr[M_\varepsilon^2]\ .
                \end{align}
         \end{proof}

            \begin{lemma} [Bounding the degree-$4$ terms]
            \label{lem:quartic}
                \begin{align}
                    \E[q \cdot \ell \cdot  g^* \cdot h^* + q^* \cdot \ell^* \cdot g \cdot h] \leq \frac{2(s_a+1)(s_b+1)}{D}\ .
                \end{align}
            \end{lemma}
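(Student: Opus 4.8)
The plan is to expand $q\cdot\ell\cdot g^{*}\cdot h^{*}$ as a product of a nonnegative scalar built only from $\alpha,\beta$ times a complex scalar built from the fixed vectors $\ket{\psi_\varepsilon},\ket{\phi_\varepsilon}$ and the two Haar-random pieces $\ket{\psi'},\ket{\phi'}$. Reading off the coefficients in the definitions of $q,g,h,\ell$, the $\alpha,\beta$-coefficient of $q\ell g^{*}h^{*}$ is $\alpha^{2}(1-\alpha^{2})\beta^{2}(1-\beta^{2})$, and the remaining factor is $c\cdot\langle\phi'|M_\varepsilon|\psi'\rangle\cdot\langle\psi_\varepsilon|M_\varepsilon|\phi'\rangle\cdot\langle\psi'|M_\varepsilon|\phi_\varepsilon\rangle$, where $c:=\langle\phi_\varepsilon|M_\varepsilon|\psi_\varepsilon\rangle$ (so $|c|^{2}=f_\varepsilon$). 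Since $\alpha,\beta,\ket{\psi'},\ket{\phi'}$ are mutually independent (using the remark after the decomposition of $\ket{u},\ket{v}$ that $\alpha\perp\ket{\psi'}$ and $\beta\perp\ket{\phi'}$, together with independence of Alice's and Bob's measurements), the expectation factors, and the $\alpha,\beta$-part evaluates via Fact~\ref{fact:std_povm} to $\E[\alpha^{2}(1-\alpha^{2})]\,\E[\beta^{2}(1-\beta^{2})]=\tfrac{(d_\varepsilon-1)^{2}(s_a+1)(s_b+1)}{D}$ --- exactly the evaluation already made inside the proof of Lemma~\ref{lemma:bivariate}.

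For the state-dependent factor, I would integrate out $\ket{\psi'}$ and then $\ket{\phi'}$ using $\E[\ketbra{\psi'}{\psi'}]=P_{\psi'}/(d_\varepsilon-1)$ and $\E[\ketbra{\phi'}{\phi'}]=P_{\phi'}/(d_\varepsilon-1)$ from Fact~\ref{fact:std_povm}. The vector $\ket{\psi'}$ appears exactly twice (once in $\langle\phi'|M_\varepsilon|\psi'\rangle$ and once in $\langle\psi'|M_\varepsilon|\phi_\varepsilon\rangle$), so averaging gives $\tfrac{1}{d_\varepsilon-1}\langle\phi'|M_\varepsilon P_{\psi'}M_\varepsilon|\phi_\varepsilon\rangle$; averaging the two remaining occurrences of $\ket{\phi'}$ then gives $\tfrac{1}{(d_\varepsilon-1)^{2}}\langle\psi_\varepsilon|M_\varepsilon P_{\phi'}M_\varepsilon P_{\psi'}M_\varepsilon|\phi_\varepsilon\rangle$. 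Expanding $P_{\psi'}=\id_{d_\varepsilon}-\psi_\varepsilon$ and $P_{\phi'}=\id_{d_\varepsilon}-\phi_\varepsilon$ and multiplying by $c$, the four resulting terms collapse (using $|c|^{2}=f_\varepsilon$) to
\[
X:=c\,\langle\psi_\varepsilon|M_\varepsilon^{3}|\phi_\varepsilon\rangle-f_\varepsilon\Tr[M_\varepsilon^{2}\psi_\varepsilon]-f_\varepsilon\Tr[M_\varepsilon^{2}\phi_\varepsilon]+f_\varepsilon^{2},
\]
divided by $(d_\varepsilon-1)^{2}$. Combining the two factors and adding the conjugate term $q^{*}\ell^{*}gh=\overline{q\ell g^{*}h^{*}}$ gives $\E[q\ell g^{*}h^{*}+q^{*}\ell^{*}gh]=\tfrac{2(s_a+1)(s_b+1)}{D}\,\text{Re}(X)$, so it remains only to prove the scalar inequality $\text{Re}(X)\le 1$.

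To prove $\text{Re}(X)\le 1$, set $a:=\Tr[M_\varepsilon^{2}\psi_\varepsilon]$ and $b:=\Tr[M_\varepsilon^{2}\phi_\varepsilon]$, both in $[0,1]$ since $\|M_\varepsilon\|\le 1$. Cauchy--Schwarz gives $f_\varepsilon=|\langle\phi_\varepsilon|M_\varepsilon|\psi_\varepsilon\rangle|^{2}\le\min\{a,b\}$, and, writing $\langle\psi_\varepsilon|M_\varepsilon^{3}|\phi_\varepsilon\rangle=\langle M_\varepsilon\psi_\varepsilon|M_\varepsilon|M_\varepsilon\phi_\varepsilon\rangle$, also $|\langle\psi_\varepsilon|M_\varepsilon^{3}|\phi_\varepsilon\rangle|\le\|M_\varepsilon\|\,\|M_\varepsilon\ket{\psi_\varepsilon}\|\,\|M_\varepsilon\ket{\phi_\varepsilon}\|\le\sqrt{ab}$. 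Hence $\text{Re}(X)\le|c|\sqrt{ab}-f_\varepsilon(a+b)+f_\varepsilon^{2}=\sqrt{f_\varepsilon ab}-f_\varepsilon(a+b)+f_\varepsilon^{2}$, and since $a+b\ge 2f_\varepsilon$ the last two terms sum to at most $-f_\varepsilon^{2}\le 0$; thus $\text{Re}(X)\le\sqrt{f_\varepsilon ab}\le 1$, finishing the lemma.

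I expect the $\ket{\psi'},\ket{\phi'}$ averaging together with the bookkeeping of the four terms after expanding the projectors to be the most calculation-heavy part, though it is routine once Fact~\ref{fact:std_povm} is in hand. The one genuinely delicate point --- the main obstacle --- is obtaining the numerator constant $2$ rather than $4$: the crude bounds $|c|\le 1$, $|\langle\psi_\varepsilon|M_\varepsilon^{3}|\phi_\varepsilon\rangle|\le 1$, $f_\varepsilon^{2}\le 1$ only yield $\text{Re}(X)\le 2$, so one really needs the sharper Cauchy--Schwarz estimate $|\langle\psi_\varepsilon|M_\varepsilon^{3}|\phi_\varepsilon\rangle|\le\sqrt{ab}$ combined with the observation that $f_\varepsilon\le\Tr[M_\varepsilon^{2}\psi_\varepsilon],\Tr[M_\varepsilon^{2}\phi_\varepsilon]$, so that the two negative terms dominate $+f_\varepsilon^{2}$.
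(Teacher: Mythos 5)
Your proof is correct, and up to the key reduction it follows the paper's skeleton exactly: condition on $s_a,s_b$, use independence and Fact~\ref{fact:std_povm} to pull out $\E[\alpha^2(1-\alpha^2)]\E[\beta^2(1-\beta^2)]=\tfrac{(d_\varepsilon-1)^2(s_a+1)(s_b+1)}{D}$, and integrate out $\ket{\psi'},\ket{\phi'}$ to arrive at $\tfrac{(s_a+1)(s_b+1)}{D}\,\langle \phi_\varepsilon|M_\varepsilon|\psi_\varepsilon\rangle\,\langle \psi_\varepsilon|M_\varepsilon P_{\phi'}M_\varepsilon P_{\psi'}M_\varepsilon|\phi_\varepsilon\rangle$. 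Where you diverge is the last step: the paper does not expand the projectors at all; it simply bounds $|\langle \phi_\varepsilon|M_\varepsilon|\psi_\varepsilon\rangle|\leq 1$ and $|\langle \psi_\varepsilon|M_\varepsilon P_{\phi'}M_\varepsilon P_{\psi'}M_\varepsilon|\phi_\varepsilon\rangle|\leq \Vert M_\varepsilon P_{\phi'}M_\varepsilon P_{\psi'}M_\varepsilon\Vert\leq 1$ (all five factors have operator norm at most $1$), so each of the two conjugate summands has modulus at most $(s_a+1)(s_b+1)/D$ and the constant $2$ is just the number of summands. Consequently the point you flag as the ``main obstacle''---that crude term-by-term bounds after expanding $P_{\psi'}=\id_{d_\varepsilon}-\psi_\varepsilon$, $P_{\phi'}=\id_{d_\varepsilon}-\phi_\varepsilon$ give only $\mathrm{Re}(X)\leq 2$, so the sharper Cauchy--Schwarz estimate $|\langle\psi_\varepsilon|M_\varepsilon^3|\phi_\varepsilon\rangle|\leq\sqrt{ab}$ together with $f_\varepsilon\leq\min\{a,b\}$ is ``really needed''---is an artifact of your expansion; the operator-norm route makes it unnecessary. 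Your finer analysis is valid (and in fact yields the slightly stronger conclusion $|X|\le 1$ term-by-term with explicit structure), but it is longer than needed; the paper's one-line norm bound buys brevity, while your version records the exact value of the expectation before any inequality is applied.
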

            \begin{proof}
                \begin{align}
                    | \E[q \cdot \ell \cdot g^* \cdot h^*] | & = |\E\big[\alpha^2(1-\alpha^2)\beta^2(1-\beta^2) \cdot \langle \phi_\varepsilon | M_\varepsilon | \psi_\varepsilon \rangle \cdot  \langle \psi_\varepsilon | M_\varepsilon | \phi' \rangle \langle \phi' | M_\varepsilon | \psi' \rangle \langle \psi' | M_\varepsilon | \phi_\varepsilon \rangle\big] |\\
                    & \leq \frac{(s_a+1)(s_b+1)}{D} | \langle \psi_\varepsilon | M_\varepsilon P_{\phi'} M_\varepsilon P_{\psi'} M_\varepsilon | \phi_\varepsilon \rangle |\\
                    & \leq \frac{(s_a+1)(s_b+1)}{D} \Vert M_\varepsilon P_{\phi'} M_\varepsilon P_{\psi'} M_\varepsilon \Vert \\
                    & \leq \frac{(s_a+1)(s_b+1)}{D}\ ,
                \end{align}
                where the last inequality follows from $\Vert M_\varepsilon \Vert = | P_{\phi'} \Vert = \Vert P_{\psi'} \Vert = 1$. The same steps hold for the complex conjugate.
            \end{proof}

\subsection{Lower Bound}
We now prove the claimed lower bounds in Theorem~\ref{thm:blf_ub}. The first, $\Omega(1/\varepsilon^2)$, follows from lemma 13 in~\cite{anshu2022distributed}. We restate their argument here slightly adapted to our setup.

\begin{lemma}
    Say there is an algorithm acting on $\rho^{\otimes k} \otimes \sigma^{\otimes k}$ that outputs an estimate of $\Tr[M\rho M \sigma]$ to accuracy $\varepsilon$ with high probability. Then, $k=\Omega(1/\varepsilon^2)$.
\end{lemma}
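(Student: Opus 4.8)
The plan is to exhibit a two-point family of instances on which $\varepsilon$-estimation of $\Tr[M\rho M\sigma]$ is as hard as distinguishing two nearly identical pure states, and then to invoke the standard Fuchs--van de Graaf bound, exactly as in Lemma~13 of~\cite{anshu2022distributed}. Assume $d\geq 2$ (for $d=1$ the quantity is a fixed scalar and there is nothing to prove). Using the normalization $\Vert M\Vert=1$, pick a (necessarily real) eigenvalue $\lambda$ of $M$ with $|\lambda|=1$, a unit eigenvector $\ket{e}$ with $M\ket{e}=\lambda\ket{e}$, and any unit vector $\ket{f}$ orthogonal to $\ket{e}$. For a parameter $p\in[0,1]$ set $\rho=\ketbra{e}{e}$ (independent of $p$) and $\sigma_p=\ketbra{\phi_p}{\phi_p}$ with $\ket{\phi_p}=\sqrt{p}\,\ket{e}+\sqrt{1-p}\,\ket{f}$.

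First I would verify that $\Tr[M\rho M\sigma_p]=p$. Since $M$ is Hermitian and $M\ket{e}=\lambda\ket{e}$ with $\lambda$ real, we have $\bra{e}M\ket{f}=\lambda\langle e|f\rangle=0$, hence $\bra{e}M\ket{\phi_p}=\sqrt{p}\,\lambda$ and $\Tr[M\rho M\sigma_p]=|\bra{e}M\ket{\phi_p}|^2=p\,|\lambda|^2=p$. Consequently, an algorithm that estimates $\Tr[M\rho M\sigma_p]$ to accuracy $\varepsilon$ with probability $\geq 2/3$ estimates $p$ to accuracy $\varepsilon$, and in particular, taking $p\in\{\tfrac12-\varepsilon,\ \tfrac12+\varepsilon\}$, it distinguishes these two hypotheses with probability $\geq 2/3$. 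Because the $\rho^{\otimes k}$ register is identical in both cases, this amounts to a (possibly entangled) measurement on $\sigma_p^{\otimes k}$ that distinguishes $\ket{\phi_{1/2-\varepsilon}}^{\otimes k}$ from $\ket{\phi_{1/2+\varepsilon}}^{\otimes k}$ with constant bias.

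Next I would bound that bias. A direct computation gives the single-copy overlap $\langle\phi_{1/2-\varepsilon}|\phi_{1/2+\varepsilon}\rangle=2\sqrt{1/4-\varepsilon^2}=\sqrt{1-4\varepsilon^2}$, so the $k$-copy fidelity is $(1-4\varepsilon^2)^{k/2}$. By Fuchs--van de Graaf, the trace distance between $\sigma_{1/2-\varepsilon}^{\otimes k}$ and $\sigma_{1/2+\varepsilon}^{\otimes k}$ is at most $\sqrt{1-(1-4\varepsilon^2)^k}$, and any measurement distinguishes them with bias at most half of this; for a constant distinguishing bias we therefore need $(1-4\varepsilon^2)^k\leq 1-\Omega(1)$, which after taking logarithms and using $\ln(1-4\varepsilon^2)=-\Theta(\varepsilon^2)$ yields $k=\Omega(1/\varepsilon^2)$.

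There is no serious obstacle here; the argument is a short reduction. The only steps that require care are (i) the vanishing of the cross term $\bra{e}M\ket{f}$, which is precisely where it matters that $\ket{e}$ is an exact eigenvector of the Hermitian $M$ rather than a near-eigenvector, and where the normalization $\Vert M\Vert=1$ makes $\Tr[M\rho M\sigma_p]$ equal $p$ exactly rather than only up to a constant factor; and (ii) the routine passage between ``success probability of the estimator'', ``distinguishing advantage of a measurement'', and ``trace distance of the $k$-copy states''.
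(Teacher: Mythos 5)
Your proposal is correct and follows essentially the same route as the paper's proof: a two-point construction with one party fixed on a top eigenvector of $M$ and the other holding $\sqrt{1/2\pm\varepsilon}$-superpositions, reduced to distinguishing $k$-copy states whose fidelity $(1-4\varepsilon^2)^{k}$ (in the squared convention) forces $k=\Omega(1/\varepsilon^2)$. The only cosmetic differences are that you take $\ket{f}$ to be an arbitrary vector orthogonal to the eigenvector (noting the cross term vanishes by Hermiticity) rather than a second eigenvector, and you swap which party holds the fixed state.
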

\begin{proof}
    Let $\ket{0}$ be such that $M \ket{0} = \pm 1 \ket{0}$, which exists since $\Vert M \Vert = 1 $. Let $\ket{1}$ be an eigenvector of $M$ orthogonal to $\ket{0}$. Consider the states
    \begin{align}
        \ket{\psi_0}  = \sqrt{\frac{1}{2}-\varepsilon} \ket{0} + \sqrt{\frac{1}{2}+\varepsilon}\ket{1}\ ,\quad 
        \ket{\psi_1}  = \sqrt{\frac{1}{2}+\varepsilon} \ket{0} + \sqrt{\frac{1}{2}-\varepsilon}\ket{1}\ .
    \end{align}
    Now, say that Alice is given $k$ copies of $\ket{\psi_0}$ or $\ket{\psi_1}$ and Bob is given $k$ copies of $\ket{0}$. We have that
    \begin{align}
        | \langle \psi_0 | M | 0 \rangle |^2  = \frac{1}{2}-\varepsilon, \quad 
        | \langle \psi_1 | M | 0 \rangle |^2 = \frac{1}{2}+\varepsilon \ .
    \end{align}
    Thus, if Alice and Bob can estimate $\vert \langle \psi | M | \phi \rangle |^2$ to accuracy $\varepsilon$, they can distinguish between these two states. However, the fidelity between these states is given by
    \begin{align}
        F(\psi_0^{\otimes k}, \psi_1^{\otimes k}) = (1-4\varepsilon^2)^k \geq 1-4k\varepsilon^2\ ,
    \end{align}
    which implies that $k = \Omega(1/\varepsilon^2)$.
\end{proof}
Before proving the second lower bound, we give some intuition. Say that $M \succeq 0$.If $\ket{\psi}$ and $\ket{\phi}$ are drawn independently from the Haar measure, then
\begin{align}
    \E[\Tr[M \psi M \phi]] & =\Tr[M^2]/d^2.
\end{align}
If they are identical, that is $\psi = \phi$ always, then instead the expected value is
\begin{align}
    \E[\Tr[M \psi M \psi]] & = \frac{1}{d(d-1)}\left( \Tr[M^2] + \Tr[M]^2 \right)\ .
\end{align}
The difference between these two is roughly $\Tr[M]^2 / d^2$. Thus, if Alice and Bob are able to estimate $\vert \langle \psi | M | \phi \rangle \vert^2$ to this order, then they can solve $\DIPE$ (that we defined as Problem~\ref{def:DIPE}). However, $\Tr[M]^2/d^2$ may be quite small. Restricted to the support of $M_\varepsilon$, we instead have that $\varepsilon^2/4 \leq \Tr[M_\varepsilon]^2/d_{\varepsilon}^2 \leq 1$.
We will show that if Alice and Bob can estimate $\vert \langle \phi \vert M_\varepsilon \vert \psi \rangle \vert^2$, then they can solve the following decision problem, which is known to require $k=\Omega(\sqrt{\dim \mathcal{H}}/\varepsilon)$ samples~\cite{anshu2022distributed} (when $\varepsilon \leq 0.01$).
\begin{definition}[Inner product estimation, $\varepsilon$ decision version]\label{def:ip_decision}
    Let $\varepsilon >0$. Alice and Bob are given $k$ copies each of pure states $\ket{\psi},\ket{\phi} \in \mathcal{H}$, for some finite dimensional Hilbert space $\mathcal{H}$. They are asked to decide, using $\LOCC$, which of the following two scenarios they are in:
    \begin{itemize}
        \item $\YES$ instance: 
        \begin{align}
            \ket{\psi} = \sqrt{1-\eps}e^{i\theta}\ket{0}+\sqrt{\eps}\ket{\chi}\ , \quad \ket{\phi} = \sqrt{1-\eps}e^{i\theta'}\ket{0}+\sqrt{\eps}\ket{\chi}\ ,
        \end{align}
        where $\ket{\chi}$ is drawn from the Haar measure on the orthogonal complement of $\ket{0}$ and $\theta$ and $\theta'$ are independently and uniformly drawn from $[0,2\pi)$.

        \item $\NO$ instance:
        \begin{align}
            \ket{\psi} = \sqrt{1-\eps}e^{i\theta}\ket{0}+\sqrt{\eps}\ket{\chi}\ , \quad \ket{\phi} = \sqrt{1-\eps}e^{i\theta'}\ket{0}+\sqrt{\eps}\ket{\varphi}\ ,
        \end{align}
        where $\ket{\chi}$ and $\ket{\varphi}$ are drawn independently from the Haar measure on the orthogonal complement of $\ket{0}$ and $\theta$ and $\theta'$ are independently and uniformly drawn from $[0,2\pi)$.
    \end{itemize}
\end{definition}

We will require the two following technical lemmas in proving the lower bound. Here Lipschitz constants are taken to be with respect to the Euclidean norm $\Vert \cdot \Vert_2$.

\begin{fact}[Levy's Lemma~\cite{brannan2021alice}]\label{lem:levy}
                If $f:\mathbb{S}^d \rightarrow \mathbb{R}$ is $\lambda$-Lipschitz, then
                \begin{align}
                    \Pr_u[\vert f(u) - \E[f(u)] \vert \geq t ] \leq 2 e^{-\frac{dt^2}{2\lambda^2}}\ ,
                \end{align}
                where $u$ is drawn from the Haar measure on $\mathbb{S}^d$.
\end{fact}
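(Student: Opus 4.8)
The inequality stated is the classical Lévy concentration bound on the sphere, so the plan is not to devise a new argument but to recall the cleanest standard one; I would run the proof through the logarithmic Sobolev inequality together with Herbst's exponential-moment trick, since this route reproduces exactly the constant $d/(2\lambda^2)$ appearing in the exponent. The first ingredient is that the round sphere $\mathbb{S}^d$, equipped with its normalized Haar measure $\mu$, satisfies a logarithmic Sobolev inequality $\mathrm{Ent}_\mu(g^2)\le \tfrac{2}{d}\,\E_\mu\!\left[\|\nabla g\|_2^2\right]$ with (sharp) constant $d$; this is a classical fact, obtainable for instance from the curvature--dimension condition $CD(d-1,d)$ of $\mathbb{S}^d$ via the Bakry--Émery criterion, or from the explicit computation of Müller--Weissler, and it is the only genuinely nontrivial input.

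Given the LSI with constant $\rho=d$, I would run Herbst's argument. Set $h:=f/\lambda$, which is $1$-Lipschitz and hence satisfies $\|\nabla h\|_2\le 1$ almost everywhere. Applying the LSI to $g=e^{sh/2}$ and writing $\varphi(s):=\log\E_\mu[e^{s(h-\E_\mu h)}]$ yields a first-order differential inequality for $\varphi$ whose integration gives $\E_\mu[e^{s(h-\E_\mu h)}]\le e^{s^2/(2\rho)}$ for all real $s$. Markov's inequality applied to $e^{s(h-\E_\mu h)}$, optimized over $s>0$, then gives $\Pr_u[h(u)-\E_\mu h\ge t]\le e^{-\rho t^2/2}$; repeating the argument for $-h$ and taking a union bound produces $\Pr_u[|h(u)-\E_\mu h|\ge t]\le 2e^{-dt^2/2}$, and undoing the rescaling (replacing $t$ by $t/\lambda$) gives precisely $\Pr_u[|f(u)-\E[f(u)]|\ge t]\le 2e^{-dt^2/(2\lambda^2)}$.

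An alternative route, possibly more congenial to a quantum-information reader, is to transfer Gaussian concentration: write $u=G/\|G\|_2$ with $G\sim\mathcal{N}(0,\Id_{d+1})$, invoke Gaussian concentration for Lipschitz functions (itself a consequence of the Gaussian LSI or of Gaussian isoperimetry), and combine it with the sharp concentration of $\|G\|_2$ about $\sqrt{d+1}$; this avoids the spherical LSI but requires controlling the Lipschitz constant of the normalization map near the origin, so it is slightly more delicate to push through with a clean constant. In either presentation there is no real obstacle: the content is entirely classical, and the lone deep ingredient --- the sharp spherical log-Sobolev constant, equivalently the Lévy--Gromov isoperimetric inequality on $\mathbb{S}^d$ --- is exactly the sort of statement one cites rather than reproves. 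Accordingly this is recorded as a \emph{Fact}, and the ``proof'' reduces to pointing at the reference and verifying that the dimension convention for $\mathbb{S}^d$ matches the stated exponent.
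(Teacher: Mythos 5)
Your proposal is correct: the paper does not prove this statement either, but records it as a citation-only Fact (Levy's lemma from the reference), which is exactly the conclusion you reach, and the derivation you sketch --- the sharp log-Sobolev constant $d$ on $\mathbb{S}^d$ fed into Herbst's argument, with the observation that Euclidean $\lambda$-Lipschitz implies geodesic $\lambda$-Lipschitz --- is a standard and valid route to the mean-centered bound with precisely the constant $2e^{-dt^2/(2\lambda^2)}$. The only point worth checking when matching conventions is the paper's later identification of the unit sphere of $\mathbb{C}^{d}$ with $\mathbb{S}^{2d}$ (off by one from $\mathbb{S}^{2d-1}$), which affects only the dimension plugged into the exponent, not the argument.
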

\noindent The functions we care about are $\Tr[M(\cdot)]$ defined on the sphere $\mathbb{C}^d$, which can be identified with~$\mathbb{S}^{2d}$.
\begin{lemma}
    $\Tr[M\chi]$ is $2$-Lipschitz in $\chi$ with respect to $\Vert \cdot \Vert_2$.
\end{lemma}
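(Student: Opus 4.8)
The plan is to deduce the bound from Hölder's inequality for Schatten norms together with the elementary relation between the trace distance of two pure states and the Euclidean distance of their state vectors. Let $|\chi_1\rangle,|\chi_2\rangle\in\mathbb{C}^d$ be unit vectors with associated density matrices $\chi_1=\ketbra{\chi_1}{\chi_1}$ and $\chi_2=\ketbra{\chi_2}{\chi_2}$. First I would write $|\Tr[M\chi_1]-\Tr[M\chi_2]|=|\Tr[M(\chi_1-\chi_2)]|\le\|M\|\,\|\chi_1-\chi_2\|_1\le\|\chi_1-\chi_2\|_1$, using $|\Tr[AB]|\le\|A\|\,\|B\|_1$ and the normalization $\|M\|\le1$.

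Next I would evaluate $\|\chi_1-\chi_2\|_1$ exactly. Being a difference of two rank-one projectors, $\chi_1-\chi_2$ is supported on $\mathrm{span}\{|\chi_1\rangle,|\chi_2\rangle\}$, is traceless, and obeys $\Tr[(\chi_1-\chi_2)^2]=2-2|\langle\chi_1|\chi_2\rangle|^2$; hence its two nonzero eigenvalues are $\pm\sqrt{1-|\langle\chi_1|\chi_2\rangle|^2}$ and $\|\chi_1-\chi_2\|_1=2\sqrt{1-|\langle\chi_1|\chi_2\rangle|^2}$. Finally, comparing with $\||\chi_1\rangle-|\chi_2\rangle\|_2^2=2-2\,\mathrm{Re}\langle\chi_1|\chi_2\rangle\ge 2(1-|\langle\chi_1|\chi_2\rangle|)\ge(1-|\langle\chi_1|\chi_2\rangle|)(1+|\langle\chi_1|\chi_2\rangle|)=1-|\langle\chi_1|\chi_2\rangle|^2$ gives $\sqrt{1-|\langle\chi_1|\chi_2\rangle|^2}\le\||\chi_1\rangle-|\chi_2\rangle\|_2$, so $\|\chi_1-\chi_2\|_1\le2\||\chi_1\rangle-|\chi_2\rangle\|_2$. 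Chaining the three estimates yields $|\Tr[M\chi_1]-\Tr[M\chi_2]|\le2\||\chi_1\rangle-|\chi_2\rangle\|_2$, which is the claim.

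This is a short computation with no real obstacle; the only two points needing a little care are the exact evaluation of $\|\chi_1-\chi_2\|_1$ and the insensitivity of the bound to the global phase of the state vectors, which is precisely what the step $\mathrm{Re}\langle\chi_1|\chi_2\rangle\le|\langle\chi_1|\chi_2\rangle|$ absorbs. As a consistency check, differentiating also gives the right constant: writing $|\chi_2\rangle=|\chi_1\rangle+|\delta\rangle$ one finds $\Tr[M\chi_2]-\Tr[M\chi_1]=2\,\mathrm{Re}\langle\chi_1|M|\delta\rangle+\langle\delta|M|\delta\rangle$, whose leading term is bounded by $2\|M|\chi_1\rangle\|\,\|\delta\|_2\le2\|\delta\|_2$, so $\Tr[M(\cdot)]$ has infinitesimal Lipschitz constant at most $2$ on the sphere, consistent with the stated bound.
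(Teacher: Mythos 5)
Your proof is correct, and it takes a genuinely different route from the paper's. The paper argues directly at the level of vectors with a one-line telescoping: writing $\langle\chi_1|M|\chi_1\rangle-\langle\chi_2|M|\chi_2\rangle=(\bra{\chi_1}-\bra{\chi_2})M\ket{\chi_1}+\bra{\chi_2}M(\ket{\chi_1}-\ket{\chi_2})$, then bounding each term by Cauchy--Schwarz and $\Vert M\Vert\le 1$, which yields the constant $2$ immediately. You instead pass through the density matrices: H\"older's inequality $|\Tr[M(\chi_1-\chi_2)]|\le\Vert M\Vert\,\Vert\chi_1-\chi_2\Vert_1$, the exact pure-state trace distance $\Vert\chi_1-\chi_2\Vert_1=2\sqrt{1-|\langle\chi_1|\chi_2\rangle|^2}$ (your spectral computation of the rank-two traceless difference is right), and the comparison $\sqrt{1-|\langle\chi_1|\chi_2\rangle|^2}\le\Vert\ket{\chi_1}-\ket{\chi_2}\Vert_2$, where the chain $2-2\,\mathrm{Re}\langle\chi_1|\chi_2\rangle\ge 2(1-|\langle\chi_1|\chi_2\rangle|)\ge(1-|\langle\chi_1|\chi_2\rangle|)(1+|\langle\chi_1|\chi_2\rangle|)$ is valid since $1+|\langle\chi_1|\chi_2\rangle|\le 2$. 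The paper's argument is shorter and entirely elementary, which is all that is needed here; your route uses a bit more machinery but buys a slightly stronger intermediate fact, namely that the bound factors through the (phase-invariant) trace distance, so in effect you show the map is $2$-Lipschitz even with respect to $\min_\theta\Vert e^{i\theta}\ket{\chi_1}-\ket{\chi_2}\Vert_2$, i.e.\ as a function on projective space. Both arguments produce the same constant $2$, which is what the application of Levy's lemma in the lower bound requires.
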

\begin{proof}
       \begin{align}
           \vert  \langle \psi \vert M \vert \psi \rangle - \langle \phi \vert M \vert \phi \rangle \vert & = \vert (\langle \psi \vert - \langle \phi \vert) M \vert \psi \rangle - \langle \phi \vert M (\vert \phi \rangle - \vert \psi \rangle) \vert\\
           & \leq \vert (\langle \psi \vert - \langle \phi \vert) M \vert \psi \rangle \vert + \vert \langle \phi \vert M (\vert \phi \rangle - \vert \psi \rangle) \vert\\
           & \leq 2\Vert \ket{\psi} - \ket{\phi}\Vert\ ,
       \end{align}
       where the final inequality follows from $\Vert M \Vert = 1$ and Cauchy-Schwarz.
\end{proof}

\begin{lemma}\label{lem:bil_lb}
    Let $c<1$ be a small constant. Suppose a protocol acting on $\rho^{\otimes k}\otimes \sigma^{\otimes k}$ via $\LOCC$ outputs an estimate of $\Tr[M\rho M\sigma]$ to accuracy $c\varepsilon$ with high probability. Then, $k = \Omega(\| M_\varepsilon \|_1 / \varepsilon \sqrt{d_\varepsilon})$.
\end{lemma}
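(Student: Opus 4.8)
\emph{Proof plan.}
Since a protocol for general $\rho,\sigma$ in particular works on pure inputs $\rho=\psi$, $\sigma=\phi$, where $\Tr[M\psi M\phi]=|\langle\psi|M|\phi\rangle|^2$, it suffices to lower bound the cost of estimating $|\langle\psi|M|\phi\rangle|^2$. The plan is to reduce the decision problem of Definition~\ref{def:ip_decision} with parameter $\varepsilon'$ — which requires $\Omega(\sqrt{\dim\Hi}/\varepsilon')$ copies when $\varepsilon'\le 0.01$ — to this task. Normalizing $\|M\|=1$, write $M_\varepsilon=M_+-M_-$ with $M_+,M_-\succeq 0$ supported on the eigenspaces of $M$ of eigenvalue $\ge\varepsilon/2$ and $\le-\varepsilon/2$ respectively, so $\|M_\varepsilon\|_1=\Tr[M_+]+\Tr[M_-]$; replacing $M$ by $-M$ if needed, assume $\Tr[M_+]\ge\|M_\varepsilon\|_1/2$. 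Let $W=\mathrm{supp}(M_+)$ (peeling off one dimension when $M\succeq 0$ so that some $M$-eigenvector of eigenvalue $\Theta(1)$ lies outside $W$), $d'=\dim W\le d_\varepsilon$, and $\bar\lambda:=\Tr[M|_W]/d'$; then $\Tr[M|_W]=\Omega(\|M_\varepsilon\|_1)$, every eigenvalue of $M|_W$ lies in $[\varepsilon/2,1]$, and $\bar\lambda\ge\varepsilon/2$. Finally pick $\ket0$ to be a largest-magnitude eigenvector of $M$ orthogonal to $W$, with eigenvalue $\mu_0$; by the above $|\mu_0|=\Theta(1)$.

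Pose Definition~\ref{def:ip_decision} on $\Hi':=\mathrm{span}\{\ket0\}\oplus W$ (so $\dim\Hi'=d'+1$), with the special vector equal to our $\ket0$ and the Haar-random vectors $\ket\chi$ drawn inside $W$, and parameter $\varepsilon'$ to be fixed. Because $\ket0$ is an $M$-eigenvector orthogonal to $W$, the cross terms $\langle0|M|\chi\rangle$ vanish, so for the $\YES$/$\NO$ states
\begin{align}
\langle\psi|M|\phi\rangle=(1-\varepsilon')e^{i(\theta'-\theta)}\mu_0+\varepsilon'Z,\qquad Z:=\langle\chi|M_+|\chi'\rangle,
\end{align}
where $\chi'=\chi$ in the $\YES$ case and $\chi'$ is an independent Haar vector in $W$ in the $\NO$ case. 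In the $\YES$ case $Z=\langle\chi|M_+|\chi\rangle\ge0$ is real, and by Levy's Lemma (Fact~\ref{lem:levy}) applied to the $2$-Lipschitz map $\chi\mapsto\Tr[M_+\chi]$ it concentrates around $\Tr[M_+]/d'=\bar\lambda$; in the $\NO$ case $\E|Z|^2=\Tr[M_+^2]/d'^2\le\bar\lambda/d'$, so $|Z|=O(\sqrt{\bar\lambda/d'})=o(\bar\lambda)$ with high probability once $d'$ is large. One may assume $d'$ is large: if $\|M_\varepsilon\|_1/(\varepsilon\sqrt{d_\varepsilon})\le 1/\varepsilon^2$ the claim already follows from the $\Omega(1/\varepsilon^2)$ bound proved above (applied with $c\varepsilon$ in place of $\varepsilon$), and otherwise $\|M_\varepsilon\|_1>\sqrt{d_\varepsilon}/\varepsilon$ forces $d_\varepsilon$, hence $d'$, to be $\Omega(1/\varepsilon^2)$.

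To get a decision procedure, run the assumed $c\varepsilon$-accurate estimator twice, once on $\psi^{\otimes k_0}\otimes\phi^{\otimes k_0}$ and once on $(R_0\psi R_0)^{\otimes k_0}\otimes\phi^{\otimes k_0}$, where $R_0=\Id-2\kb0$ flips the sign of the $\ket0$-component of Alice's copies. The outputs estimate $v:=|\langle\psi|M|\phi\rangle|^2$ and $v^{(R)}$, and
\begin{align}
v-v^{(R)}=4(1-\varepsilon')\varepsilon'\mu_0\,\mathrm{Re}\!\big(e^{i(\theta'-\theta)}\overline Z\big),
\end{align}
which in the $\YES$ case is $4(1-\varepsilon')\varepsilon'\mu_0 Z\cos(\theta'-\theta)$ and so has magnitude $\Omega(\varepsilon'|\mu_0|\bar\lambda)$ with constant probability over the instance's phases, while in the $\NO$ case it is always $O(\varepsilon'|\mu_0|\sqrt{\bar\lambda/d'})=o(\varepsilon'|\mu_0|\bar\lambda)$. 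Thresholding $|\hat v-\hat v^{(R)}|$ at order $\varepsilon'|\mu_0|\bar\lambda$ then decides $\YES$/$\NO$ with one-sided error, provided the estimator's error $2c\varepsilon$ on $\hat v-\hat v^{(R)}$ is a small fraction of $\varepsilon'|\mu_0|\bar\lambda$; this is arranged by setting $\varepsilon'=\Theta\!\big(c\varepsilon/(|\mu_0|\bar\lambda)\big)$, which is below $0.01$ because $|\mu_0|\bar\lambda=\Omega(\varepsilon)$ and $c$ is small. A constant number of repetitions on fresh copies (optionally re-randomizing the phases locally) boosts the success probability past $2/3$, so a $k_0$-copy estimator yields an $O(k_0)$-copy protocol for Definition~\ref{def:ip_decision} on $\Hi'$ with parameter $\varepsilon'$. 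The $\Omega(\sqrt{\dim\Hi'}/\varepsilon')=\Omega(\sqrt{d'}\,|\mu_0|\bar\lambda/(c\varepsilon))$ lower bound then gives $k_0=\Omega(\sqrt{d'}\,|\mu_0|\bar\lambda/\varepsilon)$, and since $\sqrt{d'}\,\bar\lambda=\Tr[M_+]/\sqrt{d'}=\Omega(\|M_\varepsilon\|_1/\sqrt{d_\varepsilon})$ and $|\mu_0|=\Theta(1)$ we conclude $k_0=\Omega(\|M_\varepsilon\|_1/(\varepsilon\sqrt{d_\varepsilon}))$.

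The main obstacle is precisely this middle step. The $\YES$ and $\NO$ instances of Definition~\ref{def:ip_decision} carry independent uniform phases on $\ket0$, which smear $|\langle\psi|M|\phi\rangle|^2$ across an interval far wider than the gap between its $\YES$- and $\NO$-means, so one cannot estimate-and-threshold directly; the reflection $R_0$ is what cancels those phases inside the square and isolates the \emph{linear}-in-$\varepsilon'$ cross term $\propto\varepsilon'\mu_0\,\mathrm{Re}(e^{i(\theta'-\theta)}\overline Z)$, which is exactly why the argument yields $1/\varepsilon$ rather than $1/\sqrt\varepsilon$. Keeping that signal above the estimator's accuracy $c\varepsilon$ is what forces the careful projection onto eigenspaces of a single sign (and, if needed, onto sufficiently large eigenvalues) so that $\bar\lambda$, and hence $\varepsilon'$ of order $\varepsilon$, are under control — the subtlety the text alludes to. The remaining items — the two Levy-type concentration estimates for $Z$, the one-sided-error and boosting bookkeeping, and checking that the degenerate small-$d_\varepsilon$ regime is absorbed by the $\Omega(1/\varepsilon^2)$ bound — are routine.
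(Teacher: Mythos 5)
Your proposal is correct, and its skeleton is the paper's: reduce from the decision problem of \cref{def:ip_decision} posed on $\mathbb{C}\ket{0}\oplus W$, where $\ket{0}$ is a unit-eigenvalue eigenvector and $W$ supports a definite block of $M_\varepsilon$ carrying a constant fraction of its weight, with the decision parameter set to $\Theta(\varepsilon/\bar\lambda)$ where $\bar\lambda=\Tr[M|_W]/\dim W$, so that the $\Omega(\sqrt{\dim W}/\varepsilon')$ bound becomes $\Omega(\Tr[M|_W]/\varepsilon\sqrt{\dim W})$; the degenerate regimes are absorbed by the $\Omega(1/\varepsilon^2)$ bound exactly as you say. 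Where you genuinely diverge is the decision gadget and the choice of block. The paper takes $\ket{0}$ with $M\ket{0}=\pm\ket{0}$, runs the estimator \emph{once}, and tests whether the estimate falls in a narrow window around the known baseline $(1-\delta)^2$: in the $\NO$ instance the value concentrates in that window, while in the $\YES$ instance the linear cross term $2\delta(1-\delta)\cos\vartheta\,\Tr[M\chi]$ pushes it out with constant probability over $\vartheta$. Your reflection trick (running the estimator on $\psi$ and on $R_0\psi R_0$ and thresholding the difference) cancels the baseline instead of computing it, at the cost of a factor-two in copies; it is a legitimate $\LOCC$ operation on the instance and buys you robustness to not knowing the baseline (you don't actually need it, since you arrange $|\mu_0|=1$ anyway), while the paper's window test avoids the second run but needs the $\NO$-case fluctuations to concentrate around a known value. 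Also, you select the definite block by trace rather than by $2$-norm; this actually matches the $\|M_\varepsilon\|_1$ statement of the lemma more directly (the paper's $2$-norm-based choice is tailored to the corollary's $\|M_\varepsilon\|_2$ conversion), and both choices work for both conclusions. Two minor slips, neither fatal: the inference ``$d_\varepsilon=\Omega(1/\varepsilon^2)$ hence $d'=\Omega(1/\varepsilon^2)$'' is not justified ($d'$ can be far smaller than $d_\varepsilon$), but what your concentration steps actually need is that $\Tr[M|_W]$ and $\Tr[M|_W]/\sqrt{d'}$ are large, and both follow from $\|M_\varepsilon\|_1>\sqrt{d_\varepsilon}/\varepsilon$ in the non-degenerate regime; and the boosting step is unnecessary (constant advantage suffices for the cited lower bound --- the paper itself settles for success $0.66$ on $\YES$ and $0.9$ on $\NO$), though your local phase re-randomization does make it valid if you want it.
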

\begin{proof}
   Let $\ket{0}$ be such that $M \ket{0} = \ket{0}$. Such a vector exists by considering either $M$ or $-M$, and we choose the appropriate sign without loss of generality. Let $W = \{\ket{\psi} \in \mathbb{C}^d \ \vert \ \langle 0 \vert \psi\rangle = 0\} \cap \text{Im}P_\varepsilon$ be the subspace spanned by all eigenvectors with eigenvalue of magnitude at least $\eps/2$ other than $\ket{0}$. Let $\tilde{M_\varepsilon} = M_\varepsilon\vert_W$ be the restriction of $M$ to this subspace. Without loss of generality we assume that $\tilde{M_\varepsilon}$ is a (semi)definite matrix. Indeed, let $\tilde{M_\varepsilon} = \tilde{M_\varepsilon}^+ - \tilde{M_\varepsilon}^-$ then $\Vert \tilde{M_\varepsilon} \Vert^2_2 = \Vert \tilde{M_\varepsilon}^+ \Vert_2^2 + \Vert \tilde{M_\varepsilon}^- \Vert_2^2$. Of course, this means that $\Vert \tilde{M_\varepsilon}^+\Vert_2 \geq \Vert \tilde{M_\varepsilon}\Vert_2/\sqrt{2}$ or $\Vert \tilde{M_\varepsilon}^-\Vert_2 \geq \Vert \tilde{M_\varepsilon} \Vert_2/\sqrt{2}$. Thus, we restrict ourselves further and consider the subspace, again labeled by $W$, in the support of the operator with a larger~norm.

   We will now show that estimating $\vert \langle \phi \vert M \vert \psi \rangle \vert^2$ suffices to solve problem~\ref{def:ip_decision} with high probability where $\mathcal{H} = \mathbb{C}\ket{0} \oplus W$. Let $\ell := \dim W$ and note that, by construction, $\Tr[\tilde{M_\eps}] = \Omega(\Vert M_\eps \Vert_2)$. We assume that $\varepsilon \geq \max\{56, 51/100c\}/\sqrt{\ell}$. This is no issue: if $\eps = o(1/\sqrt{d})$, then the lower bound $k= \Omega(1/\varepsilon^2)$ dominates since $\Vert \tilde{M_\eps} \Vert_2 /\sqrt{\eps} \leq \sqrt{\ell/\eps} = o(1/\varepsilon^2)$.   
   Set the precision parameter in problem~\ref{def:ip_decision} to be $\delta:= \varepsilon\ell/200\Tr[\tilde{M_\varepsilon}]$ and define $\vartheta := \theta - \theta'$. Since $\Tr[\tilde{M_\varepsilon}] \geq \varepsilon\ell/2$, $\delta \leq 0.01$ as required. Let $f$ be Alice and Bob's estimate of $\vert \langle \phi \vert M \vert \psi \rangle \vert^2$, and say that it is within $c\cdot\varepsilon$ of the true value with probability at least 0.99. Then, if $f$ lies in the range $[(1-\delta)^2 -3c\cdot\varepsilon, (1-\delta)^2 +c\cdot\varepsilon]$, they $\mathsf{REJECT}$. Otherwise, they $\mathsf{ACCEPT}$. We split the proof into the two cases.

 \textbf{ $\YES$ instance} : in this case
       \begin{align}
           \vert \langle \phi \vert M \vert \psi \rangle \vert^2 & =(1-\delta)^2 + \delta^2 \Tr[M\chi]^2 + 2\delta(1-\delta)\cos \vartheta \Tr[M\chi]\ .
       \end{align}
       Thus,
       \begin{align}
           &\Pr[f \in [(1-\delta)^2 - \frac{\delta}{8}, (1-\delta)^2 + x]]\\
           &\leq \Pr[ \delta^2 \Tr[M\chi]^2 + 2\delta(1-\delta)\cos \vartheta \Tr[M\chi] \in \left[-\frac{\delta}{8}-c\varepsilon, \frac{\delta}{8} + {c\varepsilon}\right] + 0.01 \ .
       \end{align}
Then, using fact~\ref{lem:levy}, it holds that
       \begin{align}
           \Pr\left[ \vert \Tr[M\chi] - \E_\chi[\Tr[M\chi]] \vert \leq \frac{7}{\sqrt{\ell}} \right] > 0.99\ .  
       \end{align}
We have that $\Tr[\tilde{M_\eps}]/4\ell \geq \varepsilon/8 \geq 7/\sqrt{\ell}$, where the last inequality follows from the assumption that $\eps$ is at least $56/\sqrt{\ell}$. The above bounds imply that we can take $\Tr[M\chi]^2 \leq 25\Tr[\tilde{M_\varepsilon}]^2/16\ell^2$. We are then interested in the probability
    \begin{align}
        \Pr\left[ 2\delta(1-\delta)\frac{3{\Tr[\tilde M_\varepsilon]}}{4\ell}\cos\vartheta  \in \left[-\frac{\delta}{8}-c\varepsilon -\delta^2\frac{25\Tr[\tilde{M_\varepsilon}]^2}{16\ell^2}, \frac{\delta}{8}+c\varepsilon - \delta^2\frac{\Tr[\tilde{M_\varepsilon}]^2}{\ell^2} \right] \right]\ .
    \end{align}
    Now, by construction $\delta \leq 0.01$ and thus $2(1-\delta)\geq 99/50$. It follows that
    \begin{align}
        2\delta(1-\delta)\frac{3{\Tr[\tilde M_\varepsilon]}}{4\ell} \geq \frac{297}{40000}\varepsilon\ .
    \end{align}
    Then, the above probability is upper bounded by
    \begin{align}
        \Pr\left[ \cos\vartheta \in \left[-\frac{1}{2}-\frac{25\varepsilon}{4752}, \frac{1}{2} + \frac{\varepsilon}{297}\right]\right]
        & \leq \Pr\left[ \cos\vartheta \in \left[-0.51, 0.51 \right] \right] \leq 0.34\ .
    \end{align}
    In total, they accept in the $\YES$ case with probability at least $0.66$.

\textbf{$\NO$ instance}: in this case
    \begin{align}
        \vert \langle \varphi \vert M \vert \chi \rangle \vert^2 & = (1-\delta)^2 + \delta^2\Tr[M\varphi M \chi] + 2\delta(1-\delta)\text{Re}\left(e^{i\vartheta}\langle \varphi \vert M \vert \chi \rangle \right)\ .
    \end{align}
    By symmetry, $\E[\text{Re}\left(e^{i\vartheta}\langle \varphi \vert M \vert \chi \rangle \right)]= 0$. Now, fixing $\ket{\varphi}$, $\vert \langle \varphi \vert M \vert \chi \rangle \vert$ is $1$-Lipschitz in $\ket{\chi}$. This implies that $\vert\langle \varphi \vert M \vert \chi \rangle\vert \leq 8/\sqrt{\ell}$ with probability at least $0.999$. This then implies that $\Tr[M\varphi M\chi] \leq 84/\ell$. Then, we have that
    \begin{align}
         \left\vert  \vert \langle \varphi \vert M \vert \chi \rangle \vert^2 - (1-\delta)^2 \right\vert \leq \frac{\delta}{\sqrt{\ell}}\left( \frac{84}{\sqrt{\ell}} + 16 \right)\ .
    \end{align}
    Further, $\delta/\sqrt{\ell} \leq 1/100\sqrt{\ell}$. We assume that $84/\sqrt{\ell} < 1$ (as otherwise the lower bound is constant anyways) and obtain
    \begin{align}
         \left\vert  \vert \langle \varphi \vert M \vert \chi \rangle \vert^2 - (1-\delta)^2 \right\vert \leq \frac{17}{100\sqrt{\ell}}\ .
    \end{align}
    We require that this is less then $c\varepsilon/3$. This requires that $\varepsilon = \Omega(1/\sqrt{\ell})$, but, as previously stated, we assume this as otherwise the other lower bound kicks in. The total probability for this instance to be rejected is then at least $0.9$.     Thus, Alice and Bob are able to solve problem~\ref{def:ip_decision} with high probability. Then, it must be that $k=\Omega(\sqrt{\ell}/\delta) = \Omega(\Tr[\tilde{M_\varepsilon}]/\varepsilon\sqrt{\ell})$.
\end{proof}

\begin{corollary}
     Let $c<1$ be a small constant. If there is a protocol acting on $\rho^{\otimes k}\otimes \sigma^{\otimes k}$ via $\LOCC$ that outputs an  $c\varepsilon$-accurate estimate of $\Tr[M\rho M\sigma]$  with high probability, then $k = \Omega(\| M_\varepsilon \|_2 / \sqrt{\varepsilon})$.
\end{corollary}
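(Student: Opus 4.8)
The plan is to obtain this directly from Lemma~\ref{lem:bil_lb}, which already yields $k = \Omega\big(\|M_\varepsilon\|_1/(\varepsilon\sqrt{d_\varepsilon})\big)$ under exactly the hypothesis of the corollary (a protocol acting on $\rho^{\otimes k}\otimes\sigma^{\otimes k}$ via $\LOCC$ that estimates $\Tr[M\rho M\sigma]$ to accuracy $c\varepsilon$ with high probability). The only work left is to convert the Schatten-$1$ norm and the dimension factor $d_\varepsilon$ into the Schatten-$2$ norm, using two elementary spectral inequalities about $M_\varepsilon$. Write the eigenvalues of $M_\varepsilon$ that survive the truncation as $\lambda_1,\dots,\lambda_{d_\varepsilon}$; by construction each satisfies $\varepsilon/2 \le |\lambda_i| \le \|M\| \le 1$. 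If $M_\varepsilon = 0$ the claimed bound is vacuous, so assume $d_\varepsilon \ge 1$.

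First I would control $d_\varepsilon$. Since every surviving eigenvalue has magnitude at least $\varepsilon/2$, we get $\|M_\varepsilon\|_1 = \sum_i |\lambda_i| \ge (\varepsilon/2)\,d_\varepsilon$, hence $\sqrt{d_\varepsilon} \le \sqrt{2\|M_\varepsilon\|_1/\varepsilon}$. Substituting this into Lemma~\ref{lem:bil_lb} gives
\[
k \;=\; \Omega\!\left( \frac{\|M_\varepsilon\|_1}{\varepsilon\,\sqrt{2\|M_\varepsilon\|_1/\varepsilon}} \right) \;=\; \Omega\!\left( \sqrt{\tfrac{\|M_\varepsilon\|_1}{\varepsilon}} \right).
\]
Second, because $|\lambda_i| \le 1$ we have $\lambda_i^2 \le |\lambda_i|$ for each $i$, and therefore $\|M_\varepsilon\|_2^2 = \sum_i \lambda_i^2 \le \sum_i |\lambda_i| = \|M_\varepsilon\|_1$. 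Plugging $\|M_\varepsilon\|_1 \ge \|M_\varepsilon\|_2^2$ into the previous display yields $k = \Omega\big(\|M_\varepsilon\|_2/\sqrt{\varepsilon}\big)$, which is the corollary; combined with the $\Omega(1/\varepsilon^2)$ bound it establishes the lower-bound half of Theorem~\ref{thm:blf_ub}.

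There is essentially no obstacle here, since all the real content lives in Lemma~\ref{lem:bil_lb} and the corollary is a two-line norm manipulation. The only points requiring a little care are: using the lower cutoff $\varepsilon/2$ to bound $\sqrt{d_\varepsilon}$ by $\sqrt{\|M_\varepsilon\|_1/\varepsilon}$ (rather than the weaker $\sqrt{\|M_\varepsilon\|_2/\varepsilon}$, which would not suffice), handling the degenerate case $M_\varepsilon=0$ separately, and noting that the standing assumption $\varepsilon \le 1/2$ makes $1/\sqrt{\varepsilon} > 1$, so the stated bound is genuinely at least as strong as the naive $\Omega(\|M_\varepsilon\|_2)$ one would get by a cruder substitution.
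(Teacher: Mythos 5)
Your proposal is correct and follows essentially the same route as the paper: the paper likewise deduces the corollary from Lemma~\ref{lem:bil_lb} using the two facts that the surviving eigenvalues have magnitude at least $\varepsilon/2$ (to control the dimension factor) and at most $1$ (giving $\Vert M_\varepsilon \Vert_1 \geq \Vert M_\varepsilon \Vert_2^2$). The only cosmetic difference is that the paper phrases the manipulation in terms of $\Tr[\tilde{M_\varepsilon}]$ and $\ell = \dim W$ from the lemma's proof rather than $\Vert M_\varepsilon \Vert_1$ and $d_\varepsilon$, which changes nothing of substance.
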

\begin{proof}
    Lemma~\ref{lem:bil_lb} yields a lower bound of $k=\Omega(\Tr[\tilde{M_\varepsilon}]/\varepsilon\sqrt{\ell})$. Using $\Tr[\tilde{M_\varepsilon}] \geq \Vert \tilde{M_\varepsilon} \Vert_2^2$ and $\Vert \tilde{M_\varepsilon} \Vert_2 \geq \sqrt{\varepsilon \ell/2}$, we arrive at our claimed lower bound of $k=\Omega(\Vert M_\varepsilon \Vert_2 / \sqrt{\varepsilon})$.
\end{proof}

\bibliographystyle{alpha}
\bibliography{ref}

\newcommand{\etalchar}[1]{$^{#1}$}
\begin{thebibliography}{BCWDW01}

\bibitem[AA24]{anshu2024survey}
Anurag Anshu and Srinivasan Arunachalam.
\newblock A survey on the complexity of learning quantum states.
\newblock {\em Nature Reviews Physics}, 6(1):59--69, 2024.

\bibitem[ALL22]{anshu2022distributed}
Anurag Anshu, Zeph Landau, and Yunchao Liu.
\newblock Distributed quantum inner product estimation.
\newblock In {\em Proceedings of the 54th Annual ACM SIGACT Symposium on Theory of Computing}, pages 44--51, 2022.

\bibitem[BBC{\etalchar{+}}93]{bennett1993teleporting}
Charles~H Bennett, Gilles Brassard, Claude Cr{\'e}peau, Richard Jozsa, Asher Peres, and William~K Wootters.
\newblock Teleporting an unknown quantum state via dual classical and einstein-podolsky-rosen channels.
\newblock {\em Physical review letters}, 70(13):1895, 1993.

\bibitem[BBD{\etalchar{+}}97]{barenco1997stabilization}
Adriano Barenco, Andre Berthiaume, David Deutsch, Artur Ekert, Richard Jozsa, and Chiara Macchiavello.
\newblock Stabilization of quantum computations by symmetrization.
\newblock {\em SIAM Journal on Computing}, 26(5):1541--1557, 1997.

\bibitem[BCWDW01]{buhrman2001quantum}
Harry Buhrman, Richard Cleve, John Watrous, and Ronald De~Wolf.
\newblock Quantum fingerprinting.
\newblock {\em Physical review letters}, 87(16):167902, 2001.

\bibitem[BDF{\etalchar{+}}99]{bennett1999quantum}
Charles~H Bennett, David~P DiVincenzo, Christopher~A Fuchs, Tal Mor, Eric Rains, Peter~W Shor, John~A Smolin, and William~K Wootters.
\newblock Quantum nonlocality without entanglement.
\newblock {\em Physical Review A}, 59(2):1070, 1999.

\bibitem[BOW19]{buadescu2019quantum}
Costin B{\u{a}}descu, Ryan O'Donnell, and John Wright.
\newblock Quantum state certification.
\newblock In {\em Proceedings of the 51st Annual ACM SIGACT Symposium on Theory of Computing}, pages 503--514, 2019.

\bibitem[Bra21]{brannan2021alice}
Michael Brannan.
\newblock Alice and bob meet banach: The interface of asymptotic geometric analysis and quantum information theory, 2021.

\bibitem[CLHL22]{chen2022tight}
Sitan Chen, Jerry Li, Brice Huang, and Allen Liu.
\newblock Tight bounds for quantum state certification with incoherent measurements.
\newblock In {\em 2022 IEEE 63rd Annual Symposium on Foundations of Computer Science (FOCS)}, pages 1205--1213. IEEE, 2022.

\bibitem[CLM{\etalchar{+}}14]{chitambar2014everything}
Eric Chitambar, Debbie Leung, Laura Man{\v{c}}inska, Maris Ozols, and Andreas Winter.
\newblock Everything you always wanted to know about locc (but were afraid to ask).
\newblock {\em Communications in Mathematical Physics}, 328:303--326, 2014.

\bibitem[Coh08]{cohen2008understanding}
Scott~M Cohen.
\newblock Understanding entanglement as resource: Locally distinguishing unextendible product bases.
\newblock {\em Physical Review A—Atomic, Molecular, and Optical Physics}, 77(1):012304, 2008.

\bibitem[CWLY23]{chen2023unitarity}
Kean Chen, Qisheng Wang, Peixun Long, and Mingsheng Ying.
\newblock Unitarity estimation for quantum channels.
\newblock {\em IEEE Transactions on Information Theory}, 69(8):5116--5134, 2023.

\bibitem[CWZ24]{chen2024local}
Kean Chen, Qisheng Wang, and Zhicheng Zhang.
\newblock Local test for unitarily invariant properties of bipartite quantum states.
\newblock {\em arXiv preprint arXiv:2404.04599}, 2024.

\bibitem[EVvB{\etalchar{+}}20]{elben2020cross}
Andreas Elben, Beno{\^\i}t Vermersch, Rick van Bijnen, Christian Kokail, Tiff Brydges, Christine Maier, Manoj~K Joshi, Rainer Blatt, Christian~F Roos, and Peter Zoller.
\newblock Cross-platform verification of intermediate scale quantum devices.
\newblock {\em Physical review letters}, 124(1):010504, 2020.

\bibitem[Fan04]{fan2004distinguishability}
Heng Fan.
\newblock Distinguishability and indistinguishability by local operations and classical communication.
\newblock {\em Physical Review Letters}, 92(17):177905, 2004.

\bibitem[GHYZ24]{jonasetal}
Weiyuan Gong, Jonas Haferkamp, Qi~Ye, and Zhihan Zhang.
\newblock On the sample complexity of purity and inner product estimation, 2024.
\newblock Private communication.

\bibitem[Har13]{harrow2013church}
Aram~W Harrow.
\newblock The church of the symmetric subspace.
\newblock {\em arXiv preprint arXiv:1308.6595}, 2013.

\bibitem[HIJ{\etalchar{+}}24]{hinsche2024efficient}
Marcel Hinsche, Marios Ioannou, Sofiene Jerbi, Lorenzo Leone, Jens Eisert, and Jose Carrasco.
\newblock Efficient distributed inner product estimation via pauli sampling.
\newblock {\em arXiv preprint arXiv:2405.06544}, 2024.

\bibitem[HP06]{fumio2006semicircle}
Fumio Hiai and Denes Petz.
\newblock {\em The Semicircle Law, Free Random Variables and Entropy (Mathematical Surveys \& Monographs)}.
\newblock American Mathematical Society, USA, 2006.

\bibitem[NC10]{nielsen2010quantum}
Michael~A Nielsen and Isaac~L Chuang.
\newblock {\em Quantum computation and quantum information}.
\newblock Cambridge university press, 2010.

\bibitem[OW15]{o2015quantum}
Ryan O'Donnell and John Wright.
\newblock Quantum spectrum testing.
\newblock In {\em Proceedings of the forty-seventh annual ACM symposium on Theory of computing}, pages 529--538, 2015.

\bibitem[Sen18]{sen2018quantum}
Pranab Sen.
\newblock A quantum johnson-lindenstrauss lemma via unitary t-designs.
\newblock {\em arXiv preprint arXiv:1807.08779}, 2018.

\bibitem[VT99]{vidal1999robustness}
Guifr{\'e} Vidal and Rolf Tarrach.
\newblock Robustness of entanglement.
\newblock {\em Physical Review A}, 59(1):141, 1999.

\bibitem[Wer01]{werner2001all}
Reinhard~F Werner.
\newblock All teleportation and dense coding schemes.
\newblock {\em Journal of Physics A: Mathematical and General}, 34(35):7081, 2001.

\bibitem[ZGC{\etalchar{+}}16]{zhang2016entanglement}
Zhi-Chao Zhang, Fei Gao, Tian-Qing Cao, Su-Juan Qin, and Qiao-Yan Wen.
\newblock Entanglement as a resource to distinguish orthogonal product states.
\newblock {\em Scientific reports}, 6(1):30493, 2016.

\end{thebibliography}

\end{document}